\documentclass[%
reprint,
superscriptaddress,
frontmatterverbose, 
preprintnumbers,
longbibliography,
amsmath,amssymb,
aps,
pra,
notitlepage,
]{revtex4-1}

%\AtBeginDocument{\renewcommand{\natexlab}[1]{#1}}% <--- the fix

\usepackage[utf8]{inputenc}
\usepackage[english]{babel}
\usepackage[T1]{fontenc}
\usepackage{amsmath}
\makeatletter
\newcommand{\pushright}[1]{\ifmeasuring@#1\else\omit\hfill$\displaystyle#1$\fi\ignorespaces}
\newcommand{\pushleft}[1]{\ifmeasuring@#1\else\omit$\displaystyle#1$\hfill\fi\ignorespaces}
\makeatother

\allowdisplaybreaks

\usepackage{tikz}
\usepackage{lipsum}
\usepackage{amssymb}
\usepackage{color}
\usepackage{mathrsfs}
\usepackage{amsbsy}
\usepackage{amsthm}
\usepackage{balance}
\usepackage{thmtools,thm-restate}
\usepackage{graphicx}
\usepackage{tikz}

\usepackage{comment}

\usepackage{bbm}
\usepackage{bm}
\usepackage{epsfig}
\usepackage{xfrac}
\usepackage{xcolor}
\usepackage{enumerate}
\usepackage[shortlabels]{enumitem}
\usepackage{graphicx}% Include figure files
\usepackage{dcolumn}% Align table columns on decimal local
\usepackage{bm}% bold math
\usepackage{hyperref}
\usepackage{subfigure}
\hypersetup{
colorlinks=true,
linkcolor=blue,
filecolor=blue,
citecolor=blue,  
urlcolor=blue,
}

\usepackage{hyperref}
\hypersetup{
    colorlinks=true,
    linkcolor=blue,
    filecolor=magenta,      
    urlcolor=cyan,
    pdftitle={Overleaf Example},
    pdfpagemode=FullScreen,
    }

\urlstyle{same}

\newtheorem{theorem}{Theorem}
\newtheorem{lemma}{Lemma}

% \usepackage{apptools}
% \AtAppendix{\renewtheorem{theorem}{Theorem}
% \renewtheorem{lemma}{Lemma}
% \renewtheorem{proposition}[theorem]{Proposition}
% \renewtheorem{corollary}[theorem]{Corollary}}

% \usepackage{chngcntr}
% \usepackage{apptools}
% \AtAppendix{\counterwithin{theorem}{section}
% \counterwithin{lemma}{section}}

% \theoremstyle{definition}
% \newtheorem{defn}{Definition}
% \newtheorem{conj}{Conjecture}
% \newtheorem{exmp}{Example}
% \newtheorem{claim}{Claim}
% \newtheorem{result}{Result}
%\theoremstyle{remark}
% \newtheorem*{rem}{Remark}
% \newtheorem*{note}{Note}
% \newcommand{\ket}[1]{|#1\rangle}
% \newcommand{\bra}[1]{\langle #1|}
% \newcommand{\bracket}[2]{\langle #1|#2\rangle}

%\newcommand{\mH}{\mathcal{H}}

\newcommand{\tPi}{\tilde{\Pi}}
%\newcommand{\id}{\mathrm{id}}

% \newcommand{\txs}{{\mathscr{S}}}
% \newcommand{\txr}{{\mathscr{R}}}

% \newcommand{\stab}{\mathrm{STAB}}
% \newcommand{\dmax}{D_{\mathrm{max}}}
% \newcommand{\dmin}{D_{\mathrm{min}}}
% \newcommand{\mmax}{M_{\mathrm{max}}}
% \newcommand{\mmin}{M_{\mathrm{min}}}
% \newcommand{\ba}{\begin{eqnarray}}
% \newcommand{\ea}{\end{eqnarray}}

%\newcommand{\tr}{\sigmaperatorname{tr}}
%\newcommand{\mD}{\mathfrak{D}}

% ******************************************************
% ******************* PHYSICS HEADER *******************
% ******************************************************
\usepackage{braket}
 % rename builtin command \v{} to \vaccent{}
%\renewcommand{\v}[1]{\ensuremath{\mathbf{#1}}} % for vectors
%\renewcommand{\v}[1]{\ensuremath{\mbox{\boldmath$ #1 $}}} % for vectors
%\newcommand{\gv}[1]{\ensuremath{\mbox{\boldmath$ #1 $}}}% for vectors of Greek letters
%%%%%%%%%%%%%%%%%%%%%%%%%%%%%%%%%%%
\renewcommand{\v}[1]{\ensuremath{\mathbf{#1}}} % for vectors
%\renewcommand{\v}[1]{\ensuremath{\mbox{\boldmath$ #1 $}}} % for vectors
% for vectors of Greek letters
%%%%%%%%%%%%%%%%%%%%%%%%%%%%%%%%%%%
 % for unit vector
\newcommand{\abs}[1]{\left| #1 \right|} % for absolute value
\newcommand{\norm}[1]{\left\| #1 \right\|} % for norm
 % for average
 % rename builtin command \d{} to \underdot{}
 % for derivatives
 % for double derivatives
%\newcommand{\pd}[2]{\frac{\partial #2}{\partial #1}}% for partial derivatives
% for partial derivatives
% for double partial derivatives
% for different double partial derivatives
 % for thermodynamic partial derivatives
 % for gradient
 % rename builtin command \div to \divsymb
 % for divergence
 % for curl
 % for Laplacian
 % for gradient
%\let\divsymb=\div % rename builtin command \div to \divsymb
 % for divergence
 % for curl
 % for Laplacian
\newcommand{\trace}{\mathrm{Tr}}
 
\newcommand{\sg}{\mathrm{sg}} 
\newcommand{\prob}{\mathrm{Prob}}

\newcommand{\tl}{{\textsc{l}}}
\newcommand{\teta}{{\tilde \eta}}
\newcommand{\tepsilon}{{\tilde \epsilon}}
\newcommand{\sql}{{\textsc{sql}}}

\newcommand{\mS}{{\mathcal{S}}}
\newcommand{\mC}{{\mathcal{C}}}
\newcommand{\mH}{{\mathcal{H}}}

\newcommand{\mR}{{\mathcal{R}}}
\newcommand{\mP}{{\mathcal{P}}}
\newcommand{\mE}{{\mathcal{E}}}

\newcommand{\id}{{\mathbbm{1}}}
\renewcommand{\Re}{{\mathrm{Re}}}
\renewcommand{\Im}{{\mathrm{Im}}}

\newcommand{\frakh}{{\mathfrak{h}}}

\newcommand{\fraki}{{\mathfrak{i}}}
\newcommand{\frakI}{{\mathfrak{I}}}

\newcommand{\frakL}{{\mathfrak{L}}}
\newcommand{\vh}{{\v{h}}}

\newcommand{\bR}{{\mathbb{R}}}
\newcommand{\bC}{{\mathbb{C}}}
\newcommand{\bE}{{\mathbb{E}}}
\renewcommand{\epsilon}{\varepsilon}
\newcommand{\appropto}{\mathrel{\vcenter{
  \offinterlineskip\halign{\hfil$##$\cr
    \propto\cr\noalign{\kern2pt}\sim\cr\noalign{\kern-2pt}}}}}
\definecolor{darkbyzantium}{rgb}{0.36, 0.22, 0.33}
\definecolor{cadmiumgreen}{rgb}{0.0, 0.42, 0.24}
\newcommand{\sisinew}[1]{\textcolor{black}{#1}}
\newcommand{\sisinewlong}[1]{\color{black}{#1}\color{black}}
\newcommand{\sisi}[1]{\textcolor{black}{#1}}
\newcommand{\sisilong}[1]{\color{black}{#1}\color{black}}

\let\baraccent=\= % rename builtin command \= to \baraccent
\renewcommand{\=}[1]{\stackrel{#1}{=}} % for putting numbers above =
% ******************************************************
% *************** END OF PHYSICS HEADER ****************
% ******************************************************

\newcommand{\thmref}[1]{\hyperref[#1]{Theorem~\ref{#1}}}
\newcommand{\lemmaref}[1]{\hyperref[#1]{Lemma~\ref{#1}}}
\newcommand{\corollaryref}[1]{\hyperref[#1]{Corollary~\ref{#1}}}
\newcommand{\figref}[1]{\hyperref[#1]{Fig.~\ref{#1}}}
\newcommand{\figaref}[1]{\hyperref[#1]{Fig.~\ref{#1}(a)}}
\newcommand{\figbref}[1]{\hyperref[#1]{Fig.~\ref{#1}(b)}}
\newcommand{\figcref}[1]{\hyperref[#1]{Fig.~\ref{#1}(c)}}
\newcommand{\figdref}[1]{\hyperref[#1]{Fig.~\ref{#1}(d)}}
\newcommand{\figeref}[1]{\hyperref[#1]{Fig.~\ref{#1}(e)}}
\newcommand{\figfref}[1]{\hyperref[#1]{Fig.~\ref{#1}(f)}}
\renewcommand{\eqref}[1]{\hyperref[#1]{Eq.~(\ref{#1})}}
\newcommand{\secref}[1]{\hyperref[#1]{Sec.~\ref{#1}}}
\newcommand{\eqsref}[2]{\hyperref[#1]{Eqs.~(\ref{#1})-(\ref{#2})}}
\newcommand{\appref}[1]{\hyperref[#1]{Appx.~\ref{#1}}}

\begin{document}

\title{Achieving metrological limits using ancilla-free quantum error-correcting codes}

\author{Sisi Zhou}
\thanks{These two authors contributed equally. \\
\href{mailto:sisi.zhou26@gmail.com}{sisi.zhou26@gmail.com} (S.Z.);\\
\href{mailto:argyris.giannisismanes@yale.edu}{argyris.giannisismanes@yale.edu} (A.G.M.).}
\affiliation{Institute for Quantum Information and Matter, California Institute of Technology, Pasadena, CA 91125, USA}
\affiliation{Perimeter Institute for Theoretical Physics, Waterloo, Ontario N2L 2Y5, Canada}

\author{Argyris Giannisis Manes}
\thanks{These two authors contributed equally. \\
\href{mailto:sisi.zhou26@gmail.com}{sisi.zhou26@gmail.com} (S.Z.);\\
\href{mailto:argyris.giannisismanes@yale.edu}{argyris.giannisismanes@yale.edu} (A.G.M.).}
\affiliation{Department of Physics, Yale University, New Haven, Connecticut 06511, USA}

\author{Liang Jiang}\email{liang.jiang@uchicago.edu}
\affiliation{Pritzker School of Molecular Engineering, The University of Chicago, Chicago, IL 60637, USA}

\date{\today}

\begin{abstract}
Quantum error correction (QEC) is theoretically capable of achieving the ultimate estimation limits in noisy quantum metrology. However, existing quantum error-correcting codes designed for noisy quantum metrology generally exploit entanglement between one probe and one noiseless ancilla of the same dimension, and the requirement of noiseless ancillas is one of the major obstacles to implementing the QEC metrological protocol in practice. Here we successfully lift this requirement by explicitly constructing two types of multi-probe quantum error-correcting codes, where the first one utilizes a negligible amount of ancillas and the second one is ancilla-free. Specifically, we consider Hamiltonian estimation under Markovian noise and show that (i)~when the Heisenberg limit (HL) is achievable, our new codes can achieve the HL and its optimal asymptotic coefficient; (ii)~when only the standard quantum limit (SQL) is achievable (even with arbitrary adaptive quantum strategies), the optimal asymptotic coefficient of the SQL is also achievable by our new codes under slight modifications. 
\end{abstract}

\maketitle

\section{Introduction}

Quantum metrology, the science of estimation and measurement of quantum systems, is a central topic of modern quantum technologies~\cite{giovannetti2011advances,degen2017quantum,pezze2018quantum,pirandola2018advances}. Quantum sensors are expected to have wide applications across various fields, such as gravitational detection~\cite{caves1981quantum,yurke19862,ligo2011gravitational,ligo2013enhanced,demkowicz2013fundamental}, atomic clocks~\cite{rosenband2008frequency,appel2009mesoscopic,kaubruegger2021quantum,marciniak2022optimal}, magnetometry~\cite{wineland1992spin,leibfried2004toward,dutt2007quantum,hanson2008coherent}, super-resolution imaging~\cite{tsang2016quantum}, etc. 
With the accelerating growth of experimental methods in manipulating and probing quantum systems, environmental noise becomes a major bottleneck in advancing the sensitivity limit of quantum sensors as they scale up. Different tools, such as quantum error correction (QEC)~\cite{kessler2014quantum,dur2014improved,arrad2014increasing,ozeri2013heisenberg,unden2016quantum,zhou2018achieving,wang2022quantum,ouyang2022quantum}, dynamical decoupling~\cite{viola1999dynamical}, optimization of quantum controls~\cite{liu2017quantum}, error mitigation~\cite{yamamoto2021error}, environment monitoring~\cite{gefen2016parameter,albarelli2019restoring}, etc., have been explored to battle the effect of noise in quantum metrology. 

QEC, a standard tool to reduce noise in quantum computing~\cite{shor1995scheme,gottesman2010introduction,lidar2013quantum}, is also capable of achieving the ultimate estimation limits in noisy quantum metrology~\cite{zhou2021asymptotic,kurdzialek2022using}. In quantum metrology, the Heisenberg limit (HL) states that the optimal scaling of the estimation error is $1/N$ where $N$ is the number of probes (or $1/t$, where $t$ is the probing time)~\cite{giovannetti2006quantum}. However, in presence of noise, the estimation error will usually inevitably drop to the standard quantum limit (SQL) ($\propto 1/\sqrt{N}$ or $1/\sqrt{t}$), as indicated by the classical central limit theorem, in which case quantum enhancement is at most constant-factor~\cite{escher2011general,demkowicz2012elusive}. In the setting of Hamiltonian estimation in open quantum systems described by Lindblad master equations, the HL for estimating the Hamiltonian parameter is achievable if and only if the Hamiltonian is not in the noise subspace (called the ``Hamiltonian-not-in-Lindblad-span'' (HNLS) condition)~\cite{demkowicz2017adaptive,zhou2018achieving}, and only the SQL is achievable when the HNLS condition is violated. In both cases, whether the HL is achievable or not, there exist corresponding QEC strategies that achieve the optimal metrological limits and asymptotic coefficients, which no other adaptive quantum strategies can surpass~\cite{zhou2018achieving,zhou2019optimal,demkowicz2014using,wan2022bounds}. 

However, there are several barriers towards practical implementation of the QEC metrological strategies. 
One unique demand of QEC in quantum metrology, that has no equivalent in quantum communication or computing, is to balance the trade-off between reducing the noise and enhancing the signal, posing fundamental difficulties in designing QEC codes for noisy quantum metrology. 
So far, QEC has only been shown to achieve the metrological limits under the assumption that QEC operations can be applied fast and accurately~\cite{sekatski2017quantum,shettell2021practical,rojkov2022bias}. Furthermore, the corresponding QEC codes are generally hybrid, consisting of a probing system where the signal and noise accumulates over time, and a noiseless ancillary system of the same dimension~\cite{zhou2018achieving}. In particular, the ancillary system can be a costly resource in experiments, composed of either exceedingly stable qubits~\cite{kessler2014quantum,unden2016quantum} or noisy qubits under \sisi{code concatenation}~\cite{zhou2018achieving}.

Growing efforts have been devoted to removing the requirement of noiseless ancillas~\cite{layden2019ancilla,peng2020achieving}. 
Some results were obtained in several practically relevant sensing scenarios.
When the HNLS condition is satisfied and the HL is achievable, for qubit probes where the signal and noise acts individually on each probe, an optimal ancilla-free repetition code can be constructed~\cite{dur2014improved,arrad2014increasing,peng2020achieving}. Additionally, when the signal and the noise commute, e.g., a Pauli-Z signal acting on multiple probes under correlated dephasing noise, optimal ancilla-free codes were also proven to exist~\cite{layden2019ancilla}. In the case where the HNLS condition is violated, weakly spin-squeezed states were known to achieve the optimal SQL for a Pauli-Z signal under local dephasing noise~\cite{ulam2001spin,escher2011general}; while in general ancilla-assisted QEC strategies are necessary. 
\sisi{In this paper, we will address the open question whether there exist ancilla-free QEC codes for Hamiltonian estimation under Markovian noise that achieve the optimal HL (or SQL). We will exploit quantum entanglement among multiple probes to design two classes of optimal ancilla-free QEC codes.} 

\section{Preliminaries: QEC sensing}

Given a quantum state $\rho_\omega$ that contains an unknown parameter $\omega$, the estimation error of $\omega$ (i.e., the standard deviation of an unbiased $\omega$-estimator) is bounded by $\delta\omega \geq 1/\sqrt{N_{\rm expr} F(\rho_\omega)}$ from the quantum Cram\'{e}r--Rao bound~\cite{holevo2011probabilistic,helstrom1976quantum,braunstein1994statistical}. Here $N_{\rm expr}$ is the number of experiments and $F(\rho_\omega)$ is the quantum Fisher information (QFI) of $\rho_\omega$ defined by $F(\rho_\omega) := \trace(\rho_\omega L_{\rm sym}^2)$ where $L_{\rm sym}$ is a Hermitian operator satisfying $\partial_\omega\rho_\omega = \frac{1}{2}(L_{\rm sym} \rho_\omega + \rho_\omega L_{\rm sym})$. The bound is asymptotically attainable as $N_{\rm expr}$ goes to infinity~\cite{lehmann2006theory,kobayashi2011probability}, making the QFI a canonical measure of the power of quantum sensors. 

Consider a quantum state $\rho_\omega$ of $N$ identical probes, where the HL describes the case where $F(\rho_\omega) = \Theta(N^2)$~\cite{1footnote-heisenberg}, which is the optimal scaling allowed by quantum mechanics~\cite{giovannetti2006quantum}. For example, consider $N$ qubit probes where the density operator $\rho$ of each probe evolves according to
\begin{equation}
\label{eq:dephasing}
\frac{d\rho}{dt} = -i\left[\frac{\omega Z}{2},\rho\right] + \frac{\gamma}{2}(Z\rho Z -\rho), 
\end{equation}
where $Z = \ket{0}\bra{0} - \ket{1}\bra{1}$ is the Pauli-Z operator, $\omega Z/2$ is the Hamiltonian, and $\gamma$ is the dephasing noise rate. When the initial state is the Greenberger--Horne--Zeilinger (GHZ) state~\cite{giovannetti2006quantum,2footnote-GHZ}, i.e., $\ket{\psi_{\rm in}} = \frac{1}{\sqrt{2}} (\ket{0}^{\otimes N} + \ket{1}^{\otimes N})$, the QFI of the final state $\rho_\omega(t)$ at the probing time $t$ will be 
\begin{equation}
\label{eq:dephasing-QFI}
F(\rho_\omega(t)) = N^2 t^2 e^{-2 N \gamma t}. 
\end{equation} 
The HL is achieved when $\gamma = 0$. 
However, fixing $\gamma t> 0$, we have $F(\rho_\omega(t))\rightarrow 0$ as $N\rightarrow \infty$, indicating the HL is no longer achievable under noise. In fact, only the SQL $F(\rho_\omega) = \Theta(N)$ is achievable in presence of noise, e.g., by using product states.

\begin{figure}[tb]
\centering
\includegraphics[width=0.4\textwidth]{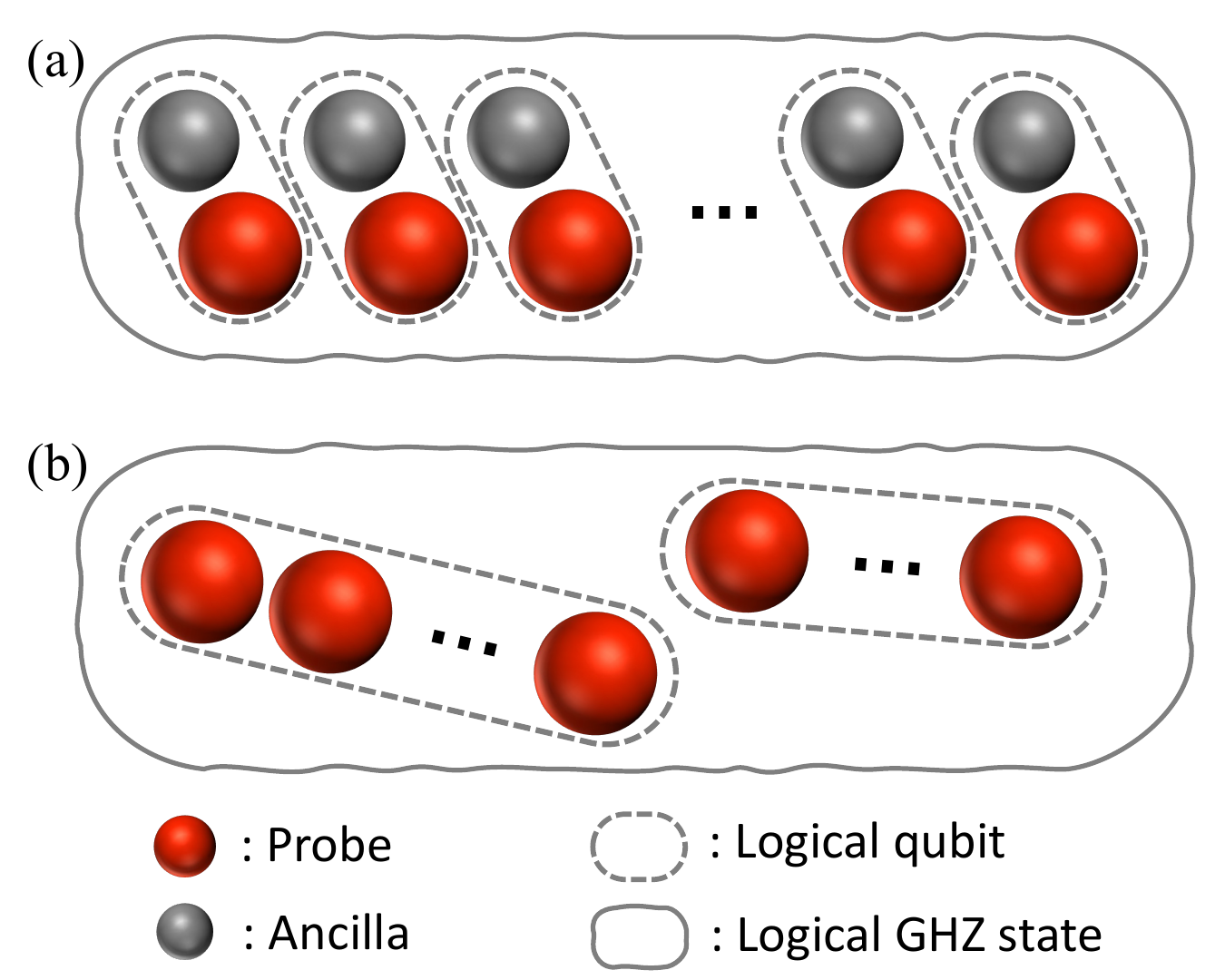}
\caption{(a)~Ancilla-assisted QEC strategy. For example, in \eqref{eq:code-ancilla}, one probe and one noiseless ancilla encode one logical qubit. The HL ($\delta \omega = O(1/N)$) is achieved by preparing a logical GHZ state using $N$ logical qubits. (b)~Ancilla-free multi-probe QEC strategy. For example, in \eqref{eq:code-ancilla-free}, $m=\Theta(N)$ probes encode one logical qubit. The HL is achieved by preparing a logical GHZ state containing $O(1)$ number of logical qubits. }
\label{fig:scheme}
\end{figure}

In general, we will assume
the density operator $\rho$ of each $d$-dimensional probe ($d \geq 2$) evolves under the Lindblad master equation (or Markovian noise), \sisi{which is the most general form for the generator of a quantum dynamical semigroup~\cite{breuer2002theory}.} We have   
\begin{equation}
\label{eq:master}
\frac{d\rho}{dt} = -i[\omega H,\rho] + \sum_{i=1}^r L_i \rho L_i^\dagger - \frac{1}{2}\{L_i^\dagger L_i,\rho\}, 
\end{equation}
where $\omega H$ is the Hamiltonian of the probe 
and $L_i$ are the Lindblad operators that cause dissipation in the system and are independent of $\omega$. \sisi{Note that some of our results below can be generalized to situations where the system evolution is an arbitrary function of $\omega$ (see details in \appref{app:beyond}), which we will not elaborate on in the main text.} 

Define $\mS$ to be the linear subspace of Hermitian operators spanned by $\id,L_i,L_i^\dagger$ and $L_i^\dagger L_j$ for all $i,j$. The HNLS condition states that the HL is achievable using adaptive quantum strategies~\footnote{By adaptive quantum strategies, we mean arbitrary quantum operations and measurements acting on the system sequentially with arbitrarily high frequencies and adaptive feedbacks (see details in \appref{app:heisenberg}).} if and only if HNLS holds, i.e., $H \notin \mS$~\cite{demkowicz2017adaptive,zhou2018achieving}. The QEC sensing strategy is a special type of adaptive quantum strategies where the QEC operations act sufficiently fast on the probe and ancilla. 
When HNLS holds, the HL is achievable using QEC~\cite{zhou2018achieving} and, up to an arbitrarily small error, 
\begin{equation}
\label{eq:QFI}
F(\rho_\omega(t)) = 4 N^2 t^2 \norm{H - \mS}^2 = 4 N^2 t^2 \min_{S \in \mS}\norm{H - S}^2, 
\end{equation}
where $\norm{\cdot}$ is the operator norm. Moreover, the HL coefficient, defined by,  
\begin{equation}
\label{eq:QFI-coeff-HL}
\sup_{t > 0} \lim_{N\rightarrow \infty}  \frac{F(\rho_\omega(t))}{N^2t^2} = 4 \norm{H - \mS}^2, 
\end{equation}
cannot be surpassed via arbitrary adaptive quantum strategies~\cite{wan2022bounds}, indicating the optimality of QEC. When HNLS fails, we must have $F(\rho_\omega(t)) = O(N)$ and the optimal SQL coefficient $\sup_{t > 0} \lim_{N\rightarrow \infty} {F(\rho_\omega(t))}/{(Nt)}$ is also achievable using QEC~\cite{zhou2019optimal}. We will say a QEC code \emph{achieves the optimal HL (or SQL)} if and only if there is a corresponding QEC strategy that achieves the HL (or SQL) and its optimal coefficient. 

Assuming HNLS holds, a QEC code assisted by noiseless ancillas can attain the optimal HL. 
As proven in~\cite{zhou2018achieving}, there exist two density matrices $\rho_0$ and $\rho_1$ in the probing system such that $\trace(\rho_0\rho_1) = 0$, and  \begin{gather}
\label{eq:rho-condition-1}
    \trace((\rho_0-\rho_1)H) = 2\norm{H - \mS}, \\
\label{eq:rho-condition-2}
    \trace((\rho_0-\rho_1)S) = 0,\;\forall S \in \mS.  
\end{gather}
Let $\rho_0 = \sum_{i=0}^{d_0-1} \lambda_i \ket{i}_P\bra{i}_P$ and  $\rho_1 = \sum_{i=d_0}^{d-1} \lambda_i \ket{i}_P\bra{i}_P$, where $\{\ket{i}_P\}_{i=0}^{d-1}$ is an orthonormal basis of the probe space $\mH_P$ and $\sum_{i=0}^{d_0-1}\lambda_i=\sum_{i=d_0}^{d-1}\lambda_i = 1$. The \emph{ancilla-assisted code}~\cite{zhou2018achieving} is defined in $\mH_P \otimes \mH_A$ by  
\begin{equation}
\label{eq:code-ancilla}
\ket{0_\tl} =\! \sum_{i=0}^{d_0-1}\! \sqrt{\lambda_i}\ket{i}_P\!\ket{i}_A,\;\;
\ket{1_\tl} =\! \sum_{i=d_0}^{d-1}\! \sqrt{\lambda_i}\ket{i}_P\!\ket{i}_A,
\end{equation}
where $\{\ket{i}_A\}_{i=0}^{d-1}$ is an orthonormal basis of the ancillary system $\mH_A$, assumed to be unaffected by both the signal and the noise. (We consider only two-dimensional codes in this paper because it is sufficient for single-parameter estimation.)
Using the ancilla-assisted code, there exists a QEC strategy such that the logical qubit evolves noiselessly as a logical Pauli-Z rotation (up to an arbitrarily small error): 
\begin{equation}
\label{eq:unitary}
\frac{d\rho_\tl}{dt} = -i\left[\omega \norm{H - \mS} Z_\tl, \rho_\tl\right], 
\end{equation}
where $Z_\tl$ is the logical Pauli-Z operator. The logical GHZ state then achieves the optimal HL (see \figaref{fig:scheme}).

\section{Qutrit Example}

Generically, every probe needs to be accompanied by a noiseless ancilla of the same dimension in order to achieve the HL. It is a stringent requirement in practical experiments, but can be relaxed in special cases. For example, consider a qubit evolving under a Pauli-Z Hamiltonian and a bit-flip (Pauli-X) noise~\cite{kessler2014quantum,dur2014improved,arrad2014increasing,ozeri2013heisenberg,peng2020achieving}, the ancilla-free repetition code that entangles three qubits $\ket{k_\tl} = \ket{k}^{\otimes 3}_P$ (for $k = 0,1$) 
performs equally well as the ancilla-assisted code $\ket{k_\tl} = \ket{k}_P\ket{k}_A$ (for $k = 0,1$) in terms of achieving the optimal HL.  

In this paper, we will see how the number of noiseless ancillas can be dramatically reduced using multi-probe encodings (see \figbref{fig:scheme}). We first illustrate this idea in a qutrit example (see details in \appref{app:qutrit}). Consider the qutrit evolution $\frac{d\rho}{dt} = -i[\omega H,\rho] + L \rho L^\dagger - \frac{1}{2}\{L^\dagger L,\rho\}$, 
where $H = \ket{0}\bra{0}-(\ket{1}\bra{1}+\ket{2}\bra{2}+\ket{1}\bra{2}+\ket{2}\bra{1})$ and $L = \ket{0}\bra{1} + \ket{0}\bra{2} + \ket{1}\bra{2}$. 
HNLS holds and we have $\rho_0 = \frac{1}{2}(\ket{0}\bra{0}+\ket{2}\bra{2})$, $\rho_1 = \ket{1}\bra{1}$.
The ancilla-assisted code is then 
\begin{equation}
\ket{0_\tl} = \frac{1}{\sqrt{2}}(\ket{0}_P\ket{0}_A+\ket{2}_P\ket{2}_A),\;\;\ket{1_\tl} = \ket{1}_P\ket{1}_A. 
\end{equation}
We call it a $(1,1)$-qutrit code, for the encoding contains $1$ qutrit probe and $1$ qutrit ancilla. 

As proven in \appref{app:qutrit}, any $(m,1)$ code in $\mH_P^{\otimes m}\otimes \mH_A$ achieves the optimal HL if it satisfies, for any $k,k' = 0,1$, $\ell \neq \ell'$ and $\ell,\ell' = 1,\ldots,m$, 
\begin{equation}
\label{eq:sensing-condition-2}
    \trace_{\backslash\{\ell,\ell'\}}(\ket{k_\tl}\bra{k'_\tl}) = (\rho_k^{(\ell)} \otimes \rho_k^{(\ell')} - Q_k^{(\ell)} \otimes Q_k^{(\ell')})\delta_{kk'},
\end{equation}
where $\trace_{\backslash\{\ell,\ell'\}}$ means tracing out the entire system except for the $\ell$-th and $\ell'$-th probes, we use $^{(\ell)}$ to denote an operator acting on the $\ell$-th probe and $Q_k$ can be any traceless Hermitian operator satisfying $\trace(Q_k L_i)  = 0, \forall i$. We call \eqref{eq:sensing-condition-2} a multi-probe QEC condition in this paper. For example, the $(3,0)$ repetition code for Pauli-Z signal and Pauli-X noise satisfies it, i.e., $\trace_{\backslash\{\ell,\ell'\}}(\ket{k}\bra{k'}^{\otimes 3}) = \rho_k^{\otimes 2}\delta_{kk'}$ where $\rho_k = \ket{k}\bra{k}$ and $Q_k = 0$. In the qutrit case above, a $(4,1)$-qutrit code satisfying this condition for $m=4$ is 
\begin{align}
\ket{0_\tl} = \frac{1}{\sqrt{6}} \big( & ( \ket{0022}_{P^{\otimes 4}}+\ket{2200}_{P^{\otimes 4}})\ket{0}_A \nonumber\\ +&(\ket{0202}_{P^{\otimes 4}}+\ket{2020}_{P^{\otimes 4}})\ket{1}_A \nonumber\\ +& (\ket{0220}_{P^{\otimes 4}}+\ket{2002}_{P^{\otimes 4}})\ket{2}_A\big),
\end{align}
and $\ket{1_\tl} = \ket{1111}_{P^{\otimes 4}}\ket{0}_{A}$. 

The advantage of a smaller number of ancillas arises when the noiseless ancilla is a limited resource. Imagine a simplified scenario where a qutrit probe and a qutrit noiseless ancilla are two equivalent units of resource (in reality the latter can be more difficult to prepare, e.g., a noiseless ancilla can be replaced by a constant number of noisy probes under \sisi{code concatenation} (see \appref{app:heisenberg})). Then the achievable QFI using $N_{\textsc{u}}$ units of resource in a probing time $t$ is $\frac{1}{4}N_{\textsc{u}}^2t^2$ via the $(1,1)$-qutrit encoding and a $\frac{N_{\textsc{u}}}{2}$-qubit logical GHZ state, and $\frac{16}{25}N_{\textsc{u}}^2t^2$ via the $(4,1)$-qutrit encoding and a $\frac{4N_{\textsc{u}}}{5}$-qubit logical GHZ state. Clearly, a higher sensitivity is achieved when more resources can be used as probes instead of ancillas.

\section{Ancilla-free QEC codes}

Here we present two types of QEC codes that achieve the optimal HL when HNLS holds, where the fraction of noiseless ancillas is negligible asymptotically.
The QEC codes that achieve the optimal SQL when HNLS fails can be defined analogously, with details provided in \appref{app:SQL}. \sisi{These codes are multi-probe QEC codes whose encoding and recovery channels both require global operations across multiple probes, in contrast to the previous ancilla-assisted code whose encoding and recovery only apply to single pairs of probe and ancilla. }

The first code, which we will refer to as \emph{small-ancilla code}, entangles $m \geq 3$ probes and one ancilla (i.e., $\mH_P^{\otimes m}\otimes \mH_A$). We will prove later that it achieves the optimal HL when $m=\Theta(N)$, although in special circumstances (e.g., the $(4,1)$-qutrit code), the optimal HL is achievable with a constant $m$. The codewords are 
\begin{equation}
\label{eq:code-few-ancilla}
\ket{k_\tl} = \frac{1}{\sqrt{\abs{W_k}}} \sum_{w \in W_k} \ket{w}_{P^{\otimes m}} \ket{\fraki_{k}(w)}_A,
\end{equation}
for $k = 0,1$. Here we use $\ket{w}$ to denote $\ket{w_1}\otimes\cdots\otimes \ket{w_m}$ for any string $w = w_1 w_2\cdots w_m \in \{0,\ldots,d-1\}^{m}$. $W_0$ and $W_1$ are two non-overlapping sets of strings of length $m$ and $\abs{W_k}$ denotes the order of $W_k$. To define $W_{0,1}$, note that for all $m$, it is possible to find a set of integers $\{m_i\}_{i=0}^{d-1}$ such that $\sum_{i=0}^{d_0-1} m_i = \sum_{i=d_0}^{d-1} m_i = m$ and 
\begin{equation}
\label{eq:rounding}
    \abs{\frac{m_i}{m} - \lambda_i} \leq \frac{1}{m},\;\forall i=0,\ldots,d-1. 
\end{equation} 
Note that it satisfies $\lim_{m\rightarrow \infty} m_i/m = \lambda_i$. 
We then define $W_0$ (or $W_1$) to be the set of strings that contains $m_i$ $i$'s for $0 \leq i \leq d_0-1$ (or $d_0 \leq i \leq d-1$). 
\sisi{$\fraki_k(w)$ are functions that satisfy $\fraki_k(w)\neq \fraki_k(w')$ when $w$ and $w'$ are different on exactly two sites (i.e., $w'$ can be obtained from $w$ (and vice versa) by swapping two sites with different numbers). We show in \appref{app:graph}
 that the highest $\dim(\mH_A)$ needed is no larger than $d^2m^2$, making the size of the ancilla exponentially smaller than that of probes. The small-ancilla code is then effectively ancilla-free (because the ancilla can always be replaced by a negligible amount of noisy probes under code concatenation, see \appref{app:heisenberg}).}

Note that the above construction guarantees that for $m \gg 1$, $k,k' = 0,1$ and $\ell \neq \ell'$, the multi-probe QEC condition (when choosing $Q_k = 0$) is satisfied approximately, which is a key ingredient in proving the optimality of the small-ancilla code. 

The second code exploits random phases such that the multi-probe QEC condition holds when $m \gg 1$. We call it the \emph{ancilla-free random code}, which entangles $m \geq 3$ probes without ancillas (i.e., $\mH_P^{\otimes m}$). It also achieves the optimal HL when $m=\Theta(N)$, and is defined by
\begin{equation}
\label{eq:code-ancilla-free}
\ket{k_\tl} = \frac{1}{\sqrt{\abs{W_k}}} \sum_{w \in W_k} e^{i\theta_w} \ket{w}_{P^{\otimes m}}, 
\end{equation}
for $k=0,1$, where $\{\theta_w\}_{w\in W_0\cup W_1}$ are sampled from a set of independent and identically distributed random variables following the uniform distribution in $[0,2\pi)$. 
The ancilla-free random code achieves the optimal HL when $m=\Theta(N)$ with probability $1 - e^{-\Omega(m)}$.

\section{QEC strategy}

\sisilong{Now we delve into the details of the QEC strategy, including how to choose the recovery channel and the input state, that allow us to achieve the optimal HL with the (effectively) ancilla-free codes we defined above. (The SQL case is discussed in \appref{app:SQL}
.) In particular, we will consider recovery channels that map the system dynamics to a Pauli-Z signal under dephasing noise in the logical space and show that the logical signal strength increases linearly with $N$, while the dephasing noise rate at most a constant, leading to the HL. 

%Now we delve into the details of the QEC strategy, in particular the recovery channel, and explain how our (effectively) ancilla-free codes achieve the optimal HL. (The SQL case is discussed in \appref{app:SQL}.) In particular, \sisi{we will consider recovery operations that map the system noise to dephasing noise in the logical system},  and show that there exists a recovery operation such that the signal strength increases linearly with $N$, while the dephasing noise rate at most a constant, leading to the HL. 

%We first examine the form of logical master equations, i.e., the logical state evolution under QEC, and identify the optimal choice of recovery channels. 
We first examine the state evolution in the logical system under a general recovery channel $\mR(\cdot)$. 
Let $P = \ket{0_\tl}\bra{0_\tl} + \ket{1_\tl}\bra{1_\tl}$ be the projection operator onto the code subspace. We apply the QEC operation $\mP + \mR\circ\mP_\perp$ constantly on the quantum state~\cite{zhou2018achieving,zhou2019optimal}, where $\mP(\cdot) = P(\cdot)P$, $\mP_\perp(\cdot) = P_\perp (\cdot) P_\perp$, $P_\perp = \id - P$ is the projection onto the orthogonal subspace to the code and $\mR$ is a recovery operation satisfying $\mR = \mP \circ \mR$. As shown in \appref{app:dephasing}, the logical state $\rho_\tl \in \mH_P^{\otimes m} \otimes \mH_A$ (satisfying $\rho_\tl = P \rho_\tl P$) evolves according to the following logical master equation 
{\small \begin{multline}
\label{eq:QEC-operation}
    \frac{d\rho_\tl}{dt} = -i\left[\omega \mP\left(\sum_{\ell=1}^m H^{(\ell)}\right),\rho_\tl\right] + \sum_{i=1}^r \sum_{\ell=1}^m \mP\left(L_i^{(\ell)}\rho_\tl L_i^{(\ell) \dagger }\right)  \\
+ \mR\left(\mP_\perp\left(L_i^{(\ell)}\rho_\tl L_i^{(\ell)\dagger}\right)\right)
- \frac{1}{2}\left\{\mP\left(L_i^{(\ell)\dagger}L_i^{(\ell)}\right),\rho_\tl\right\}.
\end{multline}
}% Furthermore, let $\frakL_k = {\rm span}\{\ket{k_\tl},L_i^{(\ell)}\ket{k_\tl},\forall i,\ell\}$ for $k=0,1$, where ${\rm span}\{\cdot\}$ denotes the linear span of states and assume that (i)~$\frakL_0 \perp \frakL_1$; \sisinew{(ii)~the recovery channel has the form 
Furthermore, when $\bra{0_\tl} L_i^{(\ell)\dagger} L_j^{(\ell')} \ket{1_\tl} = \bra{0_\tl} L_j^{(\ell)} \ket{1_\tl} = 0$ for all $i,j,\ell,\ell'$ (which holds true for all codes discussed here including the ancilla-assisted code (taking $m=1$), the small-ancilla code and the ancilla-free random code (taking $\dim(\mH_A) = 1$)), there exists a recovery channel $\mR(\cdot)$ such that the logical evolution is Pauli-Z signal under dephasing noise: 
\begin{multline}
\label{eq:error-corrected-master}
    \frac{d\rho_\tl}{dt} = -i\left[\frac{\omega \trace(\sum_\ell H^{(\ell)} Z_\tl)}{2}Z_\tl + \frac{\beta_\tl}{2} Z_\tl,\rho_\tl\right] \\ + \frac{\gamma_\tl}{2}(Z_\tl \rho_\tl Z_\tl - \rho_\tl),
\end{multline}
where the logical noise rate is 
{\small 
\begin{align}
    &\gamma_\tl = \sum_{i=1}^r \sum_{\ell=1}^m - \Re\big[\bra{0_\tl} L_i^{(\ell)} \ket{0_\tl} \nonumber\bra{1_\tl} L_i^{(\ell)\dagger} \ket{1_\tl}\big] \\& +  \frac{1}{2}(\bra{0_\tl} L_i^{(\ell)\dagger} L_i^{(\ell)} \ket{0_\tl} + \bra{1_\tl} L_i^{(\ell)\dagger} L_i^{(\ell)} \ket{1_\tl}) \label{eq:gamma-L}\\& - \norm{\sum_{i=1}^r \sum_{\ell=1}^m(\id - \ket{0_\tl}\bra{0_\tl})(L_i^{(\ell)} \ket{0_\tl}\bra{1_\tl} L_i^{(\ell)\dagger} )(\id - \ket{1_\tl}\bra{1_\tl})}_1, \nonumber
\end{align}
}$\norm{\cdot}_1$ is the trace norm, and $\beta_\tl$ is independent of $\omega$. 
%The discussions above hold for any QEC encoding and recovery that satisfy assumptions (i) and (ii), including the ancilla-assisted code (taking $m=1$), the small-ancilla code, and the ancilla-free random code. 

Consider $n$ logical qubits (the total number of probes is $N = mn$), evolving under \eqref{eq:error-corrected-master} with an initial state $\ket{\psi_{\rm in}} = \frac{1}{\sqrt{2}} (\ket{0_\tl}^{\otimes n}+\ket{1_\tl}^{\otimes n})$. We have (from \eqref{eq:dephasing-QFI}) that 
\begin{equation}
\label{eq:error-corrected-QFI}
    F(\rho_\omega(t)) = n^2 \bigg(\sum_{\ell=1}^m \trace(H^{(\ell)}Z_\tl )\bigg)^2t^2 e^{- 2 n \gamma_\tl t}. 
\end{equation}
As a sanity check and a preparation for later proofs, we first show the optimal HL is achieved using the ancilla-assisted code, which was a previous result in~\cite{zhou2018achieving}. 
First note that taking $m=1$, $n = N$, the HL $F(\rho_\omega(t)) = \Theta(N^2)$ is achieved when the noise rate $\gamma_\tl = 0$. 
%Taking $m=1$, $n = N$, $\gamma_\tl = 0$ and $\trace(H^{(\ell)} Z_\tl) = 2\norm{H-\mS}$, it is obvious that the optimal HL is achieved using the ancilla-assisted code. 
To see this, we first notice that the gauge transformation (i)~$L_i \leftarrow L_i - x_i \id$ for any $x_i \in \bC$ and (ii)~$L_i \leftarrow \sum_{j} u_{ij} L_j$ for any isometry $u$ satisfying $\sum_i u_{ik}^* u_{ij} = \delta_{kj}$ changes only the value of $\beta_\tl$, which is equivalent to applying a constant shift on $\omega$ that can be offset easily given $\beta_\tl$ is a known value.  
%introduces at most an irrelevant parameter-independent Pauli-Z Hamiltonian in \eqref{eq:error-corrected-master}. 
Therefore, we can assume without loss of generality (using \eqref{eq:rho-condition-2}) that 
\begin{gather}
\label{eq:gauge-1}
    \trace(\rho_0 L_i) = \trace(\rho_1 L_i) = 0,\;\forall i,\\
\label{eq:gauge-2}
    \trace(\rho_0 L_i^\dagger L_j) = \trace(\rho_1 L_i^\dagger L_j) = \mu_{i}\delta_{ij}, \;\forall i,j, 
\end{gather}
where $\mu_i \geq 0$.
\eqref{eq:code-ancilla} implies $\braket{0_\tl|L_i|0_\tl} = \braket{1_\tl|L_i|1_\tl} = 0$ and $\braket{0_\tl|L_i^\dagger L_j|0_\tl} = \braket{1_\tl|L_i^\dagger L_j|1_\tl} = \mu_{i}\delta_{ij}$, leading to 
$\gamma_\tl = \sum_{i=1}^r \mu_i - \big\|\sum_{i=1}^r L_i\ket{0_\tl}\bra{1_\tl}L_i^\dagger\big\|_1 = 0$, as expected. Moreover, $\sum_\ell \trace(H^{(\ell)} Z_\tl) = m\trace(H (\rho_0 - \rho_1)) = 2m\norm{H-\mS}$. Then the optimal HL (\eqref{eq:QFI-coeff-HL}) is achieved using the ancilla-assisted code.

For the small-ancilla code and the ancilla-free random code, $\gamma_\tl$ can be positive due to the fact that \eqref{eq:sensing-condition-2} is only approximately true. Nonetheless, the HL is still achievable, when $\gamma_\tl$ is at most a constant and $\trace(\sum_\ell H^{(\ell)} Z_\tl)$ increases linearly with $N$. (Note that without QEC, the noise rate will increase linearly with $N$.) Specifically, we have 
\begin{equation}
    F(\rho_\omega(t)) = n^2 m^2 \trace((\tilde{\rho}_0-\tilde{\rho}_1)H)^2 t^2 e^{- 2 n \gamma_\tl t},
\end{equation}
where $\tilde\rho_k = \trace_{\backslash\{\ell\}}(\ket{k_\tl}\bra{k_\tl})$ for $k=0,1$, where $\trace_{\backslash\{\ell\}}$ means tracing out the entire system except for the $\ell$-th probe. (Note that $\tilde\rho_k = \rho_k + O(1/m)$ is independent of $\ell$.) The optimal HL (\eqref{eq:QFI-coeff-HL}) is then achievable, taking $n=\Theta(1)$ and $m = \Theta(N)$, if  
\begin{gather}
\label{eq:limit-1}
    \lim_{m\rightarrow \infty} \trace((\tilde{\rho}_0-\tilde{\rho}_1)H) = 2\norm{H-\mS},\\
\label{eq:limit-2}
    \limsup_{m\rightarrow \infty} \gamma_\tl = \overline{\gamma}_\tl < + \infty. 
\end{gather} 
In particular, when $\overline{\gamma}_\tl > 0$, the optimal HL is achieved in the regime $t \ll 1/\overline{\gamma}_\tl$. The proofs of \eqref{eq:limit-1} and \eqref{eq:limit-2} are left in \appref{app:limit}
, which also includes a numerical simulation of the qutrit code up to $m = 7$. 
}

\section{Discussion}

This paper answers the long-standing open question of the requirement of noiseless ancillas in noisy quantum metrology. To achieve this, we constructed two types of multi-probe QEC codes that are (effectively) ancilla-free and achieve the optimal HL when HNLS holds (or the optimal SQL when HNLS fails). We started from the previously known optimal ancilla-assisted QEC code that entangles one probe and one ancilla and designed multi-probe codes such that their 2-local reduced density operators resemble those of ancilla-assisted codes. We then proved their optimality by choosing suitable recovery channels. Note that our construction does not rule out the possibility that the metrological limits can be achieved using other types of QEC encoding and recovery, which have not yet been explored. 

There are several open problems remaining to be addressed. 
First, our codes achieve the optimal HL with respect to the number of probes, but sometimes fails when the probing time is too long and it remains open whether there are ancilla-free codes that achieve the optimal HL with respect to the probing time. Second, efficient encoding and decoding algorithms are yet to be found for our multi-probe codes. \sisinew{Finally, all QEC operations are assumed to be perfect here and the efficiency of our protocols under faulty QEC operations is left for future study. }

\begin{acknowledgments}

We thank Senrui Chen for helpful discussion. S.Z. acknowledges funding provided by the Institute for Quantum Information and Matter, an NSF Physics Frontiers Center (NSF Grant PHY-1733907) and Perimeter Institute for Theoretical Physics, a research institute supported in part by the Government of Canada through the Department of Innovation, Science and Economic Development Canada and by the Province of Ontario through the Ministry of Colleges and Universities.
A.G.M. acknowledges support from the Caltech Summer Undergraduate Research Fellowships (SURF) Program and the John Preskill Group. L.J. acknowledges support from the ARO (W911NF-23-1-0077), ARO MURI (W911NF-21-1-0325), AFOSR MURI (FA9550-19-1-0399, FA9550-21-1-0209), AFRL (FA8649-21-P-0781), DoE Q-NEXT, NSF (OMA-1936118, ERC-1941583, OMA-2137642), NTT Research, and the Packard Foundation (2020-71479).

\end{acknowledgments}

\onecolumngrid

\appendix

\setcounter{theorem}{0}
\setcounter{proposition}{0}
\setcounter{lemma}{0}
\setcounter{figure}{0}
\renewcommand{\thefigure}{S\arabic{figure}}
\renewcommand{\thelemma}{S\arabic{lemma}}
\renewcommand{\thetheorem}{S\arabic{theorem}}
\renewcommand{\thecorollary}{S\arabic{corollary}}
\renewcommand{\theproposition}{S\arabic{proposition}}
\renewcommand{\theHfigure}{Supplement.\arabic{figure}}
\renewcommand{\theHlemma}{Supplement.\arabic{lemma}}
\renewcommand{\theHtheorem}{Supplement.\arabic{theorem}}
\renewcommand{\theHcorollary}{Supplement.\arabic{corollary}}

\setcounter{page}{1}

%\begin{NoHyper}

% \begin{center}
% \large\textbf{    Supplemental Material: Achieving metrological limits using ancilla-free quantum error-correcting codes}

% \vspace{0.1in}

% \normalsize Sisi Zhou, Argyris Giannisis Manes, Liang Jiang
% \end{center}

\section{Adaptive quantum strategies and QEC strategies for noisy quantum metrology}
\label{app:heisenberg}

In this appendix, we explain in detail the common metrological strategies used in noisy quantum metrology, and clarify our definitions of the HL and the SQL in the main text. We consider Hamiltonian estimation under Markovian noise where the probe evolution is described by~\eqref{eq:master}. Let the quantum channel 
\begin{equation}
    \mE_\omega(\rho) = \rho + \left( -i[\omega H,\rho] + \sum_{i=1}^r L_i \rho L_i^\dagger - \frac{1}{2}\{L_i^\dagger L_i,\rho\}\right) dt + O(dt^2)
\end{equation}
describe the probe evolution in a small time $dt$. 

\emph{Adaptive quantum strategies} (or sequential strategies, shown in \figaref{fig:strategy}) are usually considered as the most general quantum strategies~\cite{demkowicz2014using,sekatski2017quantum,demkowicz2017adaptive}, where the system consists of a probe and an ancillary system of an arbitrarily large dimension. Arbitrary quantum controls (i.e., quantum channels) $\mC_1,\mC_2,\ldots$ are applied instantaneously every time interval $dt$. The total number of controls is $T/dt$, where $T$ is the probing time. The goal is to maximize the QFI of final states $\rho_\omega$ over all quantum controls and initial states $\ket{\psi_{\rm in}}$, taking the limit $dt \rightarrow 0$. It was shown that~\cite{demkowicz2017adaptive,zhou2018achieving,wan2022bounds}, for any adaptive quantum strategy, we always have 
\begin{equation}
    F(\rho_\omega(T)) \leq 4 \norm{H - \mS} T^2 + O(T^{3/2}), 
\end{equation}
when HNLS holds (i.e., $H \notin \mS$). In this case, there exists an ancilla-assisted QEC strategy achieving $F(\rho_\omega) = \Theta(T^2)$~\cite{zhou2018achieving}. On the other hand, when HNLS fails (i.e., $H \in \mS$), we always have
\begin{equation}
\label{eq:alpha}
    F(\rho_\omega(T)) \leq 4 {o_1} T, \quad {o_1}: = 
\min_{\substack{\vh \in \bC^r, \text{~Hermitian~}\frakh \in \bC^{r \times r}, \text{~s.t.~} \\ H - \sum_{j}(\vh_j^* L_j + \vh_j L_j^\dagger) - \sum_{ij} \frakh_{ij}L_i^\dagger L_j \propto \id}} \norm{O_1},
\end{equation}
where 
\begin{equation}
    O_1 = \sum_i \bigg(\vh_i \id +  \sum_j \frakh_{ij}L_j\bigg)^\dagger \bigg(\vh_i \id +  \sum_j  \frakh_{ij}L_j\bigg). 
\end{equation} 
It means the QFI cannot go beyond the SQL.

\begin{figure}[t]
\centering
\includegraphics[width=0.9\textwidth]{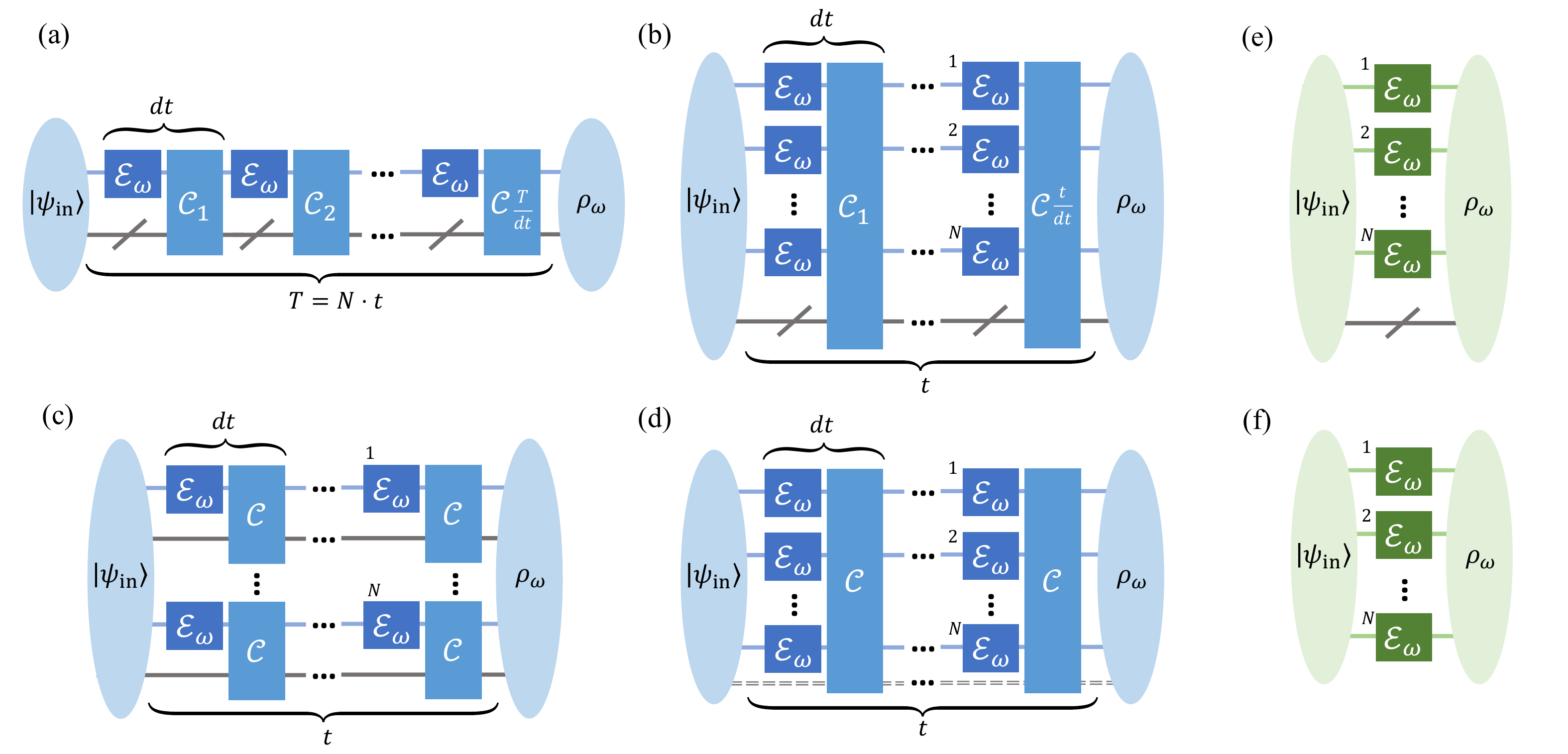}
\caption{(a)-(d) Hamiltonian estimation under Markovian noise. (a)~Adaptive quantum strategies (or sequential strategies). (b)~Parallel strategies. (c)~Ancilla-assisted QEC strategies. (d)~Multi-probe QEC strategies that are effectively ancilla-free. Here we set the number of probes in a logical qubit $m$ equal to the total number of probes $N$; while in general $N$ can be a multiple of $m$ and the QEC operation will act repeatedly in parallel on each multi-probe logical qubit like in (c). Note that we use slashes over lines to represent ancillary systems whose dimension can be arbitrarily large; and dashed lines to denote small ancillary systems whose dimensions are exponentially small compared to the dimension of the probes. (e)-(f) Quantum channel estimation. (e) Parallel strategies. (f) Ancilla-free parallel strategies. }
\label{fig:strategy}
\end{figure}

\emph{Parallel strategies} form a subset of adaptive quantum strategies, as shown in \figbref{fig:strategy}, where the system consists of $N$ probes and an ancillary system of an arbitrarily large dimension and the probing time is taken as $t = T/N$ (in comparison to the probing time $T$ in adaptive quantum strategies). \figbref{fig:strategy} can be reduced to \figaref{fig:strategy} if we apply in each step in \figaref{fig:strategy} a swap gate between the probe and one ancilla, making parallel strategies a subset of adaptive strategies. Therefore, we have, for parallel strategies,  
\begin{equation}
    F(\rho_\omega(t)) \leq
    \begin{cases}
    4 \norm{H - \mS} N^2 t^2 + O((Nt)^{3/2}), & \text{~when~} H \notin \mS, \\ 
    4 {o_1} N t, & \text{~when~} H \in \mS. 
    \end{cases}
\end{equation}
When we fix $t$ and let $N \rightarrow \infty$, the scaling $F(\rho_\omega) = \Theta(N^2)$ is \emph{the HL with respect to the number of probes $N$}, which is the definition of HL we focus on in this paper. In particular, we define the asymptotic HL (or SQL) coefficient to be $\sup_{t > 0} \lim_{N\rightarrow \infty} {F(\rho_\omega(t))}/{(Nt)^2}$ (or $\sup_{t > 0} \lim_{N\rightarrow \infty} {F(\rho_\omega(t))}/{(Nt)}$) and we say a metrological protocol achieves the optimal asymptotic HL (or SQL) coefficient if the coefficient reaches its upper bound $4 \norm{H - \mS}^2$ (or $4{o_1}$). On the other hand, when we fix $N$ and let $t \rightarrow \infty$, the scaling $F(\rho_\omega) = \Theta(t^2)$ is \emph{the HL with respect to the probing time $t$}. Our ancilla-free QEC strategies do not necessarily reach the HL with respect to $t$. 

We discuss in this paper both the case where the HL (with respect to $N$) is achievable and the case where only the SQL (with respect to $N$) is achievable. In particular, we design (effectively) ancilla-free QEC strategies that achieve the optimal HL (or SQL) with respect to $N$, where only the ancilla-assisted QEC protocols were known previously~\cite{zhou2018achieving,zhou2019optimal}. 

The QEC metrological strategies are special types of parallel strategies. \figcref{fig:strategy} illustrates the ancilla-assisted QEC strategy, e.g., using the ancilla-assisted code (\eqref{eq:code-ancilla}) when HNLS holds, or the ancilla-assisted code$^\sql$ (\eqref{eq:code-ancilla-SQL}) when HNLS fails, where the QEC operation $\mC$ applies repeatedly in parallel on each pair of a probe and an ancilla of the same dimension (or twice the probe's dimension in the SQL case). 

\figdref{fig:strategy} illustrates our newly introduced multi-probe QEC strategies that are effectively ancilla-free. For simplicity, \figdref{fig:strategy} specifically shows the case where $m$, the number of probes in a logical qubit, is equal to $N$ (i.e., we encode the logical qubit in the entire system); while in general $N$ can be a multiple of $m$ and the QEC operation will act repeatedly in parallel on each multi-probe logical qubit like in \figcref{fig:strategy}. Our multi-probe QEC strategies include QEC using the small-ancilla code (\eqref{eq:code-few-ancilla}) and the ancilla-free random code (\eqref{eq:code-ancilla-free}) when HNLS holds, and the the small-ancilla code$^\sql$ (\eqref{eq:code-few-ancilla-SQL}) and the qubit-ancilla random code$^\sql$ (\eqref{eq:code-qubit-ancilla-SQL}) when HNLS fails. We prove that when $m = \Theta(N)$, the optimal HL is achievable using our multi-probe QEC strategies. 
Moreover, we show that in order to achieve the optimal HL, we only need an ancilla whose dimension is exponentially smaller than that of the probes. It means that these codes are effectively ancilla-free because the ancilla can always be replaced by a negligible amount of probes under \sisi{code concatenation~\cite{zhou2018achieving}}.

\sisilong{

\subsection*{Small-ancilla codes are effectively ancilla-free using code concatenation}

\begin{figure}[tb]
\centering
\includegraphics[width=0.5\textwidth]{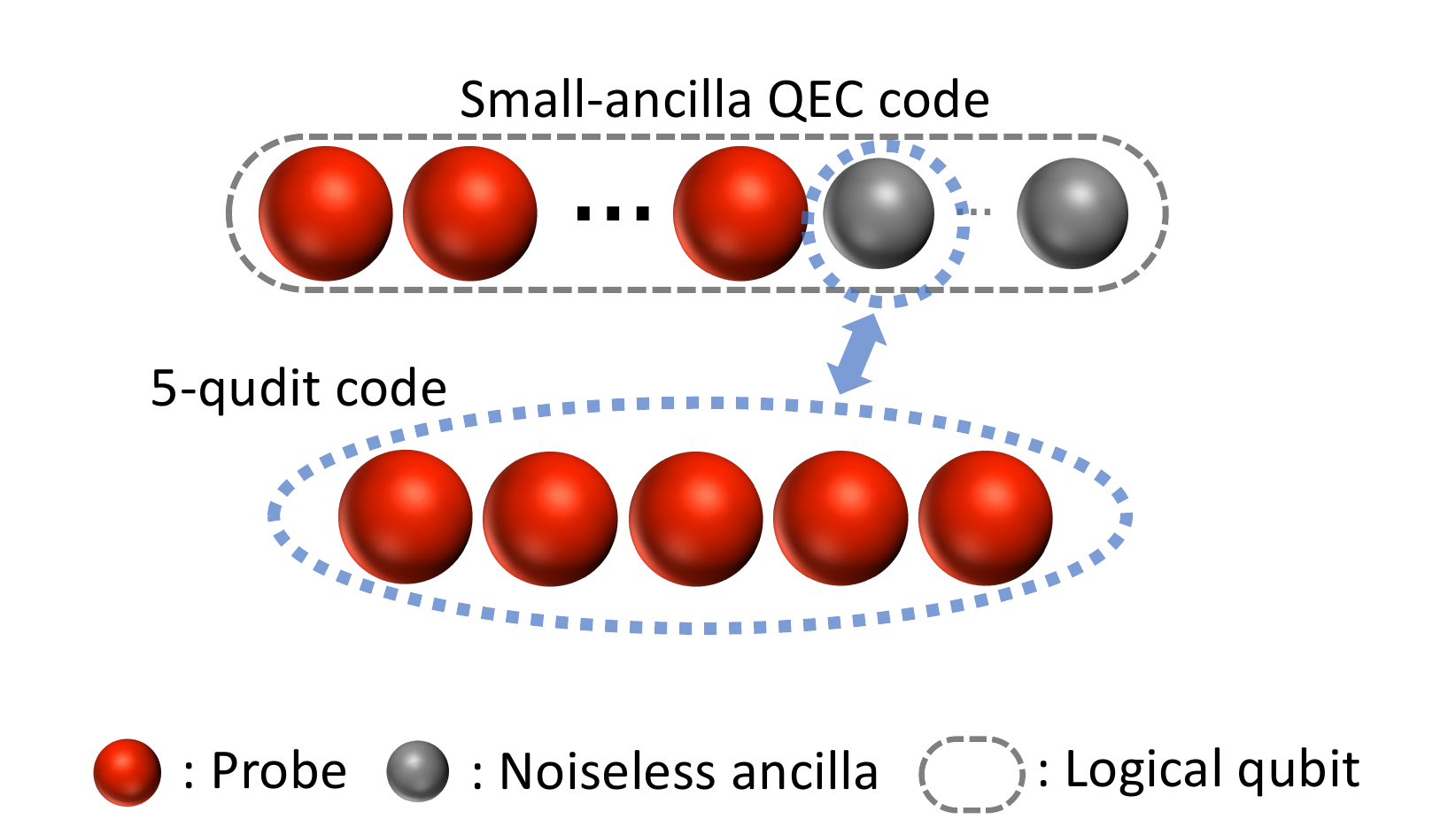}
\caption{\sisi{Small-ancilla code is effectively ancilla-free because one can set aside $O(\log(N))$ probes among $N$ total probes to make them function as noiseless ancillas in our construction through an inner layer of QEC, e.g., through the five-qudit code~\cite{laflamme1996perfect,chau1997five}. Note that we say a code achieves the optimal HL if its leading order term $O(N^2)$ is optimal, therefore $O(\log(N))$ probes are negligible in terms of achieving the optimal HL. }}
\label{fig:effective}
\end{figure}

Here we explain in detail why the small-ancilla code can be considered as an effectively ancilla-free code, even though the noiseless ancilla is still used in the construction. The idea is that given a large number of probes, say $N$ probes, we can always set aside of a negligible amount of probes of size $O(\log(N))$ and let them function as the noiseless ancilla part of the small-ancilla code by performing an inner layer of QEC on them (see \figref{fig:effective}). As we only care about the leading order $O(N^2)$ of the QFI, probes of size $O(\log(N))$ are negligible in terms of achieving the optimal HL, making the small-ancilla code \emph{effectively} ancilla-free.  This procedure is called code concatenation, that has also been used before in QEC-assisted metrology~\cite{zhou2018achieving}. 

From the main text, the small-ancilla code is a QEC code that entangles $m$ $d$-dimensional probes and one ancillary system of dimension at most $m^2d^2$. Given a total number of $N$ probes, we consider encoding the entire system using the small-ancilla code. For example, we could take
\begin{equation}
N = m + m_A,
\end{equation}
where $m$ is the number of probes used in the encoding and $m_A$ is the number of probes to be used as the ancillary system. Using the five-qudit code~\cite{laflamme1996perfect,chau1997five} that perfectly corrects arbitrary single-qudit errors, $m_A$ noisy probes can be converted into $m_A/5$ $d$-dimensional noiseless ancillas (assuming $m_A$ is a multiple of $5$). Then the small-ancilla code exists, as long as 
\begin{equation}
\dim(\mH_A) = d^{\frac{m_A}{5}} \geq m^2d^2,
\end{equation}
since the highest $\dim(\mH_A)$ needed is no larger than $m^2d^2$. For example, we can take 
\begin{equation}
m_A = \left\lceil 5 \frac{\log(m^2d^2)}{\log d} \right\rceil. 
\end{equation}
Then since $\lim_{N\rightarrow} m / N = 1$, the optimal HL coefficient 
\begin{equation}
    \sup_{t > 0} \lim_{N\rightarrow \infty}  \frac{F(\rho_\omega(t))}{N^2t^2} = \sup_{t > 0} \lim_{m\rightarrow \infty}  \frac{F(\rho_\omega(t))}{m^2t^2}= 4 \norm{H - \mS}^2
\end{equation}
is still attainable using the small-ancilla code when HNLS is satisfied, even without the help of an actual physical noiseless ancillas. Similar arguments can show that both the small-ancilla code$^\sql$ and the qubit-ancilla code$^\sql$ are effectively ancilla-free in terms of achieving the optimal SQL, when HNLS is violated. 

Note that here we assume all QEC operations are perfect so that five noisy qudit probes can perfectly simulate one noiseless ancilla. However, such a replacement may not work when QEC operations are faulty, and the discussion of fault-tolerant QEC is beyond the scope of this paper. 
}

\subsection*{Previous results on the role of the ancilla}

Finally, we remark on previous results on the role of the ancilla in quantum metrology and related fields. 

In quantum channel estimation where $\mE_\omega$ is an arbitrary quantum channel as a function of $\omega$, it was previously known that the ancilla-assisted parallel quantum strategy (\figeref{fig:strategy}) performs better than the ancilla-free parallel quantum strategy (\figfref{fig:strategy})~\cite{demkowicz2014using}. For phase estimation under amplitude damping noise, 
\begin{equation}
    \mE_\omega(\cdot) = (\ket{0}\bra{0}+\sqrt{\eta}\ket{1}\bra{1})e^{i\frac{\omega}{2} Z}(\cdot)e^{-i\frac{\omega}{2} Z}(\ket{0}\bra{0}+\sqrt{\eta}\ket{1}\bra{1}) + (1-\eta)\ket{0}\bra{1}e^{i\frac{\omega}{2} Z}(\cdot)e^{-i\frac{\omega}{2} Z}\ket{1}\bra{0}, \;0 < \eta < 1, 
\end{equation}
the optimal QFI of the ancilla-assisted parallel strategy is $\frac{4N\eta}{1-\eta}$, larger than the optimal QFI of the ancilla-free parallel strategy $\frac{N\eta}{1-\eta}$~\cite{demkowicz2014using}. The result seemingly contradicts with our conclusion that ancilla-free strategies performs equally well as ancilla-assisted strategies in the case of Hamiltonian estimation under Markovian noise. The reason could lie in that we allow frequent QEC acting repeatedly in the system evolution (e.g., in \figdref{fig:strategy}), while in quantum channel estimation no controls are  allowed during the channel application, but only before and after it (\figfref{fig:strategy}). 

Besides single-parameter quantum metrology, ancilla also plays an important role in multi-parameter quantum metrology. In multi-parameter estimation, the incompatibility of optimal measurements for different parameters is a crucial issue and ancilla helps alleviate such incompatibility. For example, in quantum sensing of magnetic field strength and frequency, a maximally entangled state between the probe and ancillary system reaches the optimal sensitivity~\cite{yuan2016sequential,hou2021super}. In the related field of multi-parameter quantum learning, there can be exponential separations between the sample complexity in the sequential ancilla-assisted case and that in the sequential ancilla-free case~\cite{chen2022quantum}. It would be intriguing, in future works, to compare the performance of ancilla-free QEC strategies with the ancilla-assisted QEC strategies~\cite{gorecki2020optimal} in multi-parameter noisy quantum metrology.

\section{\texorpdfstring{$(4, 1)$}{(4,1)}-qutrit code example}
\label{app:qutrit}

In this appendix, we consider a Markovian noise model on a qutrit. We first derive the ancilla-assisted code which entangles one probe qutrit with one ancillary qutrit (\eqref{eq:code-ancilla}) and show it achieves the optimal HL. Then we show that the size of ancilla can be reduced by presenting a code that entangles four probes with one ancillary qutrits and we call it a $(4,1)$-qutrit code, which is a special case of the general small-ancilla code (\eqref{eq:code-few-ancilla}) in the main text. We show the $(4,1)$-qutrit code also achieves the optimal HL using a multi-probe QEC sensing condition (\lemmaref{lemma:multi-probe}). 

\subsection{System evolution and ancilla-assisted code}

Here we consider a qutrit state $\rho$ evolving according to a master equation 
\begin{equation}
\frac{d\rho}{dt} = -i[\omega H,\rho] + L \rho L^\dagger - \frac{1}{2}\{L^\dagger L,\rho\},
\end{equation}
where (we use the $(i+1)$-th row/column to represent a basis state $\ket{i}$ ($i=0,1,2$)) 
\begin{equation}
H = \begin{pmatrix}
1 & 0 & 0 \\
0 & -1 & -1 \\
0 & -1 & -1 \\
\end{pmatrix},\quad 
L = 
\begin{pmatrix}
0 & 1 & 1 \\
0 & 0 & 1 \\
0 & 0 & 0 \\
\end{pmatrix}. 
\end{equation}
The Lindblad span is $\mS = {\rm span}\left\{\id,L+L^\dagger,i(L-L^\dagger),L^\dagger L\right\}$, where 
\begin{equation}
L + L^\dagger = 
\begin{pmatrix}
0 & 1 & 1 \\
1 & 0 & 1 \\
1 & 1 & 0 \\
\end{pmatrix},
\quad 
i(L - L^\dagger) = 
\begin{pmatrix}
0 & i & i \\
-i & 0 & i \\
-i & -i & 0 \\
\end{pmatrix},
\quad    
L^\dagger L = 
\begin{pmatrix}
0 & 0 & 0 \\
0 & 1 & 1 \\
0 & 1 & 2 \\
\end{pmatrix}. 
\end{equation}
We first observe that $H \notin \mS$. Moreover, 
\begin{equation}
\norm{H - \mS} = \norm{H - S} = \frac{1}{2}, \text{~~~where~} 
S = \frac{1}{2}\id - L^\dagger L = 
\text{~and~} H-S = 
\begin{pmatrix}
\frac{1}{2} & 0 & 0 \\
0 & -\frac{1}{2} & 0 \\
0 & 0 & \frac{1}{2} \\
\end{pmatrix}. 
\end{equation}
To find $\rho_{0,1}$ satisfying \eqref{eq:rho-condition-1} and \eqref{eq:rho-condition-2}, we use the fact that $\rho_0$ is supported on the eigenspace corresponding to the maximum eigenvalue of $H-S$ and $\rho_1$ is supported on the eigenspace corresponding to the minimum eigenvalue of $H-S$. Then the only solutions to $\rho_0$ and $\rho_1$ satisfying \eqref{eq:rho-condition-2} are 
\begin{equation}
    \rho_{0} = \frac{1}{2}\left(\ket{0}\bra{0} + \ket{2}\bra{2}\right),\quad 
    \rho_{1} = \ket{1}\bra{1}. 
\end{equation}
Then the ancilla-assisted code is given by 
\begin{equation}
\ket{0_\tl} = \frac{1}{\sqrt{2}}(\ket{0}_P\ket{0}_A+\ket{2}_P\ket{2}_A),\;\;\ket{1_\tl} = \ket{1}_P\ket{1}_A, 
\end{equation}
whose reduced density operators in $\mH_P$ are $\rho_{0,1}$. 
We call it a $(1,1)$-qutrit code, for the encoding contains $1$ qutrit probe and $1$ qutrit ancilla. There exists a recovery channel such that the logical state evolves according to~\cite{zhou2018achieving}
\begin{equation}
    \frac{d\rho_\tl}{dt} = - i \left[\frac{\omega}{2}Z_\tl,\rho_\tl\right]. 
\end{equation}

\subsection{Multi-probe QEC sensing condition}

Before we present the $(4,1)$-qutrit code, we first prove a useful lemma for general QEC sensing that provides a \emph{sufficient} condition for a code to achieve the optimal HL. Note that it is not a necessary condition for multi-probe QEC sensing to achieve the HL (for example, we can replace $Q_k$ with $i Q_k$ below, or introduce more than one $Q_k$ and sum up all $Q_k\otimes Q_k$ terms, and the lemma still holds), but in this paper all our constructed codes follow this sufficient condition (at least approximately). Thus, we will focus on it here for simplicity. 
\begin{lemma}[Multi-probe QEC sensing] 
\label{lemma:multi-probe}
If a two-dimensional QEC code in $\mH_P^{\otimes m} \otimes \mH_A$ ($m\geq 2$) satisfies for $k,k' = 0,1$, $\ell \neq \ell'$ and $\ell,\ell' = 1,\ldots,m$, 
\begin{equation}
\label{eq:2-local}
    \trace_{\backslash\{\ell,\ell'\}}(\ket{k_\tl}\bra{k'_\tl}) = \left(\rho_k^{(\ell)} \otimes \rho_k^{(\ell')} - Q_k^{(\ell)} \otimes Q_k^{(\ell')}\right)\delta_{kk'},
\end{equation}
where $\trace_{\backslash\{\ell,\ell'\}}$ means tracing out the entire system except for the $\ell$-th and $\ell'$-th probes and $Q_k$ can be any traceless Hermitian operator satisfying $\trace(Q_k L_i) = 0, \forall i$, then there exists a QEC strategy such that a logical state $\rho_\tl$ evolves according to (up to an arbitrarily small error)
\begin{equation}
\label{eq:multi-probe-signal}
    \frac{d\rho_\tl}{dt} = -i\left[\omega m \norm{H - \mS} Z_\tl, \rho_\tl\right]. 
\end{equation}
\end{lemma}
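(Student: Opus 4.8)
The plan is to reduce the statement to the logical master equation \eqref{eq:error-corrected-master} already derived in the main text: it then suffices to verify that the code meets assumptions (i)--(ii), that the recovery of the form \eqref{eq:recovery} minimizing \eqref{eq:gamma-L} gives $\gamma_\tl=0$, and that the logical signal coefficient equals $m\norm{H-\mS}$. Throughout I would work in the gauge \eqref{eq:gauge-1}--\eqref{eq:gauge-2}, obtained by the transformations $L_i\leftarrow L_i-x_i\id$ and $L_i\leftarrow\sum_j u_{ij}L_j$; these leave \eqref{eq:2-local} unchanged and preserve the hypothesis $\trace(Q_kL_i)=0$, since $Q_k$ is traceless.

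First I would extract the single-probe marginals: tracing out one further probe in \eqref{eq:2-local} and using $\trace\rho_k=1$ and $\trace Q_k=0$ gives $\trace_{\backslash\{\ell\}}(\ket{k_\tl}\bra{k'_\tl})=\rho_k^{(\ell)}\delta_{kk'}$, so these marginals are exactly the $\rho_{0,1}$ of the ancilla-assisted code, and in particular $\trace_{\backslash\{\ell\}}(\ket{0_\tl}\bra{1_\tl})=0$. Since each $L_i^{(\ell)\dagger}L_j^{(\ell')}$ is supported on at most two probes, the $k\neq k'$ cases of \eqref{eq:2-local} together with these marginal identities force $\braket{0_\tl|1_\tl}=\braket{0_\tl|L_i^{(\ell)}|1_\tl}=\braket{L_i^{(\ell)}0_\tl|L_j^{(\ell')}1_\tl}=0$, i.e.\ $\frakL_0\perp\frakL_1$; thus assumptions (i)--(ii) hold and I fix $\mR$ of the form \eqref{eq:recovery} attaining the minimum in \eqref{eq:gamma-L}, after which the logical state obeys \eqref{eq:error-corrected-master}.

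Next I would show $\gamma_\tl=0$. From the marginals and the gauge, $\braket{k_\tl|L_i^{(\ell)}|k_\tl}=\trace(L_i\rho_k)=0$ and $\braket{k_\tl|L_i^{(\ell)\dagger}L_i^{(\ell)}|k_\tl}=\trace(L_i^\dagger L_i\rho_k)=\mu_i$, so the first two terms in \eqref{eq:gamma-L} sum to $m\sum_i\mu_i$; and since all $\braket{k_\tl|L_i^{(\ell)}|k'_\tl}$ vanish, the two projectors in the trace-norm term can be removed, collapsing it to $\norm{M}_1$ with $M:=\sum_{i,\ell}L_i^{(\ell)}\ket{0_\tl}\bra{1_\tl}L_i^{(\ell)\dagger}$. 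The crux is that $\ket{a_{i\ell}}:=L_i^{(\ell)}\ket{0_\tl}$ and $\ket{b_{i\ell}}:=L_i^{(\ell)}\ket{1_\tl}$ satisfy $\braket{a_{i\ell}|a_{j\ell'}}=\braket{b_{i\ell}|b_{j\ell'}}=\mu_i\delta_{ij}\delta_{\ell\ell'}$: for $\ell=\ell'$ this is \eqref{eq:gauge-2}, and for $\ell\neq\ell'$ one expands using the two-probe marginal \eqref{eq:2-local} with $k=k'$, where the $\rho_k\otimes\rho_k$ part gives $\trace(L_i^\dagger\rho_k)\trace(L_j\rho_k)=0$ and the $Q_k\otimes Q_k$ part gives $\trace(L_i^\dagger Q_k)\trace(L_jQ_k)=0$ exactly because $\trace(Q_kL_i)=0$. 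Hence $M=\sum_{i,\ell:\,\mu_i>0}\mu_i\ket{\hat a_{i\ell}}\bra{\hat b_{i\ell}}$ is already a singular-value decomposition, with $\{\ket{\hat a_{i\ell}}\}$ and $\{\ket{\hat b_{i\ell}}\}$ orthonormal, so $\norm{M}_1=\sum_{i,\ell}\mu_i=m\sum_i\mu_i$ and therefore $\gamma_\tl=0$.

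Finally, for the signal, \eqref{eq:rho-condition-1} gives $\trace(\sum_\ell H^{(\ell)}Z_\tl)=\sum_\ell\trace((\rho_0-\rho_1)H)=2m\norm{H-\mS}$, so \eqref{eq:error-corrected-master} with $\gamma_\tl=0$ reduces to $\frac{d\rho_\tl}{dt}=-i[\omega m\norm{H-\mS}Z_\tl+\beta_\tl(\mR)Z_\tl,\rho_\tl]$; the $\omega$-independent piece $\beta_\tl(\mR)Z_\tl$ commutes with the rest and is a known rotation that does not affect estimation of $\omega$, so discarding it yields \eqref{eq:multi-probe-signal}, up to the $O(dt)$ error of applying the recovery at discrete time steps (cf.\ \appref{app:dephasing}). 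I expect the trace-norm evaluation above to be the main obstacle: one must recognize that the $Q_k\otimes Q_k$ correction built into \eqref{eq:2-local}, combined with $\trace(Q_kL_i)=0$, is precisely what orthogonalizes $\{\ket{a_{i\ell}}\}$ (and likewise $\{\ket{b_{i\ell}}\}$) across different probe sites and makes $M$ a clean SVD whose trace norm exactly cancels the first two terms of \eqref{eq:gamma-L}; absent that structure $\norm{M}_1$ is only bounded below by $m\sum_i\mu_i$ and the cancellation breaks down.
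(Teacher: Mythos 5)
Your proof is correct, but it takes a different route from the paper's. The paper proves \lemmaref{lemma:multi-probe} by citing the black-box result of Ref.~\cite{zhou2018achieving} — that a code satisfying the Knill--Laflamme-type QEC condition $\braket{0_\tl|S|0_\tl}=\braket{1_\tl|S|1_\tl}$, $\braket{0_\tl|S|1_\tl}=0$ for all $S$ in the Lindblad span admits noiseless logical evolution with signal $\tfrac{1}{2}(\braket{0_\tl|H|0_\tl}-\braket{1_\tl|H|1_\tl})$ — and then verifying that condition for the multi-probe span $\mS_m$: since $\mS_m$ is at most $2$-local, the condition reduces to statements about the two-probe marginals, which \eqref{eq:2-local} supplies after the $Q_k\otimes Q_k$ terms are killed by $\trace(Q_kL_i)=0$ and a symmetrization identity handles $\trace((\rho_0^{\otimes 2}-\rho_1^{\otimes 2})(L_i^\dagger\otimes L_j))=0$. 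You instead bypass the citation and run the explicit recovery-channel machinery of \appref{app:dephasing}: you verify $\frakL_0\perp\frakL_1$ from the vanishing off-diagonal marginals, and then show $\gamma_\tl=0$ by proving that $\{L_i^{(\ell)}\ket{0_\tl}\}$ and $\{L_i^{(\ell)}\ket{1_\tl}\}$ are orthogonal families with Gram matrix $\mu_i\delta_{ij}\delta_{\ell\ell'}$ (the cross-site orthogonality being exactly where $Q_k\otimes Q_k$ and $\trace(Q_kL_i)=0$ enter), so that $M=\sum_{i,\ell}L_i^{(\ell)}\ket{0_\tl}\bra{1_\tl}L_i^{(\ell)\dagger}$ is already in singular-value-decomposed form and $\norm{M}_1=m\sum_i\mu_i$ cancels the Lindblad terms in \eqref{eq:gamma-L}. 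This is the multi-probe generalization of the paper's own $m=1$ sanity check rather than its appendix proof; it is more self-contained and makes the optimal recovery explicit, at the cost of being longer and of leaving a residual $\beta_\tl(\mR)Z_\tl$ term that you must argue away (which is legitimate — it is $\omega$-independent and, in the gauge you fix, the optimal recovery makes $x$ real so $\beta_\tl$ reduces to the known rotation induced by the gauge shift — but the paper's route lands on \eqref{eq:multi-probe-signal} directly). Both arguments hinge on the same structural point: $2$-locality of the Lindblad span means the $2$-local marginal condition \eqref{eq:2-local} is all that is needed.
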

\begin{proof}
We will use a result from \cite{zhou2018achieving}, which states that if for a two-dimensional QEC code in $\mH_P \otimes \mH_A$, the QEC condition 
\begin{equation}
    \braket{0_\tl|S|0_\tl} = \braket{1_\tl|S|1_\tl},\quad \braket{0_\tl|S|1_\tl} = 0,\quad \forall S \in \mS = {\rm span}\{\id,L_i^{\dagger},L_i,L_i^{\dagger} L_j,\;\forall i,j\}, 
\end{equation}
is satisfied, then there exists a QEC strategy such that a logical state $\rho_\tl$ evolves as (up to an arbitrarily small error)
\begin{equation}
    \frac{d\rho_\tl}{dt} = -i\left[\omega \frac{\braket{0_\tl|H|0_\tl} - \braket{1_\tl|H|1_\tl}}{2} Z_\tl, \rho_\tl\right]. 
\end{equation}
In the multi-probe case, we only need to replace $\mS$ with a multi-probe Lindblad span $\mS_m$, where 
\begin{equation}
    \mS_m = {\rm span}\{\id,L_i^{\dagger(\ell)},L_i^{(\ell)},L_i^{\dagger(\ell)} L_j^{(\ell')},\;\forall i,j,\ell,\ell'\}, 
\end{equation}
and the single-probe Hamiltonian $H$ with the multi-probe Hamiltonian $\sum_{\ell=1}^m H^{(\ell)}$. Noting that $\mS_m$ contains at most $2$-local operators, i.e., operators that act as identity on all but two probes, and using \eqref{eq:2-local} and 
\begin{equation}
    \trace((Q_0\otimes Q_0)S_2) = \trace((Q_1\otimes Q_1)S_2) = 0,\quad \forall S_2 \in \mS_2 = {\rm span}\{S\otimes \id,\id\otimes S,L_i^\dagger\otimes L_j,\forall i,j,S\in\mS\}
\end{equation}
from the definition of $Q_k$, the multi-probe QEC condition translates to 
\begin{equation}
    \trace((\rho_0\otimes \rho_0)S_2) = \trace((\rho_1\otimes \rho_1)S_2),\quad \forall S_2 \in \mS_2 = {\rm span}\{S\otimes \id,\id\otimes S,L_i^\dagger\otimes L_j,\forall i,j,S\in\mS\}. 
\end{equation}
The above equation holds true because for all $S \in \mS$,
\begin{equation}
    \trace((\rho_0\otimes \rho_0-\rho_1\otimes \rho_1)(S\otimes \id)) = \trace((\rho_0\otimes \rho_0-\rho_1\otimes \rho_1)(\id \otimes S)) = \trace((\rho_0-\rho_1)S) = 0, 
\end{equation}
and for all $i,j$, 
\begin{equation}
    \trace((\rho_0\otimes \rho_0-\rho_1\otimes \rho_1)(L_i^\dagger\otimes L_j)) = 
    \frac{1}{2}\trace(( (\rho_0+\rho_1)\otimes (\rho_0-\rho_1) + (\rho_0-\rho_1)\otimes (\rho_0+\rho_1) )(L_i^\dagger\otimes L_j)) = 0.  
\end{equation}
Finally, we have \eqref{eq:multi-probe-signal}, noticing that 
\begin{equation}
    \frac{1}{2}\left(\braket{0_\tl|\sum_{\ell=1}^m H^{(\ell)}|0_\tl} - \braket{1_\tl|\sum_{\ell=1}^m H^{(\ell)}|1_\tl}\right) = \frac{1}{2} m \trace((\rho_0-\rho_1)H) = m \norm{H-\mS}. 
\end{equation}
\end{proof}

\begin{figure}[tb]
    \centering
    \includegraphics[width=0.5\textwidth]{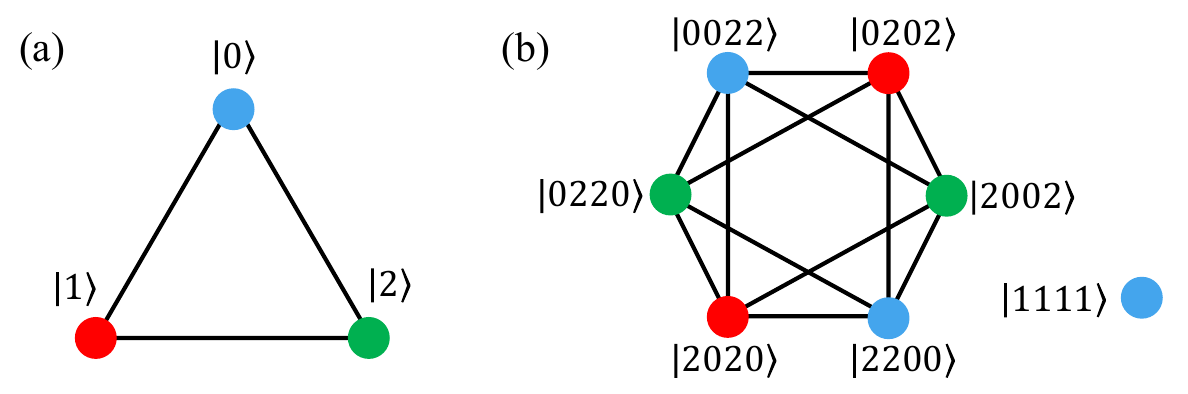}
    \caption{Dimension of ancilla. (a) The dimension of ancilla in the $(1,1)$-qutrit code is $3$. Here we draw a graph where each vertex represents the probe state  $\{\ket{0},\ket{1},\ket{2}\}$ and there is an edge between two different states $\ket{w}$ and $\ket{w'}$ if and only if there exists $S\in\mS$ such that $\braket{w|S|w'} \neq 0$. We use the color of vertices to represent the (orthonormal) basis of the ancilla. We require adjacent vertices to have different colors which is a sufficient condition for QEC. Then at least $3$ colors are required. (b) The dimension of ancilla in the $(4,1)$-qutrit code is $3$. Here we draw a graph where each vertex represents the probe state $\{\ket{0022},\ldots,\ket{1111}\}$ (where $\ldots$ includes all permutations of $\ket{0022}$) and there is an edge between two different states $\ket{w}$ and $\ket{w'}$ if and only if there exists $S_2\in\mS_2$ and $\ell\neq \ell'$ such that $\braket{w|S_2^{(\ell,\ell')}|w'} \neq 0$, where $S_2^{(\ell,\ell')}$ means $S_2$ acting on the $\ell$-th and $\ell'$-th probes (or equivalently, if and only if $w$ can be obtained from $w'$ by swapping two sites with different states). We require adjacent vertices to have different colors which is a sufficient condition for QEC. Then at least $3$ colors are required. }
    \label{fig:qutrit}
\end{figure}

\subsection{\texorpdfstring{$(4, 1)$}{(4,1)}-qutrit code}

We aim at finding a multi-probe QEC code in $\mH_P^{\otimes m}\otimes \mH_A$ that satisfies the sufficient condition in \lemmaref{lemma:multi-probe}. Assume $m = 4$. Since $\rho_1$ is a rank-one operator, we can define 
\begin{equation}
    \ket{1_\tl} = \ket{1111}_{P^{\otimes 4}}\ket{0}_A,
\end{equation}
which satisfies $\trace_{\backslash\{\ell,\ell'\}}(\ket{1_\tl}\bra{1_\tl}) = \rho_1 \otimes \rho_1$ (and $Q_1 = 0$). On the other hand, we define  \begin{equation}
\ket{0_\tl} = \frac{1}{\sqrt{6}} \big(  ( \ket{0022}_{P^{\otimes 4}}+\ket{2200}_{P^{\otimes 4}})\ket{0}_A + (\ket{0202}_{P^{\otimes 4}}+\ket{2020}_{P^{\otimes 4}})\ket{1}_A + (\ket{0220}_{P^{\otimes 4}}+\ket{2002}_{P^{\otimes 4}})\ket{2}_A\big),
\end{equation}
where the numbers of appearances of $\ket{0}$ and $\ket{2}$ are tuned so that $\trace_{\backslash\{\ell,\ell'\}}(\ket{0_\tl}\bra{0_\tl}) = \rho_0 \otimes \rho_0 - Q_0 \otimes Q_0$ and $Q_0 = \frac{1}{\sqrt{12}}(\ket{0}\bra{0}-\ket{2}\bra{2})$. Note that we also use additional ancillas so that $\trace_{\backslash\{\ell,\ell'\}}(\ket{0_\tl}\bra{0_\tl})$ contains no non-diagonal terms $\ket{02}\bra{20}$ and $\ket{20}\bra{02}$. The minimum dimension of ancillas required is $3$, as illustrated in \figref{fig:qutrit}. Finally, we have $\trace_{\backslash\{\ell,\ell'\}}(\ket{0_\tl}\bra{1_\tl}) = 0$ because $\ket{1}$ is perpendicular to ${\rm span}\{\ket{0},\ket{2}\}$. Therefore, \eqref{eq:2-local} is satisfied by the $(4,1)$-qutrit code. Using \lemmaref{lemma:multi-probe}, we obtain the logical dynamics
\begin{equation}
    \frac{d\rho_\tl}{dt} = - i \left[\frac{\omega m}{2}Z_\tl,\rho_\tl\right]. 
\end{equation}

\subsection{Advantage of a small ancilla}

The performance of the $(1,1)$-qutrit code and the $(4,1)$-qutrit code are equally optimal in the following sense: if we have $N$ probes that evolves for a probing time $t$, then the QEC strategy using both encodings (where arbitrary numbers of noiseless ancillas are allowed) can prepare a quantum state $\rho_\omega(t)$ that has QFI (up to an arbitrarily small error)
\begin{equation}
    F(\rho_\omega(t)) = 4\norm{H-\mS}^2N^2t^2  = N^2t^2. 
\end{equation}

In practice, however, the $(4,1)$-qutrit code is superior to the $(1,1)$-qutrit code when the noiseless ancilla is not considered as a free resource, which is true because it is must be built of exceedingly stable qubits or noisy qubits under \sisi{code concatenation}. In order to see the advantage of the $(4,1)$-qutrit code, consider the following overly-simplified but illustrative way of resource counting: a qutrit probe and a qutrit noiseless ancilla are considered as two equivalent units of resource and other resources (e.g., QEC operations) are considered free. Then using the $(1,1)$-qutrit code, the achievable QFI using $N_{\textsc{u}}$ units of resource in time $t$ is (assuming $N_{\textsc{u}}$ is a multiple of $2$) 
\begin{equation}
    F(\rho_\omega(t)) = \frac{1}{4}N_{\textsc{u}}^2t^2, 
\end{equation} 
and the optimal initial state is 
\begin{equation}
    \ket{\psi_{\rm in}} = \frac{1}{\sqrt{2}}\left(\ket{0_\tl}^{\otimes N_{\textsc{u}}/2}+\ket{1_\tl}^{\otimes N_{\textsc{u}}/2}\right). 
\end{equation}
On the other hand, using the $(4,1)$-qutrit code, the achievable QFI using $N_{\textsc{u}}$ units of resource in time $t$ is (assuming $N_{\textsc{u}}$ is a multiple of $5$)  
\begin{equation}
    F(\rho_\omega(t)) = \frac{16}{25}N_{\textsc{u}}^2t^2, 
\end{equation} 
and the optimal initial state is 
\begin{equation}
    \ket{\psi_{\rm in}} = \frac{1}{\sqrt{2}}\left(\ket{0_\tl}^{\otimes 4N_{\textsc{u}}/5}+\ket{1_\tl}^{\otimes 4N_{\textsc{u}}/5}\right). 
\end{equation}
Clearly, the smaller the ratio between the number of noiseless ancillary qutrits and the number of probe qutrits is, the larger QFI of the final state is. The reason is the QFI is a function of the number of probes only and the more probes there are, the larger the QFI will be. In the case where noiseless ancillas and probes are equivalent units of resource, the QFI of ancilla-free strategy is at most 4 times the QFI of ancilla-assisted strategy; while in practice when noiseless ancillas are more expensive than probes, the advantage of ancilla-free strategies will be even more significant.

\section{Upper bound on the dimension of ancilla in small-ancilla codes}
\label{app:graph}

Here we put an upper bound on the highest $\dim(\mH_A)$ needed, the dimension of the noiseless ancilla required in the small-ancilla code (\eqref{eq:code-few-ancilla}). $\dim(\mH_A) = \abs{\frakI}$, where the set $\frakI := \{\fraki_k(w),\forall k=0,1, w\in W_0\cup W_1\}$ should be large enough such that $\fraki_k(w)$ (for $k=0,1$) can be chosen to be any integer functions that satisfy $\fraki_k(w)\neq \fraki_k(w')$ when $w$ and $w'$ are different on exactly two sites. 
Calculating the minimum number of values of $\fraki_0(w)$ required so that $\fraki_0(w)\neq \fraki_0(w')$ when $w$ and $w'$ are different on exactly two sites is essentially a graph coloring problem (see a qutrit example in \figbref{fig:qutrit}), where the elements of $W_0$ are vertices of the graph, and there is an edge between $w$ and $w'$ as long as $w'$ can be obtained from $w$ by swapping two sites with different numbers. We want to know the minimum number of colors required to color the vertices of the graph such that every two adjacent vertices have different colors. Note that the graph is regular---every vertex has the same degree (the number of vertices adjacent to it), which we denote by $\deg(W_0)$. Similar statements also hold for $\fraki_1(w)$. Then, the minimum value of $\dim(\mH_A) = \abs{\frakI}$ required is at most $\max\{\deg(W_0)+1,\deg(W_1)+1\}$, from the Brook's theorem~\cite{brooks1941colouring}. For $k=0$, we have 
\begin{equation}
\deg(W_0)+1 = \sum_{i\neq j \leq d_0 - 1} m_im_j+1 \leq d^2 m^2, 
\end{equation}
and similarly,
\begin{equation}
\deg(W_1)+1 = \sum_{i\neq j \geq d_0} m_im_j+1 \leq d^2 m^2. 
\end{equation}
Therefore we only need an exponentially small ancilla
\begin{equation}
\dim(\mH_A) \leq d^2 m^2 = o(d^m),
\end{equation}
compared to the system of $m$ probes.

\section{Logical master equations: logical Pauli-Z signal under dephasing noise}
\label{app:dephasing}

Consider a logical quantum state $\rho_\tl$ that contains $m$ probes satisfying $\rho_\tl = P \rho_\tl P$ at time $t$. Without QEC, the quantum state at time $t + dt$ satisfies 
\begin{equation}
    \rho_\tl + d\rho_\tl = \rho_\tl 
+ \left( -i\left[\omega \sum_{\ell=1}^m H^{(\ell)},\rho_\tl\right] + \sum_{i=1}^r \sum_{\ell=1}^m L_i^{(\ell)}\rho_\tl L_i^{(\ell) \dagger }
- \frac{1}{2}\left\{L_i^{(\ell)\dagger}L_i^{(\ell)},\rho_\tl\right\}\right) dt + O(dt^2). 
\end{equation}
If we apply the QEC operation $\mP + \mR\circ\mP_\perp$ on $\rho_\tl$ instantaneously at time $t + dt$, we have 
\begin{equation}
    \rho_\tl + d\rho_\tl 
    = \rho_\tl + (\mP + \mR\circ\mP_\perp)\Bigg(-i\left[\omega \sum_{\ell=1}^m H^{(\ell)},\rho_\tl\right] + \sum_{i=1}^r \sum_{\ell=1}^m L_i^{(\ell)}\rho_\tl L_i^{(\ell) \dagger }
- \frac{1}{2}\left\{L_i^{(\ell)\dagger}L_i^{(\ell)},\rho_\tl\right\}\Bigg) dt + O(dt^2). 
\end{equation}
In the limit of infinitely fast QEC (i.e., $dt \rightarrow 0$), we derive~\eqref{eq:QEC-operation}: 
\begin{equation}
    \frac{d\rho_\tl}{dt} = -i\left[\omega \mP\left(\sum_{\ell=1}^m H^{(\ell)}\right),\rho_\tl\right] + \sum_{i=1}^r \sum_{\ell=1}^m \mP\left(L_i^{(\ell)}\rho_\tl L_i^{(\ell) \dagger }\right) 
+ \mR\left(\mP_\perp\left(L_i^{(\ell)}\rho_\tl L_i^{(\ell)\dagger}\right)\right)
- \frac{1}{2}\left\{\mP\left(L_i^{(\ell)\dagger}L_i^{(\ell)}\right),\rho_\tl\right\}. 
\end{equation}
Note that $\mP \circ \mR = \mR$ guarantees that the quantum state always stays in the logical space. 
Furthermore, we assume $\frakL_0 \perp \frakL_1$, where $\frakL_k = {\rm span}\{\ket{k_\tl},L_i^{(\ell)}\ket{k_\tl},\forall i,\ell\}$ for $k=0,1$, and \sisinew{the recovery channel
\begin{equation}
\label{eq:recovery}
    \mR(\cdot) = \sum_p (\ket{0_\tl}\bra{R_p}+\ket{1_\tl}\bra{S_p})(\cdot) (\ket{R_p}\bra{0_\tl}+\ket{S_p}\bra{1_\tl}) 
    +  \trace((\cdot)(\Pi_0-\tPi_R)) \ket{0_\tl}\bra{0_\tl}
    +  \trace((\cdot)(\id - \Pi_0 - \tPi_S)) \ket{1_\tl}\bra{1_\tl} ,
\end{equation}
where $\{\ket{R_p},\forall p\}$, $\{\ket{S_p},\forall p\}$ are two sets of vectors such that $\Pi_0 - \tPi_{R}$ and $\Pi_1 - \tPi_{S}$ are positive semidefinite. We use $\Pi_{0,1}$ to denote projections onto $\frakL_{0,1}$ and we define $\tPi_{R} := \sum_p \ket{R_p}\bra{R_p}$ and $\tPi_{S} := \sum_p \ket{S_p}\bra{S_p}$.}
Then by considering the input and output of the basis of the logical qubit operators $\{\ket{0_\tl}\bra{0_\tl},\ket{0_\tl}\bra{1_\tl},\ket{1_\tl}\bra{0_\tl},\ket{1_\tl}\bra{1_\tl}\}$, we can see that the logical dynamic is described by a dephasing noise and a Pauli-Z rotation, that is, for any $\rho_\tl = P \rho_\tl P$, 
\begin{equation}
\frac{d\rho_\tl}{dt} = -i\left[\frac{\omega \trace(\sum_\ell H^{(\ell)} Z_\tl)}{2} Z_\tl + \frac{\beta_\tl(\mR)}{2} Z_\tl,\rho_\tl\right] + \frac{\gamma_\tl(\mR)}{2}(Z_\tl \rho_\tl Z_\tl - \rho_\tl),
\end{equation}
where $Z_\tl = \ket{0_\tl}\bra{0_\tl} - \ket{1_\tl}\bra{1_\tl}$, 
\begin{equation}
\gamma_\tl(\mR) = -\frac{1}{2}(x+x^*),\quad 
\beta_\tl(\mR) = -\frac{1}{2i}(x-x^*),
\end{equation}
and 
\begin{multline}
x= \sum_{i=1}^r \sum_{\ell=1}^m \bra{0_\tl} L_i^{(\ell)} \ket{0_\tl} \bra{1_\tl} L_i^{(\ell)\dagger} \ket{1_\tl}- \frac{1}{2}(\bra{0_\tl} L_i^{(\ell)\dagger} L_i^{(\ell)} \ket{0_\tl} + \bra{1_\tl} L_i^{(\ell)\dagger} L_i^{(\ell)} \ket{1_\tl}) \\ + \sum_{p}\bra{R_p}(\id - \ket{0_\tl}\bra{0_\tl})(L_i^{(\ell)} \ket{0_\tl}\bra{1_\tl} L_i^{(\ell)\dagger} )(\id - \ket{1_\tl}\bra{1_\tl})\ket{S_p}. 
\end{multline}
In particular, we are going to pick the optimal $\mR$ of the form \eqref{eq:recovery} such that $\gamma_\tl(\mR)$ is minimized, and we define 
\begin{multline}
\gamma_\tl := \min_{\mR  \text{ s.t.~\eqref{eq:recovery}}} \gamma_\tl(\mR) = \sum_{i=1}^r \sum_{\ell=1}^m - \Re[\bra{0_\tl} L_i^{(\ell)} \ket{0_\tl} \bra{1_\tl} L_i^{(\ell)\dagger} \ket{1_\tl}] +  \frac{1}{2}(\bra{0_\tl} L_i^{(\ell)\dagger} L_i^{(\ell)} \ket{0_\tl} + \bra{1_\tl} L_i^{(\ell)\dagger} L_i^{(\ell)} \ket{1_\tl}) \\ - \norm{ \sum_{i=1}^r \sum_{\ell=1}^m (\id - \ket{0_\tl}\bra{0_\tl})(L_i^{(\ell)} \ket{0_\tl}\bra{1_\tl} L_i^{(\ell)\dagger} )(\id - \ket{1_\tl}\bra{1_\tl})}_1, 
\end{multline}
where we use $\max_{U:U^\dagger U = I} \Re[\trace(BU)] = \norm{B}_1$ for any $B$ whose number of columns are no smaller than number of rows.  

\section{Achieving the optimal HL: Proofs and Numerics %of \texorpdfstring{\eqref{eq:limit-1}}{Eq.(24)} and \texorpdfstring{\eqref{eq:limit-2}}{Eq.(25)}
}
\label{app:limit}

\subsection{Overview}

In this appendix, we prove \eqref{eq:limit-1} and \eqref{eq:limit-2} for each code, which implies the achievability of the optimal HL with respect to the number of probes, as discussed in the main text. \sisi{We also present a numerical simulation result at the end of this section, demonstrating the decaying of $\gamma_\tl$ with respect to $m$ in a qutrit example. }

First, we note that that for both codes, 
\begin{gather}
    \lim_{m\rightarrow \infty}\tilde \rho_0 = \sum_{i=0}^{d_0-1} \lim_{m\rightarrow \infty} \frac{m_i}{m}\ket{i}_P\bra{i}_P = \rho_0, \\
    \lim_{m\rightarrow \infty}\tilde \rho_1 = \sum_{i=d_0}^{d-1} \lim_{m\rightarrow \infty} \frac{m_i}{m}\ket{i}_P\bra{i}_P = \rho_1. 
\end{gather}
Then \eqref{eq:limit-1} is proven, using \eqref{eq:rho-condition-1}. 

We now briefly sketch the proof of \eqref{eq:limit-2} and provide the details later. (Note that we will need to use the fact that, for both codes, $\frakL_0 \perp \frakL_1$, where $\frakL_k = {\rm span}\{\ket{k_\tl},L_i^{(\ell)}\ket{k_\tl},\forall i,\ell\}$ for $k=0,1$.) 
Using the gauge constraints (\eqref{eq:gauge-1} and \eqref{eq:gauge-2}), we have 
\begin{equation}
\label{eq:gamma-code}
    \gamma_\tl = m \bigg(\sum_{i=1}^r {\mu}_i - \Re[b_{0,i}^*b_{1,i}] + \frac{1}{2}(a_{0,ii}+a_{1,ii})\bigg) - \norm{B}_1,
\end{equation}
where $b_{k,i} = \braket{k_\tl|L_i^{(\ell)}|k_\tl}$, $a_{k,ij} = \braket{k_\tl|L_i^{(\ell)\dagger}L_j^{(\ell)}|k_\tl} - \mu_{ij}\delta_{ij}$ for all $k=0,1$,  $i,j=1,\ldots,r$, and 
\begin{equation}
\label{eq:B}
B=\sum_{i=1}^r \sum_{\ell=1}^m (L_i^{(\ell)} - b_{0,i})\ket{0_\tl}\bra{1_\tl}(L_i^{(\ell)} - b_{1,i})^\dagger.     
\end{equation}
The second and third terms in \eqref{eq:gamma-code} contribute to $O(1)$ in $\gamma_\tl$ because $b_{k,i}$ and $a_{k,ij}$ are $O(1/m)$ from \eqref{eq:rounding}. 
\sisilong{We will prove later that the fourth term $\norm{B}_1$ is equal to $m \sum_{i=1}^r {\mu}_i$ up to a constant error, i.e, we will have the following theorem 
\begin{theorem}[Constant error rate]
\label{thm:lower}
For the small-ancilla code, the operator $B$ defined in \eqref{eq:B} satisfies 
\begin{equation}
\label{eq:lower-3}
    \norm{B}_1 \geq m \sum_{i=1}^r {\mu}_i - O(1), 
\end{equation} 
which implies $\gamma_\tl = O(1)$ for the small-ancilla code. 
For the ancilla-free random code, the above inequality holds true with probability $1 - e^{-\Omega(m)}$, which implies $\gamma_\tl = O(1)$ for the ancilla-free random code with probability $1 - e^{-\Omega(m)}$.
\end{theorem}

Below, we will prove \eqref{eq:lower-3} for the small-ancilla code in \appref{app:limit-1} and for the ancilla-free random code in \appref{app:limit-2}, respectively, which complete the proof of \thmref{thm:lower}. In these proofs,  
we will use the H\"older's inequality~\cite{baumgartner2011inequality} 
\begin{equation}
    \norm{B}_1 \geq \frac{\abs{\trace(B V)}}{\norm{V}_\infty},
\end{equation} 
where $\norm{V}_\infty$ is the spectral norm (maximum singular value) of $V$ and we will find an approximately unitary matrix $V$ such that 
\begin{equation}
    \abs{\trace(B V)} = m \sum_{i=1}^r {\mu}_i - O(1)  \text{~~and~~} \norm{V}_\infty = 1 + O(1/m).  
\end{equation}

Note that we use some of the proof techniques we use in \appref{app:limit-2} and  \appref{app:limit-1} are identical. However, we repeat them for the completeness and self-consistency of the proofs.}

\subsection{Proof of \texorpdfstring{\eqref{eq:lower-3}}{Eq.(E5)} for the small-ancilla code}
\label{app:limit-1}

The small-ancilla code (\eqref{eq:code-few-ancilla}) is symmetric (i.e., permutation-invariant) in the probe system. For all $\ell$ and $\ell' \neq \ell$, we have  
\begin{align}
\label{eq:C2}
b_{0,i} &:= \braket{0_\tl|L_i^{(\ell)}|0_\tl} 
= \trace\left(\sum_{k=0}^{d_0-1} \left(\frac{m_k}{m}-\lambda_k\right) \ket{k}\bra{k} \cdot L_i\right) = O\left(\frac{1}{m}\right),\\
\label{eq:C3}
b_{1,i} &:= \braket{1_\tl|L_i^{(\ell)}|1_\tl} 
= \trace\left(\sum_{k=d_0}^{d-1} \left(\frac{m_k}{m}-\lambda_k\right) \ket{k}\bra{k} \cdot L_i\right) = O\left(\frac{1}{m}\right),\\
\label{eq:C4}
a_{0,ij} &:= \braket{0_\tl|L_i^{(\ell)\dagger}L_j^{(\ell)}|0_\tl} - \mu_{i}\delta_{ij} = \trace\left(\sum_{k=0}^{d_0-1} \left(\frac{m_k}{m}-\lambda_k\right) \ket{k}\bra{k} \cdot L_i^{\dagger}L_j\right)   =  O\left(\frac{1}{m}\right),\\
\label{eq:C5}
a_{1,ij} &:= \braket{1_\tl|L_i^{(\ell)\dagger}L_j^{(\ell)}|1_\tl} - \mu_{i}\delta_{ij} = \trace\left(\sum_{k=d_0}^{d-1} \left(\frac{m_k}{m}-\lambda_k\right) \ket{k}\bra{k} \cdot L_i^{\dagger}L_j\right) =  O\left(\frac{1}{m}\right),\\
\label{eq:C6}
\eta_{0,ij} &:= \braket{0_\tl|L_{i}^{(\ell)\dagger}L_{j}^{(\ell')}|0_\tl} = \frac{m b_{0,i}^*b_{0,j}}{m-1} - \sum_{k=0}^{d_0-1} \frac{m_k}{m(m-1)} \bra{k}L_{i}^{\dagger}\ket{k}\bra{k}L_{j}\ket{k} = O\left(\frac{1}{m}\right),\\
\label{eq:C7}
\eta_{1,ij} &:= \braket{1_\tl|L_{i}^{(\ell)\dagger}L_{j}^{(\ell')}|1_\tl} = \frac{m b_{1,i}^*b_{1,j}}{m-1} - \sum_{k=d_0}^{d-1} \frac{m_k}{m(m-1)} \bra{k}L_{i}^{\dagger}\ket{k}\bra{k}L_{j}\ket{k} = O\left(\frac{1}{m}\right). 
\end{align}
To derive the equations above, we use the gauge constraints (\eqref{eq:gauge-1} and \eqref{eq:gauge-2}) and the fact that (we will omit the subscript ${}_P$ denoting the probe systems in this appendix)
\begin{equation}
    \trace_{\backslash\{\ell,\ell'\}}(\ket{0_\tl}\bra{0_\tl}) = \sum_{i,j=0}^{d_0-1} m_{ij} \ket{ij}\bra{ij}, \quad \trace_{\backslash\{\ell,\ell'\}}(\ket{1_\tl}\bra{1_\tl}) = \sum_{i,j=d_0}^{d-1} m_{ij} \ket{ij}\bra{ij}, 
\end{equation}
where 
\begin{equation}
m_{ij} = 
\frac{1}{m(m-1)} \times
\begin{cases}
m_i m_j, & i\neq j,\\
m_i (m_i - 1),& i = j. 
\end{cases}
\end{equation}
In particular, it implies 
\begin{equation}
\label{eq:1-local}
        \tilde\rho_0 = \trace_{\backslash\{\ell\}}(\ket{0_\tl}\bra{0_\tl}) = \sum_{i=0}^{d_0-1} \frac{m_{i}}{m} \ket{i}\bra{i}, \quad 
        \tilde\rho_1 =  \trace_{\backslash\{\ell\}}(\ket{1_\tl}\bra{1_\tl}) = \sum_{i=d_0}^{d-1} \frac{m_{i}}{m} \ket{i}\bra{i}, 
\end{equation}

Note that the multi-probe QEC condition (\eqref{eq:sensing-condition-2}, taking $Q_k = 0$) approximately holds true because 
\begin{equation}
    \lim_{m\rightarrow \infty}\trace_{\backslash\{\ell,\ell'\}}(\ket{0_\tl}\bra{0_\tl}) = \rho_0 \otimes \rho_0, \quad \lim_{m\rightarrow \infty}\trace_{\backslash\{\ell,\ell'\}}(\ket{1_\tl}\bra{1_\tl}) = \rho_1 \otimes \rho_1,
\end{equation}
However, for a finite $m$, the multi-probe QEC condition is only approximate true (when choosing $Q_k = 0$) and $\gamma_\tl$ can be positive (except for special cases where all parameters $b_{k,i}$, $a_{k,ij}$ and $\eta_{k,ij}$ are exactly zero and then $\gamma_\tl = 0$, e.g., the repetition code and the $(4,1)$-qutrit code). We will now analyze the scaling of $\gamma_\tl$ with respect to $m$. 

As argued in the previous discussion, to show $\gamma_\tl = O(1)$, our goal is to prove $\norm{B}_1 \geq m \sum_{i=1}^r {\mu}_i - O(1)$. 
One possible way to put a lower bound on $B$ is to find a matrix $V$ and use 
\begin{equation}
\label{eq:lower}
\norm{B}_1 \geq \frac{|\trace(BV)|}{\norm{V}_\infty}. 
\end{equation}
\sisinewlong{
In particular, we will choose 
\begin{equation}
V = \sum_{i=1}^r \sum_{\ell=1}^{m-1} 
\ket{f_{i,\ell}} \bra{e_{i,\ell}},
\end{equation}
where for $i = 1,\ldots,r$ and $\ell = 1,\ldots, m-1$, 
\begin{equation}
\ket{e_{i,\ell}} = \frac{1}{\sqrt{m}}\sum_{\ell'=1}^{m} \exp\left(-i \frac{2\pi}{m}\ell \ell'\right) \ket{\widehat{J}_{0,i}^{(\ell')}},\quad 
\ket{f_{i,\ell}} = \frac{1}{\sqrt{m}}\sum_{\ell'=1}^{m} \exp\left(-i \frac{2\pi}{m}\ell \ell'\right) \ket{\widehat{J}_{1,i}^{(\ell')}},
\end{equation}
and for $i = 1,\ldots,r$, $\ell = 1,\ldots, m$ and $k = 0,1$, 
\begin{equation}
\ket{\widehat{J}_{k,i}^{(\ell)}} = \frac{1}{\sqrt{\mu_i}}(L_i^{(\ell)} - b_{k,i})\ket{k_\tl}. 
\end{equation}
Intuitively, $\{\ket{\widehat{J}_{0,i}^{(\ell)}}\}$ and $\{\ket{\widehat{J}_{1,i}^{(\ell)}}\}$ are two sets of approximately orthonormal states when $m \gg 1$. They satisfy, for all $\ell$, 
\begin{equation}
\label{eq:condition-J-small-ancilla-1}
\braket{\widehat{J}_{k,i}^{(\ell)}|\widehat{J}_{k,j}^{(\ell)}}  = \delta_{ij} - \frac{b_{k,i}^*b_{k,j}}{\sqrt{\mu_i\mu_j}} = \delta_{ij} + O\left(\frac{1}{m^{2}}\right). 
\end{equation}
and for $\ell\neq\ell'$, 
\begin{equation}
\label{eq:condition-J-small-ancilla-2}
\braket{\widehat{J}_{k,i}^{(\ell)}|\widehat{J}_{k,j}^{(\ell')}} = \frac{1}{\sqrt{\mu_i\mu_j}} \big(\eta_{k,ij} - {b_{k,i}^*b_{k,j}}\big) = O\left(\frac{1}{m}\right).
\end{equation}
$\{\ket{e_{i,\ell}}\}$ and $\{\ket{f_{i,\ell}}\}$ are Fourier transforms of $\{\ket{\widehat{J}_{0,i}^{(\ell)}}\}$ and $\{\ket{\widehat{J}_{1,i}^{(\ell)}}\}$, except that we do not define $\ket{e_{i,\ell}}$ and $\ket{f_{i,\ell}}$ when $\ell = m$ (while traditionally, the $\ell = m$ terms are also used in Fourier transform). They satisfy, for all $1 \leq \ell \leq m - 1$,  
\begin{equation}
\braket{e_{i,\ell}|e_{j,\ell'}}  = \delta_{\ell\ell'} \left(\delta_{ij} - \frac{1}{\sqrt{\mu_i\mu_j}} \eta_{0,ij}\right),\quad 
\braket{f_{i,\ell}|f_{j,\ell'}}  = \delta_{\ell\ell'} \left(\delta_{ij} - \frac{1}{\sqrt{\mu_i\mu_j}} \eta_{1,ij}\right).
\end{equation}
Note that for $\ell \neq \ell'$ and any $i,j$, unlike $\ket{\widehat{J}_{k,i}^{(\ell)}}$ and $\ket{\widehat{J}_{k,j}^{(\ell')}}$ which are approximately orthogonal to each other, $\ket{e_{i,\ell}}$ and $\ket{e_{j,\ell'}}$ (or $\ket{f_{i,\ell}}$ and $\ket{f_{j,\ell'}}$) are strictly orthogonal to each other, which is a key property we will use later. 

Our goal is to prove that
\begin{lemma}
\label{lemma:V-small-ancilla}
For the small-ancilla code, $B=\sum_{i=1}^r \sum_{\ell=1}^m (L_i^{(\ell)} - b_{0,i})\ket{0_\tl}\bra{1_\tl}(L_i^{(\ell)} - b_{1,i})^\dagger$ and $V = \sum_{i=1}^r \sum_{\ell=1}^{m-1} 
\ket{f_{i,\ell}} \bra{e_{i,\ell}}$ defined above satisfy 
\begin{equation}
\label{eq:lower-2}
\abs{\trace(BV)} = m \sum_{i=1}^r \mu_i + O(1), \quad \text{and}\quad \norm{V}_\infty \leq 1 + O\left(\frac{1}{m}\right). 
\end{equation}
Then 
\begin{equation}
    \norm{B}_1 \geq \frac{\abs{\trace(BV)}}{\norm{V}_\infty} \geq m \sum_{i=1}^r \mu_i + O(1). 
\end{equation}
\end{lemma}

\begin{proof}

First, by direct calculations and using the definitions of $\ket{e_{i,\ell}}$ and $\ket{f_{i,\ell}}$, we have 
\begin{align}
V 
= \sum_{i=1}^{r} \sum_{\ell=1}^{m-1} \ket{f_{i,\ell}}\bra{e_{i,\ell}} 
= \sum_{i=1}^{r} \left( \sum_{\ell=1}^m \frac{m-1}{m}\ket{\widehat{J}_{1,i}^{(\ell)}}  \bra{\widehat{J}_{0,i}^{(\ell)}} + \sum_{\ell\neq \ell',\ell,\ell'=1}^m \frac{-1}{m}\ket{\widehat{J}_{1,i}^{(\ell)}}  \bra{\widehat{J}_{0,i}^{(\ell')}} \right).
\end{align}
Furthermore, from 
\eqref{eq:condition-J-small-ancilla-1} and 
\eqref{eq:condition-J-small-ancilla-2}, we have 
\begin{equation}
    \sum_{i=1}^r\bra{\widehat{J}_{1,i}^{(\ell)}}  B \ket{\widehat{J}_{0,i}^{(\ell)}}
    = \sum_{i} \mu_i + O\left(\frac{1}{m}\right)\quad
    \text{and}\quad
    \sum_{i=1}^r\bra{\widehat{J}_{1,i}^{(\ell)}}  B \ket{\widehat{J}_{0,i}^{(\ell')}}
    = O\left(\frac{1}{m}\right),\quad \text{when }\ell\neq\ell'.
\end{equation}
Then we have 
\begin{align}
\label{eq:B-real}
\trace(BV) = \sum_{i,\ell} \bra{e_{i,\ell}} B \ket{f_{i,\ell}}= m \sum_{i} \mu_i + O(1),
\end{align}
proving the first equation in \eqref{eq:lower-2}. 

Meanwhile, 
\begin{align}
\norm{V}_\infty^2 &= \big\|V^\dagger V\big\|_\infty = \bigg\|\sum_{i,\ell,j,\ell'} \ket{e_{i,\ell}}\braket{f_{i,\ell}|f_{j,\ell'}}\bra{e_{j,\ell'}}\bigg\|_\infty\\
&= \bigg\|\sum_{i,j,\ell} \left(\delta_{ij} - \frac{1}{\sqrt{\mu_i\mu_j}} \eta_{1,ij}\right) \ket{e_{i,\ell}}\bra{e_{j,\ell}}\bigg\|_\infty\\
&\leq \max_{\ell \in [1,m]} \left( \Big\|\sum_{i} \ket{e_{i,\ell}}  \bra{e_{i,\ell}}\Big\|_\infty + \Big\|\sum_{i,j} \frac{\eta_{1,ij}}{\sqrt{\mu_i\mu_{j}}} \ket{e_{i,\ell}}  \bra{e_{j,\ell}}\Big\|_\infty \right), 
\end{align}
where in the last step we use the fact that $\ket{e_{i,\ell}}$ and $\ket{e_{j,\ell'}}$ are strictly orthogonal to each other for any $\ell \neq \ell'$ and $i,j$. These two terms (for any $\ell$) can be bounded through the following. Choosing an arbitrary orthonormal set of vectors $\{\ket{\widehat{\v{e}}_i}\}_{i=1}^m$, the first term satisfies
\begin{align}
\Big\|\sum_{i} \ket{e_{i,\ell}}  \bra{e_{i,\ell}}\Big\|_\infty 
&= \Big\|\Big(\sum_{i} \ket{e_{i,\ell}}\bra{\widehat{\v{e}}_i}\Big)\Big(\sum_{j} \ket{\widehat{\v{e}}_j}\bra{e_{j,\ell}}\Big)\Big\|_\infty \\
&= \Big\| \Big(\sum_{i} \ket{\widehat{\v{e}}_i}\bra{e_{i,\ell}}\Big)\Big(\sum_{j} \ket{e_{j,\ell}}\bra{\widehat{\v{e}}_j}\Big) \Big\|_\infty \\
&= \Big\| \sum_{ij} \left( \delta_{ij} - \frac{\eta_{0,ij}}{\sqrt{\mu_i\mu_j}} \right) \ket{\widehat{\v{e}}_i}\bra{\widehat{\v{e}}_j}\Big\|_\infty \leq 1 + O\left(\frac{1}{m}\right),\label{eq:approximate-proj}
\end{align}
where we use $\norm{A^\dagger A}_\infty = \norm{AA^\dagger}_\infty$ in the second step. The second term satisfies 
\begin{align}
\Big\|\sum_{i,j} \frac{\eta_{1,ij}}{\sqrt{\mu_i\mu_{j}}} \ket{e_{i,\ell}}  \bra{e_{i',\ell}}\Big\|_\infty 
&= \max_{\text{unit vector }\ket{v}}  \abs{ \sum_{i,j} \frac{\eta_{1,ij}}{\sqrt{\mu_i\mu_{j}}} \braket{v|e_{i,\ell}}  \braket{e_{j,\ell}|v} } \\
&\leq \sum_{i,j} \abs{ \frac{\eta_{1,ij}}{\sqrt{\mu_i\mu_{j}}} } \abs{\delta_{ij} - \frac{1}{\sqrt{\mu_i\mu_j}} \eta_{0,ij} } = O\left(\frac{1}{m}\right). 
\end{align}
Therefore, 
\begin{equation}
\norm{V}_\infty^2 \leq 1 + O\left(\frac{1}{m}\right),
\end{equation}
proving the second inequality in \eqref{eq:lower-2}. 

\end{proof}

To conclude, now we have, from \eqref{eq:lower-2}, $\norm{B}_1 = m \sum_i \mu_i  + O(1)$. Then 
\begin{equation}
\gamma_\tl
= m \sum_{i=1}^r {\mu}_i - \Re[b_{0,i}^*b_{1,i}] + \frac{1}{2}(a_{0,ii}+a_{1,ii}) - \norm{B}_1 =  m \sum_{i=1}^r {\mu}_i - \norm{B}_1  + O(1) = O(1). 
\end{equation}

Finally, we remark that the proof above implicitly provides one construction of an asymptotically optimal recovery channel $\mR_{\rm opt}$ of the form in \eqref{eq:recovery} for achieving the optimal HL using the small-ancilla code, where $p = (i,\ell)$ for $1 \leq i \leq r$ and $1\leq \ell \leq m-1$, 
\begin{equation}
\label{eq:optimal-recovery}
\ket{R_{(i,\ell)}} = \frac{\ket{e_{i,\ell}}}{\sqrt{\norm{\sum_{j,k}\ket{e_{j,k}}\bra{e_{j,k}}}_\infty}},\quad 
\ket{S_{(i,\ell)}} = \frac{\ket{f_{i,\ell}}}{\sqrt{\norm{\sum_{j,k}\ket{f_{j,k}}\bra{f_{j,k}}}_\infty}}.
\end{equation}
From the proof above, we have $\Re[\trace(BV)] = m\sum_{i} \mu_i + O(1)$ from \eqref{eq:B-real}. Moreover, using exactly the same technique in proving $\norm{V}_\infty = \norm{\sum_{i=1}^r \sum_{\ell=1}^{m-1} \ket{e_{i,\ell}}\bra{f_{i,\ell}}}_\infty \leq 1 + O(1/m)$, we can also prove 
\begin{gather}
\bigg\| \sum_{i=1}^r \sum_{\ell=1}^{m-1} \ket{e_{i,\ell}}\bra{e_{i,\ell}}\bigg\|_\infty \leq 1 + O\left(\frac{1}{m}\right),\label{eq:ee-matrix-upper}\\
\bigg\| \sum_{i=1}^r \sum_{\ell=1}^{m-1} \ket{f_{i,\ell}}\bra{f_{i,\ell}}\bigg\|_\infty \leq 1 + O\left(\frac{1}{m}\right).\label{eq:ff-matrix-upper}
\end{gather}
One can then verify that 
\begin{align}
    \gamma_\tl(\mR_{\rm opt}) 
    &= \sum_{i=1}^r \sum_{\ell=1}^m - \Re[\bra{0_\tl} L_i^{(\ell)} \ket{0_\tl} \bra{1_\tl} L_i^{(\ell)\dagger} \ket{1_\tl}] + \frac{1}{2}(\bra{0_\tl} L_i^{(\ell)\dagger} L_i^{(\ell)} \ket{0_\tl} + \bra{1_\tl} L_i^{(\ell)\dagger} L_i^{(\ell)} \ket{1_\tl})\\  & \qquad \qquad \qquad \qquad  - \sum_{i,\ell}\Re[\bra{R_{(i,\ell)}}(\id - \ket{0_\tl}\bra{0_\tl})(L_i^{(\ell)} \ket{0_\tl}\bra{1_\tl} L_i^{(\ell)\dagger} )(\id - \ket{1_\tl}\bra{1_\tl})\ket{S_{(i,\ell)}}]\\
    &= m \sum_{i=1}^r {\mu}_i - \Re[b_{0,i}^*b_{1,i}] + \frac{1}{2}(a_{0,ii}+a_{1,ii}) - \frac{\Re[\trace(BV)]}{\norm{\sum_{j,k}\ket{e_{j,k}}\bra{e_{j,k}}}_\infty \norm{\sum_{j,k}\ket{f_{j,k}}\bra{f_{j,k}}}_\infty}\\
    &\leq m \sum_{i=1}^r {\mu}_i + O(1) - \frac{m\sum_{i} \mu_i + O(1)}{1 + O(1/m)} = O(1). 
\end{align}
}

\subsection{Proof of \texorpdfstring{\eqref{eq:lower-3}}{Eq.(E5)} for the ancilla-free random code}
\label{app:limit-2}

For the ancilla-free random code, we again define for all $\ell$ and $\ell' \neq \ell$: 
\begin{align}
\label{eq:D2}
b_{0,i} &:= \braket{0_\tl|L_i^{(\ell)}|0_\tl} 
= \trace\left(\sum_{k=0}^{d_0-1} \left(\frac{m_k}{m}-\lambda_k\right) \ket{k}\bra{k} \cdot L_i\right) = O\left(\frac{1}{m}\right),\\
\label{eq:D3}
b_{1,i} &:= \braket{1_\tl|L_i^{(\ell)}|1_\tl} 
= \trace\left(\sum_{k=d_0}^{d-1} \left(\frac{m_k}{m}-\lambda_k\right) \ket{k}\bra{k} \cdot L_i\right)  = O\left(\frac{1}{m}\right),\\
\label{eq:D4}
a_{0,ij} &:= \braket{0_\tl|L_i^{(\ell)\dagger}L_j^{(\ell)}|0_\tl} - \mu_{i}\delta_{ij} = \trace\left(\sum_{k=0}^{d_0-1} \left(\frac{m_k}{m}-\lambda_k\right) \ket{k}\bra{k} \cdot L_i^{\dagger}L_j\right) = O\left(\frac{1}{m}\right),\\
\label{eq:D5}
a_{1,ij} &:= \braket{1_\tl|L_i^{(\ell)\dagger}L_j^{(\ell)}|1_\tl} - \mu_{i}\delta_{ij} = \trace\left(\sum_{k=d_0}^{d-1} \left(\frac{m_k}{m}-\lambda_k\right) \ket{k}\bra{k} \cdot L_i^{\dagger}L_j\right) = O\left(\frac{1}{m}\right),\\
\label{eq:D6}
\begin{split}
\eta^{(\ell,\ell')}_{0,ij} &:= \braket{0_\tl|L_{i}^{(\ell)\dagger}L_{j}^{(\ell')}|0_\tl} \!=\! \frac{m b_{0,i}^*b_{0,j}}{m-1} \!-\! \sum_{k=0}^{d_0-1} \frac{m_k}{m(m-1)} \bra{k}L_{i}^{\dagger}\ket{k}\bra{k}L_{j}\ket{k} \! +\! \sum_{\substack{k\neq k',\\ k,k'=0}}^{d_0-1} \Delta^{(\ell,\ell')}_{0,kk'} \bra{k'}L_{i}^{\dagger}\ket{k}\bra{k}L_{j}\ket{k'}\\
&=: \eta_{0,ij} + \teta^{(\ell,\ell')}_{0,ij} = O\left(\frac{1}{m}\right) + \teta^{(\ell,\ell')}_{0,ij},
\end{split}\\
\label{eq:D7}
\begin{split}
\eta^{(\ell,\ell')}_{1,ij} &:= \braket{1_\tl|L_{i}^{(\ell)\dagger}L_{j}^{(\ell')}|1_\tl} \!=\! \frac{m b_{1,i}^*b_{1,j}}{m-1} \!-\! \sum_{k=d_0}^{d-1} \frac{m_k}{m(m-1)} \bra{k}L_{i}^{\dagger}\ket{k}\bra{k}L_{j}\ket{k}  \!+\! \sum_{\substack{k\neq k',\\ k,k'=d_0}}^{d-1} \Delta^{(\ell,\ell')}_{1,kk'} \bra{k'}L_{i}^{\dagger}\ket{k}\bra{k}L_{j}\ket{k'}\\
&=: \eta_{1,ij} + \teta^{(\ell,\ell')}_{1,ij} = O\left(\frac{1}{m}\right) + \teta^{(\ell,\ell')}_{1,ij},
\end{split}
\end{align}
where $\Delta^{(\ell,\ell')}_{0,kk'}$ and $\Delta^{(\ell,\ell')}_{1,kk'}$ are defined through
\begin{gather}
\label{eq:def-Delta}
    \trace_{\backslash\{\ell,\ell'\}}(\ket{0_\tl}\bra{0_\tl}) = \sum_{i,j=0}^{d_0-1} m_{ij} \ket{ij}\bra{ij} + \sum_{\substack{i\neq j,\\ i,j=0}}^{d_0-1}\Delta^{(\ell,\ell')}_{0,ij} \ket{ij}\bra{ji},\\ 
    \trace_{\backslash\{\ell,\ell'\}}(\ket{1_\tl}\bra{1_\tl}) = \sum_{i,j=d_0}^{d-1} m_{ij} \ket{ij}\bra{ij} + \sum_{\substack{i\neq j,\\i,j=d_0}}^{d-1} \Delta^{(\ell,\ell')}_{1,ij} \ket{ij}\bra{ji}.  
\end{gather}
Note that \eqref{eq:D2}-\eqref{eq:D5} and  \eqref{eq:C2}-\eqref{eq:C5} are identical; while \eqref{eq:D6} and \eqref{eq:D7} have additional terms $\teta^{(\ell,\ell')}_{0,ij}$, $\teta^{(\ell,\ell')}_{1,ij}$ compared to \eqref{eq:C6} and \eqref{eq:C7} that are functions of $\ell,\ell'$ because the ancilla-free random code may not be symmetric like the small-ancilla code. 

\sisinewlong{
In order to show $\norm{B}_1 \geq m \sum_{i=1}^r {\mu}_i - O(1)$ with high probability, we first prove \lemmaref{lemma:Delta} which shows that $\Delta^{(\ell,\ell')}_{k,ij}$ is small with high probability. Using \lemmaref{lemma:Delta}, we will later show in \lemmaref{lemma:V-ancilla-free} that the contribution from $\Delta^{(\ell,\ell')}_{k,ij}$ is negligible with high probability and prove \eqref{eq:lower-3}. 

\begin{lemma}
\label{lemma:Delta}
$\Delta^{(\ell,\ell')}_{0,ij}$ (or $\Delta^{(\ell,\ell')}_{1,ij}$) defined above for all $\ell \neq \ell'$ and $i\neq j \in [0,d_0-1]$ (or $i\neq j \in [d_0,d-1]$) satisfies
\begin{equation}
    \prob\left(\abs{\Delta_{k,ij}^{(\ell,\ell')}} > \frac{1}{m^3},\exists k,i,j,\ell,\ell'\right) 
    = e^{-\Omega(m)}.
\end{equation}
\end{lemma}

\begin{proof}

First, note that if $d_0 = 1$, the codeword $\ket{0_\tl} = \ket{0}^{\otimes m}$ is the same as in the small-ancilla code and $\Delta_{0,ij}^{(\ell,\ell')} = 0$. Now we assume $d_0 > 1$. Then 
from
\begin{gather}
    \trace_{\backslash\{\ell,\ell'\}}(\ket{0_\tl}\bra{0_\tl}) = \sum_{i,j=0}^{d_0-1} m_{ij} \ket{ij}\bra{ij} + \sum_{\substack{i\neq j,\\i,j=0}}^{d_0-1}\Delta^{(\ell,\ell')}_{0,ij} \ket{ij}\bra{ji},
\end{gather}
we have
\begin{equation}
    \Delta^{(\ell,\ell')}_{0,ij} = \frac{1}{\abs{W_0}}\sum_{\widetilde{w} \in \widetilde{W}_{0,ij}} e^{-i \theta_{ij\widetilde{w}}} e^{i \theta_{ji\widetilde{w}}},
\end{equation}
and $\widetilde{W}_{0,ij}$  
for $i\neq j$ and $i,j \in \{0,\cdots,d_0-1\}$ 
is the set of strings of length $m-2$ which contains $m_i-1$ $i$'s, $m_j-1$ $j$'s and $m_{l}$ $l$'s for all $l \neq i,j$ and $l \in \{0,\cdots,d_0-1\}$.
We have
\begin{gather}
    \bE[\Delta^{(\ell,\ell')}_{0,ij}] = \frac{1}{\abs{W_0}}\sum_{\widetilde{w} \in \widetilde{W}_{0,ij}} \bE[e^{-i \theta_{ij\widetilde{w}}}] \bE[e^{i \theta_{ji\widetilde{w}}}] = 0, 
    \\
    \bE[\abs{\Delta^{(\ell,\ell')}_{0,ij}}^2] = \frac{1}{\abs{W_0}^2}\sum_{\widetilde{w},\widetilde{w}' \in \widetilde{W}_{0,ij}} \bE[e^{-i \theta_{ij\widetilde{w}}}e^{i \theta_{ji\widetilde{w}}}e^{i \theta_{ij\widetilde{w}'}}e^{-i \theta_{ji\widetilde{w}'}}]  = \frac{\abs{\widetilde{W}_{0,ij}}}{\abs{W_0}^2} \leq \frac{1}{\abs{W_0}}. 
\end{gather}
Note that $\abs{W_0} = \frac{m!}{m_0! \cdots m_{d_0-1}!} = e^{\Omega(m)}$. 
Using the Chebyshev's inequality, we have for all $i\neq j$, 
$    \prob\left(\Re[\Delta^{(\ell,\ell')}_{0,ij}] > \frac{1}{\sqrt{2} m^3}\right)  \leq\frac{2m^6}{\abs{W_0}}$, 
$    \prob\left(\Im[\Delta^{(\ell,\ell')}_{0,ij}] > \frac{1}{\sqrt{2}m^3}\right)\leq \frac{2m^6}{\abs{W_0}}$.
Then the union bound implies, 
\begin{equation}
    \prob\left(\abs{\Delta^{(\ell,\ell')}_{0,ij}} > \frac{1}{m^3}\right) \leq
    \begin{cases}
        \frac{4m^6}{\abs{W_0}} = e^{-\Omega(m)} & d_0 > 1,\\ 
         0  & d_0 = 1. 
    \end{cases}
\end{equation}
Similarly, 
we can show that 
\begin{equation}
    \prob\left(\abs{\Delta^{(\ell,\ell')}_{1,ij}} > \frac{1}{m^3}\right) \leq
    \begin{cases}
        \frac{4m^6}{\abs{W_1}} = e^{-\Omega(m)} & d-d_0 > 1,\\ 
         0  & d-d_0 = 1. 
    \end{cases}
\end{equation}
The total number of different $(k,i,j,\ell,\ell')$ are 
\begin{equation}
    \frac{m(m-1)}{2}{d_0(d_0-1)}{} + \frac{m(m-1)}{2}{(d-d_0)(d-d_0-1)}{} \leq m^2 d^2. 
\end{equation}
Using the union bound again, we have 
\begin{equation}
    \prob\left(\abs{\Delta_{k,ij}^{(\ell,\ell')}} > \frac{1}{m},\exists k,i,j,\ell,\ell'\right) \leq \begin{cases}
        \frac{4m^2}{\min\{\abs{W_0},\abs{W_1}\}}m^6d^2 = e^{-\Omega(m)} & d_0 > 1,\; d-d_0 > 1, \\
        \frac{4m^2}{\abs{W_1}\}}m^6d^2 = e^{-\Omega(m)} & d_0 = 1,\; d-d_0 > 1,\\
        \frac{4m^2}{\abs{W_0}\}}m^6d^2 = e^{-\Omega(m)} & d_0 > 1,\; d-d_0 = 1,\\ 
        0 & d_0 = d-d_0 = 1.\\
    \end{cases}
\end{equation}
\end{proof}

Now we prove \lemmaref{lemma:V-ancilla-free} which implies \eqref{eq:lower-3}, adopting a similar approach as we used in \lemmaref{lemma:V-small-ancilla}.

\begin{lemma}
\label{lemma:V-ancilla-free}
For the ancilla-free random code and $B=\sum_{i=1}^r \sum_{\ell=1}^m (L_i^{(\ell)} - b_{0,i})\ket{0_\tl}\bra{1_\tl}(L_i^{(\ell)} - b_{1,i})^\dagger$, there exists a matrix $V$ such that 
\begin{equation}
\label{eq:lower-4}
\abs{\trace(BV)} = m \sum_{i=1}^r \mu_i + O(1), \quad \text{and}\quad \norm{V}_\infty \leq 1 + O\left(\frac{1}{m}\right),
\end{equation}
with probability $1 - e^{-\Omega(m)}$. Then 
\begin{equation}
    \norm{B}_1 \geq \frac{\abs{\trace(BV)}}{\norm{V}_\infty} \geq m \sum_{i=1}^r \mu_i + O(1). 
\end{equation}
\end{lemma}

\begin{proof}
Here we assume for all $k,i\neq j,\ell\neq\ell'$, 
\begin{equation}
    \abs{\Delta_{k,ij}^{(\ell,\ell')}} < \frac{1}{m^3},
\end{equation}
which is true with probability $1 - e^{-\Omega(m)}$ from \lemmaref{lemma:Delta}. It implies that 
\begin{equation}
    \teta^{(\ell,\ell')}_{k,ij} = O\left(\frac{1}{m^3}\right),\quad \forall k,i,j,\ell\neq \ell'. 
\end{equation}
Let 
\begin{equation}
V = \sum_{i=1}^r \sum_{\ell=1}^{m-1} 
\ket{f_{i,\ell}} \bra{e_{i,\ell}},
\end{equation}
where for $i = 1,\ldots,r$ and $\ell = 1,\ldots, m-1$, 
\begin{equation}
\ket{e_{i,\ell}} = \frac{1}{\sqrt{m}}\sum_{\ell'=1}^{m} \exp\left(-i \frac{2\pi}{m}\ell \ell'\right) \ket{\widehat{J}_{0,i}^{(\ell')}},\quad 
\ket{f_{i,\ell}} = \frac{1}{\sqrt{m}}\sum_{\ell'=1}^{m} \exp\left(-i \frac{2\pi}{m}\ell \ell'\right) \ket{\widehat{J}_{1,i}^{(\ell')}},
\end{equation}
and for $i = 1,\ldots,r$, $\ell = 1,\ldots, m$ and $k = 0,1$, 
\begin{equation}
\ket{\widehat{J}_{k,i}^{(\ell)}} = \frac{1}{\sqrt{\mu_i}}(L_i^{(\ell)} - b_{k,i})\ket{k_\tl}. 
\end{equation}
They satisfy, for all $\ell$, 
\begin{equation}
\label{eq:condition-J-ancilla-free-1}
\braket{\widehat{J}_{k,i}^{(\ell)}|\widehat{J}_{k,j}^{(\ell)}}  = \delta_{ij} - \frac{b_{k,i}^*b_{k,j}}{\sqrt{\mu_i\mu_j}} = \delta_{ij} + O\left(\frac{1}{m^2}\right). 
\end{equation}
and for $\ell\neq\ell'$, 
\begin{equation}
\label{eq:condition-J-ancilla-free-2}
\braket{\widehat{J}_{k,i}^{(\ell)}|\widehat{J}_{k,j}^{(\ell')}} = \frac{1}{\sqrt{\mu_i\mu_j}} \big(\eta_{k,ij} - {b_{k,i}^*b_{k,j}} + \teta^{(\ell,\ell')}_{k,ij}\big) = O\left(\frac{1}{m}\right). 
\end{equation}
$\ket{e_{i,\ell}}$ and $\ket{f_{i,\ell}}$ satisfy, for all $1 \leq \ell \leq m - 1$,  
\begin{equation}
\braket{e_{i,\ell}|e_{j,\ell'}}  = \delta_{\ell\ell'} \left(\delta_{ij} - \frac{1}{\sqrt{\mu_i\mu_j}} \eta_{0,ij}\right) + \tepsilon^{(\ell,\ell')}_{0,ij},\quad 
\braket{f_{i,\ell}|f_{j,\ell'}}  = \delta_{\ell\ell'} \left(\delta_{ij} - \frac{1}{\sqrt{\mu_i\mu_j}} \eta_{1,ij}\right) + \tepsilon^{(\ell,\ell')}_{1,ij},
\end{equation}
where for $k = 0,1$, 
\begin{equation}
\tepsilon^{(\ell,\ell')}_{k,ij} = \frac{1}{m} \sum_{l,l'=1}^m \exp\left({-i\frac{2\pi}{m} (\ell l-\ell' l') } \right) \teta^{(l,l')}_{k,i,j} = O\left(\frac{1}{m^2}\right). 
\end{equation}
Note that the new terms $\tepsilon^{(\ell,\ell')}_{k,ij}$ appear in the case of the ancilla-free random code, compared to the previous small-ancilla case. However, as we will show later, the contribution of $\tepsilon^{(\ell,\ell')}_{k,ij}$ is vanishingly small.

First, by direct calculations and using the definitions of $\ket{e_{i,\ell}}$ and $\ket{f_{i,\ell}}$, we have 
\begin{equation}
V = \sum_{i=1}^{r} \sum_{\ell=1}^{m-1} \ket{f_{i,\ell}}\bra{e_{i,\ell}} 
= \sum_{i=1}^{r} \left( \sum_{\ell=1}^m \frac{m-1}{m}\ket{\widehat{J}_{1,i}^{(\ell)}}  \bra{\widehat{J}_{0,i}^{(\ell)}} + \sum_{\ell\neq \ell',\ell,\ell'=1}^m \frac{-1}{m}\ket{\widehat{J}_{1,i}^{(\ell)}}  \bra{\widehat{J}_{0,i}^{(\ell')}} \right). 
\end{equation}
Furthermore, from 
\eqref{eq:condition-J-ancilla-free-1} and 
\eqref{eq:condition-J-ancilla-free-2}, we have 
\begin{equation}
    \sum_{i=1}^r\bra{\widehat{J}_{1,i}^{(\ell)}}  B \ket{\widehat{J}_{0,i}^{(\ell)}}
    = \sum_{i} \mu_i + O\left(\frac{1}{m}\right), \quad
    \text{and}\quad
    \sum_{i=1}^r\bra{\widehat{J}_{1,i}^{(\ell)}}  B \ket{\widehat{J}_{0,i}^{(\ell')}}
    = O\left(\frac{1}{m}\right),\quad \text{when }\ell\neq\ell'.
\end{equation}
Then we have 
\begin{align}
\label{eq:B-real-random}
\trace(BV) = \sum_{i,\ell} \bra{e_{i,\ell}} B \ket{f_{i,\ell}}= m \sum_{i} \mu_i + O(1),
\end{align}
proving the first equation in \eqref{eq:lower-4}.

Meanwhile, we want to prove an upper bound of $1 + O(1/m)$ on 
\begin{align}
\norm{V}_\infty^2 &= \big\|V^\dagger V\big\|_\infty = \bigg\|\sum_{i,\ell,j,\ell'} \ket{e_{i,\ell}}\braket{f_{i,\ell}|f_{j,\ell'}}\bra{e_{j,\ell'}}\bigg\|_\infty\\ 
&= \bigg\|\sum_{i,j,\ell,\ell'} \left(\delta_{\ell\ell'} \left(\delta_{ij} - \frac{1}{\sqrt{\mu_i\mu_j}} \eta_{1,ij} \right) + \tepsilon^{(\ell,\ell')}_{1,ij} \right) \ket{e_{i,\ell}}\bra{e_{j,\ell'}}\bigg\|_\infty\\ 
& = \max_{\text{unit vector }\ket{v}} \sum_{i,j,\ell,\ell'} \left(\delta_{\ell\ell'} \left( \delta_{ij} - \frac{1}{\sqrt{\mu_i\mu_j}} \eta_{1,ij} \right) + \tepsilon^{(\ell,\ell')}_{1,ij} \right) \braket{v|e_{i,\ell}}\braket{e_{j,\ell'}|v}. 
\end{align}
Let $\ket{v} = \sum_{i,\ell} v_{i,\ell} \ket{e_{i,\ell}}$ be an arbitrary unitary vector. Then using 
\begin{align}
    1 &= \braket{v|v} = \sum_{i,j,\ell,\ell'}  v^*_{i,\ell}v_{j,\ell'} \braket{e_{i,\ell}|e_{j,\ell'}} = 
    \sum_{i,\ell}  |v_{i,\ell}|^2  
    - \sum_{i,j,\ell}  v^*_{i,\ell}v_{j,\ell}  \frac{1}{\sqrt{\mu_i\mu_j}} \eta_{0,ij} + 
    \sum_{i,j,\ell,\ell'}  v^*_{i,\ell}v_{j,\ell'}  \tepsilon^{(\ell,\ell')}_{0,ij}\\
    & \geq \sum_{i,\ell}  |v_{i,\ell}|^2  
    - \sum_{i,j,\ell}  \frac{|v_{i,\ell}|^2 + |v_{j,\ell}|^2}{2}  \frac{1}{\sqrt{\mu_i\mu_j}} |\eta_{0,ij}| - 
    \sum_{i,j,\ell,\ell'}  \frac{|v_{i,\ell}|^2 + |v_{j,\ell'}|^2}{2}   |\tepsilon^{(\ell,\ell')}_{0,ij}|\\
    & =: \sum_{i,\ell}  |v_{i,\ell}|^2  \left(1 + \nu_{i,\ell}\right), \quad \text{where }|\nu_{i,\ell}| = O\left(\frac{1}{m}\right), 
\end{align}
and 
\begin{align}
&\quad \sum_{i,j,\ell,\ell'} \left(\delta_{\ell\ell'} \left( \delta_{ij} - \frac{1}{\sqrt{\mu_i\mu_j}} \eta_{1,ij}\right)  + \tepsilon^{(\ell,\ell')}_{1,ij}\right) \braket{v|e_{i,\ell}}\braket{e_{j,\ell'}|v} \\
&= 
\sum_{i,j,\ell,\ell',k,k',q,q'} v_{k,q}^* v_{k',q'}  \left(\delta_{\ell\ell'} \left( \delta_{ij} - \frac{1}{\sqrt{\mu_i\mu_j}} \eta_{1,ij} \right)  + \tepsilon^{(\ell,\ell')}_{1,ij} \right) 
\left(\delta_{q\ell} \left( \delta_{ki} - \frac{1}{\sqrt{\mu_k\mu_i}} \eta_{0,ki} \right)  + \tepsilon^{(q,\ell)}_{0,ki}\right)
\nonumber \\ & \qquad \qquad \qquad  \qquad \qquad \qquad  \qquad \qquad \qquad  \qquad \qquad \qquad  \times \left(\delta_{\ell'q'} \left( \delta_{jk'} - \frac{1}{\sqrt{\mu_j\mu_{k'}}} \eta_{0,jk'} \right)  + \tepsilon^{(\ell',q')}_{0,jk'}\right)\\
&\leq \sum_{k,k',q,q'}\frac{|v_{k,q}|^2 + |v_{k',q'}|^2}{2} \Bigg| \sum_{i,j,\ell,\ell'} \left(\delta_{\ell\ell'}  \left( \delta_{ij} - \frac{1}{\sqrt{\mu_i\mu_j}} \eta_{1,ij} \right)  + \tepsilon^{(\ell,\ell')}_{1,ij} \right) 
\left(\delta_{q\ell} \left( \delta_{ki} - \frac{1}{\sqrt{\mu_k\mu_i}} \eta_{0,ki} \right)  + \tepsilon^{(q,\ell)}_{0,ki}\right) 
\nonumber \\ & \qquad \qquad \qquad  \qquad \qquad \qquad  \qquad \qquad \qquad  \qquad \qquad \qquad  \times \left(\delta_{\ell'q'} \left( \delta_{jk'} - \frac{1}{\sqrt{\mu_j\mu_{k'}}} \eta_{0,jk'} \right)  + \tepsilon^{(\ell',q')}_{0,jk'}\right) \Bigg|\\
&=: \sum_{i,\ell} |v_{i,\ell}|^2 \left(1 + \tilde\nu_{i,\ell}\right), \quad \text{where }|\tilde\nu_{i,\ell}| = O\left(\frac{1}{m}\right), 
\end{align}
we have 
\begin{align}
    \norm{V}_\infty^2 
    &\leq \max_{\text{unit vector }\ket{v}} \sum_{i,\ell} |v_{i,\ell}|^2 \left(1 + \tilde\nu_{i,\ell}\right) \leq  \max_{\text{unit vector }\ket{v}} \sum_{i,\ell} |v_{i,\ell}|^2 \left(1 + \tilde\nu_{i,\ell} + \nu_{i,\ell} - \nu_{i,\ell}\right)\\
    &\leq 1 + \max_{\text{unit vector }\ket{v}} \sum_{i,\ell} |v_{i,\ell}|^2 \abs{ 1 + \nu_{i,\ell}}  \abs{ \frac{\tilde\nu_{i,\ell}- \nu_{i,\ell}}{1 + \nu_{i,\ell}} } = 1 + O\left(\frac{1}{m}\right),
\end{align}
where the last step follows from $\tilde\nu_{i,\ell} = O(1/m)$, 
$\nu_{i,\ell} = O(1/m)$ and $\sum_{i,\ell} |v_{i,\ell}|^2 (1 + \nu_{i,\ell} ) = 1$. 
\end{proof}

Finally, we remark that $\ket{e_{i,\ell}}$ and $\ket{f_{i,\ell}}$ in the proof above generate an asymptotically optimal recovery channel $\mR_{\rm opt}$ of the form in \eqref{eq:recovery} with probability $1 - e^{-\Omega(m)}$ for achieving the optimal HL using the ancilla-free random code, where $p = (i,\ell)$ for $1 \leq i \leq r$ and $1\leq \ell \leq m-1$, 
\begin{equation}
\label{eq:optimal-recovery-random}
\ket{R_{(i,\ell)}} = \frac{\ket{e_{i,\ell}}}{\sqrt{\norm{\sum_{j,k}\ket{e_{j,k}}\bra{e_{j,k}}}_\infty}},\quad 
\ket{S_{(i,\ell)}} = \frac{\ket{f_{i,\ell}}}{\sqrt{\norm{\sum_{j,k}\ket{f_{j,k}}\bra{f_{j,k}}}_\infty}}.
\end{equation}
As long as 
\begin{equation}
    \abs{\Delta_{k,ij}^{(\ell,\ell')}} < \frac{1}{m^3},\quad  \forall k,i\neq j,\ell\neq\ell' 
\end{equation}
which is true with probability $1 - e^{-\Omega(m)}$, we have $\Re[\trace(BV)] = m\sum_{i} \mu_i + O(1)$ from \eqref{eq:B-real-random}. Moreover, using exactly the same technique in proving $\norm{V}_\infty = \norm{\sum_{i=1}^r \sum_{\ell=1}^{m-1} \ket{e_{i,\ell}}\bra{f_{i,\ell}}}_\infty \leq 1 + O(1/m)$, we can also prove 
\begin{gather}
\bigg\| \sum_{i=1}^r \sum_{\ell=1}^{m-1} \ket{e_{i,\ell}}\bra{e_{i,\ell}}\bigg\|_\infty \leq 1 + O\left(\frac{1}{m}\right),\\
\bigg\| \sum_{i=1}^r \sum_{\ell=1}^{m-1} \ket{f_{i,\ell}}\bra{f_{i,\ell}}\bigg\|_\infty \leq 1 + O\left(\frac{1}{m}\right).
\end{gather}
One can then verify that 
\begin{align}
    \gamma_\tl(\mR_{\rm opt}) 
    &= \sum_{i=1}^r \sum_{\ell=1}^m - \Re[\bra{0_\tl} L_i^{(\ell)} \ket{0_\tl} \bra{1_\tl} L_i^{(\ell)\dagger} \ket{1_\tl}] + \frac{1}{2}(\bra{0_\tl} L_i^{(\ell)\dagger} L_i^{(\ell)} \ket{0_\tl} + \bra{1_\tl} L_i^{(\ell)\dagger} L_i^{(\ell)} \ket{1_\tl})\\  & \qquad \qquad \qquad \qquad  - \sum_{i,\ell}\Re[\bra{R_{(i,\ell)}}(\id - \ket{0_\tl}\bra{0_\tl})(L_i^{(\ell)} \ket{0_\tl}\bra{1_\tl} L_i^{(\ell)\dagger} )(\id - \ket{1_\tl}\bra{1_\tl})\ket{S_{(i,\ell)}}]\\
    &= m \sum_{i=1}^r {\mu}_i - \Re[b_{0,i}^*b_{1,i}] + \frac{1}{2}(a_{0,ii}+a_{1,ii}) - \frac{\Re[\trace(BV)]}{\norm{\sum_{j,k}\ket{e_{j,k}}\bra{e_{j,k}}}_\infty \norm{\sum_{j,k}\ket{f_{j,k}}\bra{f_{j,k}}}_\infty}\\
    &\leq m \sum_{i=1}^r {\mu}_i + O(1) - \frac{m\sum_{i} \mu_i + O(1)}{1 + O(1/m)} = O(1).  
\end{align}

}

%\end{NoHyper}

\sisilong{
\subsection{Numerical simulation}

%\begin{NoHyper}

In \appref{app:limit-1} and \appref{app:limit-2}, we proved \thmref{thm:near-optimal}, that is, $\gamma_\tl = O(1)$ as $m$ increases for the small-ancilla code and (with high probability) the ancilla-free random code. As discussed in the main text, if we consider an input state 
\begin{equation}
\ket{\psi_{\rm in}} = \frac{1}{\sqrt{2}}(\ket{0_\tl} + \ket{1_\tl}),
\end{equation}
and encode $\ket{0_\tl}$ and $\ket{1_\tl}$ using $m = N$ probes. The output QFI is 
\begin{equation}
\label{eq:output-qfi}
F(\rho_\omega(t)) = N^2 \trace((\tilde\rho_0 - \tilde\rho_1)H)^2 t^2 e^{-2\gamma_\tl t}. 
\end{equation}
We proved that 
\begin{equation}
\lim_{N\rightarrow \infty}\trace((\tilde\rho_0 - \tilde\rho_1)H) = 2\norm{H-\mS},\quad \lim_{N\rightarrow \infty}\gamma_\tl = \overline{\gamma}_\tl< \infty, 
\end{equation}
which guarantees the optimal HL with respect to $N$ when $N\rightarrow \infty$ and $t \ll 1/\overline{\gamma}_\tl$. However, the performance of our codes in the regime where the number of probes $N$ is small remains to be investigated. To this end, we consider a concrete example in this section and numerically plot the gap between our codes and the optimal HL.

\begin{figure}[btp]
\centering
\includegraphics[width=0.5\textwidth]{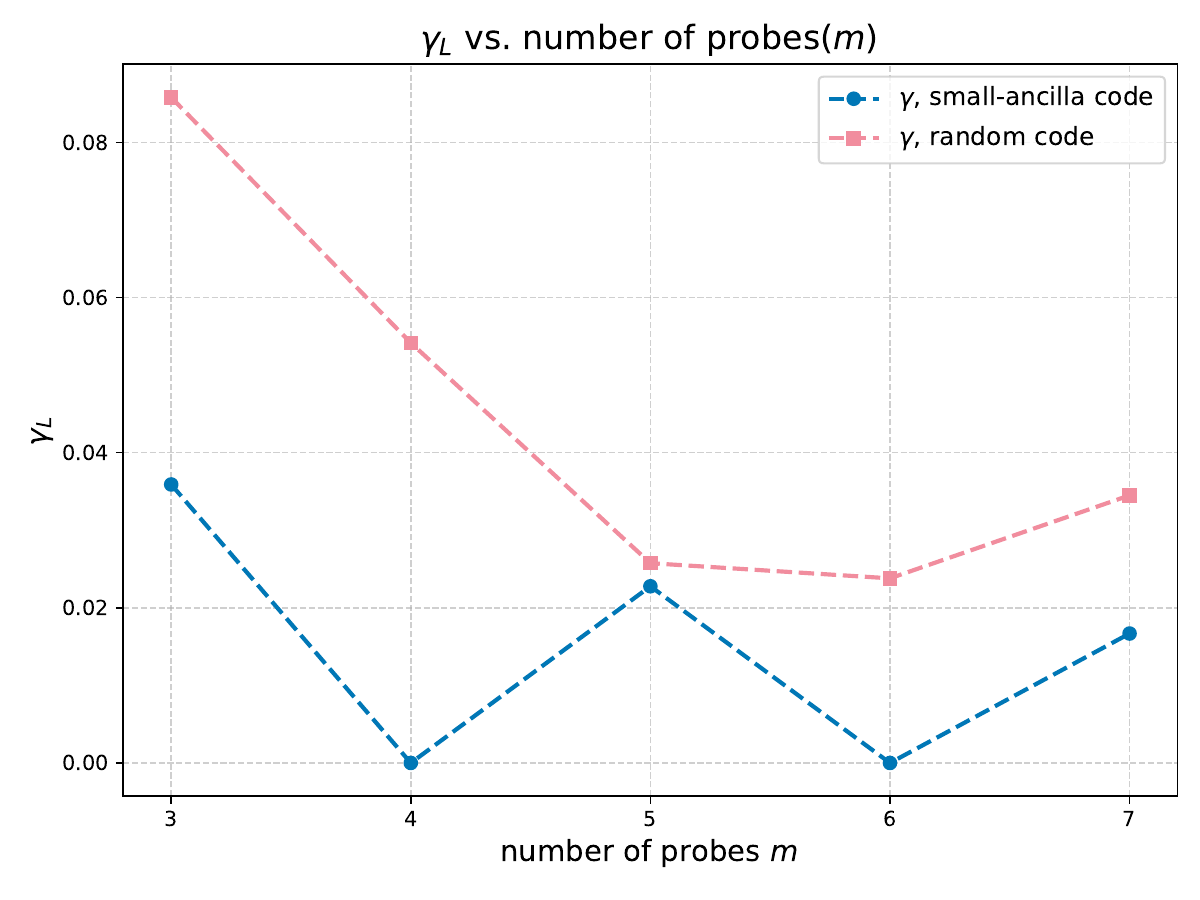}
 \caption{\label{fig:gamma}
{\sisi{Error rate $\gamma_\tl$ as a function of number of probes in the codeword for the small-ancilla code and the ancilla-free random code. We assume the probe evolution is given by the qutrit example discussed in \appref{app:qutrit}. Here, the random phases of the random code is picked as the optimal one among $50$ samples of random phases that pass a filtering test.}}
 }
\end{figure}

\begin{figure}[btp]
\centering
\includegraphics[width=0.8\textwidth]{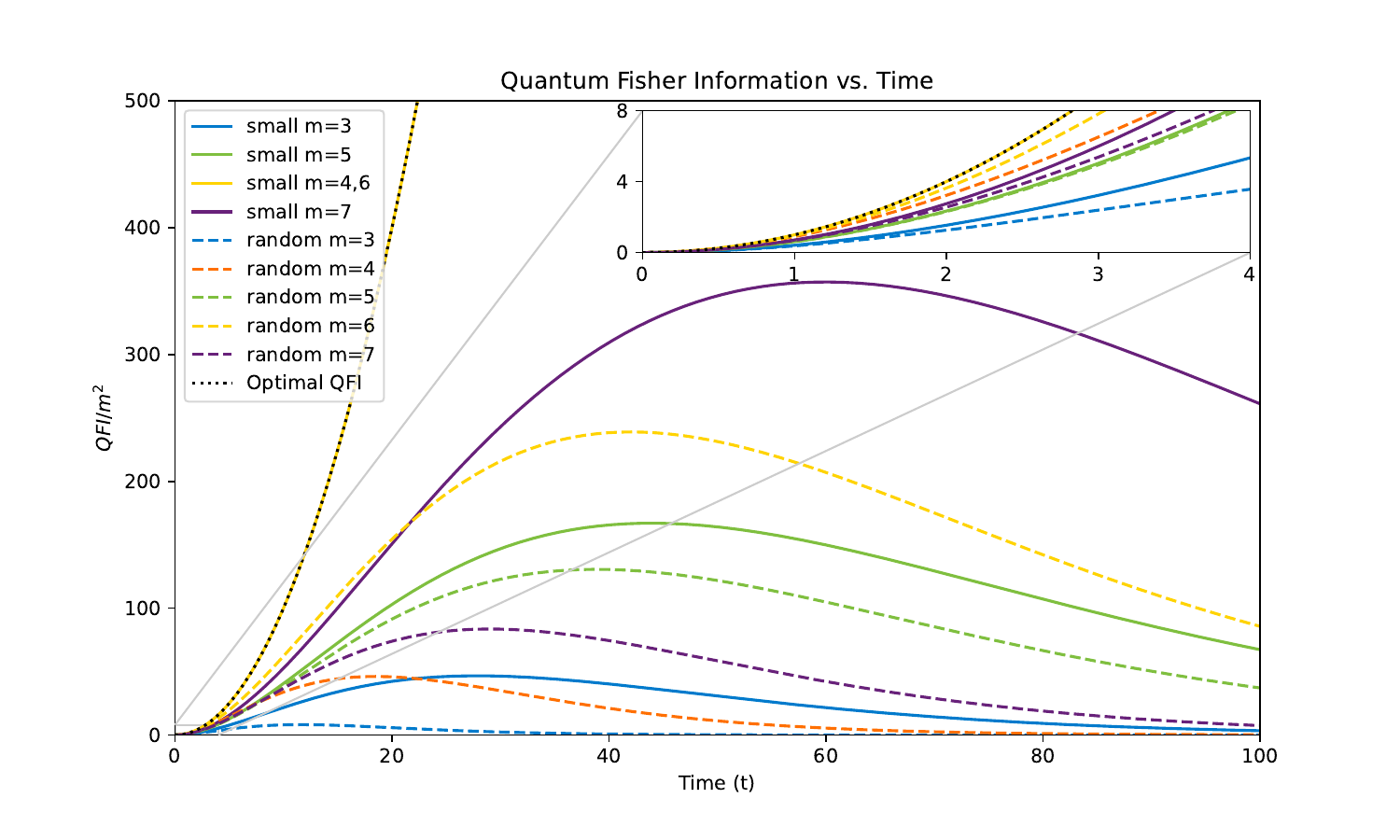}
 \caption{\label{fig:qfi}
{\sisi{Comparison between the optimal HL and the output QFI of our (effectively) ancilla-free QEC protocols. Here the number of probes in the entire state is equal to the number of probes in each codeword $N = m$. We assume the probe evolution is given by the qutrit example discussed in \appref{app:qutrit}. We plot the ideal optimal QFI $F(\rho_\omega(t))/m^2$ in the noiseless case (in dotted lines) and the output QFIs of the QEC protocol using the small-ancilla code (in solid lines) and the ancilla-free random code (in dashed lines). $m=4,6$ overlaps with the optimal QFI because $\gamma_\tl = 0$. The time where the QFI started to decrease has a scaling of $1/\gamma_\tl$. Here, the random phases of the random code is picked as the optimal one among $50$ samples of random phases that pass a filtering test, the same as in \figref{fig:gamma}.%[[choose x-axis to be $t$, and y-axis to be $F(\rho_\omega(t))/t^2$, there might be eleven lines in total (one ideal curve, five small-ancilla curves, five ancilla-free curves)]]
}}
 }
\end{figure}

In the numerical simulation, we use the qutrit model introduced in \appref{app:qutrit}, where 
\begin{equation}
H = \begin{pmatrix}
1 & 0 & 0 \\
0 & -1 & -1 \\
0 & -1 & -1 \\
\end{pmatrix},\quad 
L = 
\begin{pmatrix}
0 & 1 & 1 \\
0 & 0 & 1 \\
0 & 0 & 0 \\
\end{pmatrix},  
\end{equation}
and 
\begin{equation}
    \rho_{0} = \frac{1}{2}\left(\ket{0}\bra{0} + \ket{2}\bra{2}\right),\quad 
    \rho_{1} = \ket{1}\bra{1}. 
\end{equation}
We provided an exact small-ancilla QEC code ($\gamma_\tl = 0$) for $m = 4$ probes in \appref{app:qutrit}. In general, $\gamma_\tl$ is not necessarily zero for other probe numbers, or for the random code. Here, we take $N = m = 3,4,5,6,7$ and plot the corresponding $\gamma_\tl$ for both the small-ancilla code and the ancilla-free random code in \figref{fig:gamma}. In particular, to pick the random phases in the random code, we sample multiple sets of random phases, perform a filtering test on them (a set of phases passes the test when $\abs{\Delta_{0,02}}$ defined in \eqref{eq:def-Delta} is below certain thresholds, which is deemed as a requirement for $\gamma_\tl$ to be small) and then pick the one that minimizes $\gamma_\tl$ among $50$ sets of phases that pass the filtering test. \figref{fig:gamma} shows for both the small-ancilla and the ancilla-free random code, $\gamma_\tl$ is small and has a decaying trend, which aligns with our asymptotic analysis and is in contrast with situations without QEC where $\gamma_\tl$ usually increases linearly with respect to $m$. Note that the curves of the ancilla-free random code are above those of the small-ancilla code because the small-ancilla code utilizes ancillas to remove extra non-zero terms contributing to $\gamma_\tl$, while the ancilla-free random code relies on the randomness of phases to reduce them. In particular, at $m=4,6$, there exists an exact small-ancilla code, i.e., $\gamma_\tl = 0$. Finally, we note that the reason that $\gamma_\tl$ at $m=7$ is larger than $\gamma_\tl$ at $m = 6$ for ancilla-free random codes might be there are more choices of random phases at $m = 7$ and we didn't collect a sufficiently large number of sets of random phases to find the optimal random phases at $m = 7$ due to the increasing running time of simulation as $m$ increases. We also plot the normalized QFI $F(\rho_\omega(t))/N^2$ as a function of $t$ in \figref{fig:qfi} (and we take $N=m$). We first plot the optimal ideal QFI in the noiseless case, which shows a perfect hyperbolic shape indicating the saturation of the optimal HL. We then plot the output QFI (\eqref{eq:output-qfi}) of the two metrological protocols using the small-ancilla code and the ancilla-free random code when $m=3,4,5,6,7$. Since the code is exact at $m=4,6$, the optimal HL is achieved. In other cases, the QFI first increases quadratically in the regime where $t \ll 1/\gamma_\tl$ but then decays as $t$ becomes larger. In particular, as $m$ increases, the performance of the QEC protocol does not decline, in contrast to the case without QEC. Also, note that the normalized QFI $F(\rho_\omega(t))/m^2$ is (in most cases) closer to the optimal HL in the small $t$ regime as $m$ increases (see the zoomed-in plot around $t = 0$ in \figref{fig:qfi}), which aligns with our asymptotic analysis.  

%\end{NoHyper}

The source code for the numerical simulation is available at \href{https://github.com/argygianni/Ancilla-free-QECC-for-metrology}{\color{black}{https://github.com/argygianni/Ancilla-free-QECC-for-metrology}}}.

%\begin{NoHyper}

\section{Achieving the optimal SQL using ancilla-free QEC codes}
\label{app:SQL}

In this appendix, we first review existing results on achieving the optimal SQL when HNLS fails using ancilla-assisted QEC. Then we show that using the similar techniques that were presented in the main text, we can define two (effectively) ancilla-free codes: the small-ancilla code{$^\sql$} (similar to \eqref{eq:code-few-ancilla}) and the qubit-ancilla random code{$^\sql$} (similar to \eqref{eq:code-ancilla-free}) that achieve the optimal SQL. Note that in this appendix, we use $^\sql$ to distinguish the codes achieving the optimal SQL from those achieving the optimal HL when HNLS holds in the main text. 

\subsection{Optimal ancilla-assisted QEC}
\label{app:ancilla-assisted-SQL}

When HNLS fails (i.e, $H \in \mS$), the SQL coefficient has the upper bound~\cite{demkowicz2017adaptive,zhou2018achieving},
\begin{equation}
    \sup_{t > 0} \lim_{N\rightarrow \infty} \frac{F(\rho_\omega(t))}{Nt} \leq 4{o_1},
\end{equation}
where ${o_1}$ is defined in \eqref{eq:alpha}. It was shown in~\cite{zhou2019optimal} that for any $0 < \delta \leq 4 o_1$, there exists a two-dimensional ancilla-assisted QEC code that entangles one probe and two ancillas $\mH_P \otimes \mH_{A_1} \otimes \mH_{A_2}$, where $\dim(\mH_P) = \mH_{A_1} = d$ and $\dim(\mH_{A_2}) = 2$, that achieves 
\begin{equation}
    \sup_{t > 0} \lim_{N\rightarrow \infty} \frac{F(\rho_\omega(t))}{Nt} = 4{o_1} - \delta. 
\end{equation}
The ancilla-assisted code$^\sql$ is defined by 
\begin{equation}
\label{eq:code-ancilla-SQL}
    \ket{0^\sg_\tl} = \sum_{i=0}^{d-1} \sqrt{\lambda_i^{[0]}}\ket{\phi_i}_P\ket{\phi_i}_{A_1}\ket{0}_{A_2},\quad \ket{1^\sg_\tl} = \sum_{i=0}^{d-1} \sqrt{\lambda_i^{[1]}}\ket{\varphi_i}_P\ket{\varphi_i}_{A_1}\ket{1}_{A_2},
\end{equation}
where $\{\ket{\phi_i}\}_{i=0}^{d-1}$ and $\{\ket{\varphi_i}\}_{i=0}^{d-1}$ are two sets of orthonormal bases in $\mH_P$ (or $\mH_{A_1}$), $\sum_{i=0}^{d-1} \lambda_i^{[0]} = \sum_{i=0}^{d-1} \lambda_i^{[1]} = 1$ and $\lambda_i^{0,1} > 0$. Differing from the HL case, $\{\ket{\phi_i}\}$ and $\{\ket{\varphi_i}\}$ are not perpendicular to each other and the additional qubit ancilla $\mH_{A_2}$ is required to make sure that $\braket{0^\sg_\tl|1^\sg_\tl} = 0$. We do not specify the exact values of $\lambda_i^{[0],[1]}$, $\ket{\phi_i}$ or $\ket{\varphi_i}$ as they will not be used in the following discussion. Note that we use the superscript $^\sg$ to denote the single-probe case, in order to distinguish it from the multi-probe case that we discuss later. 

Following the discussion in \appref{app:dephasing} (and also in \cite{zhou2019optimal}), there exists a recovery operation for the ancilla-assisted code$^\sql$ above such that the logical master equation for the logical qubit consists of a Pauli-Z Hamiltonian and a dephasing noise, described by 
\begin{equation}
    \frac{d\rho^\sg_\tl}{dt} = -i\left[\frac{\omega \trace( H Z^\sg_\tl)}{2}Z^\sg_\tl + \beta^\sg_\tl Z^\sg_\tl,\rho^\sg_\tl\right] + \frac{\gamma^\sg_\tl}{2}(Z^\sg_\tl \rho^\sg_\tl Z^\sg_\tl - \rho^\sg_\tl),
\end{equation}
where $Z^\sg_\tl = \ket{0^\sg_\tl}\bra{0^\sg_\tl} - \ket{1^\sg_\tl}\bra{1^\sg_\tl}$, $\beta^\sg_\tl$ is a parameter-independent constant,  
\begin{equation}
\label{eq:gamma-sg-SQL}
\gamma^\sg_\tl = \sum_{i=1}^r - \Re[\bra{0^\sg_\tl} L_i \ket{0^\sg_\tl} \bra{1^\sg_\tl} L_i^{\dagger} \ket{1^\sg_\tl}] +  \frac{1}{2}(\bra{0^\sg_\tl} L_i^{\dagger} L_i \ket{0^\sg_\tl} + \bra{1^\sg_\tl} L_i^{\dagger} L_i \ket{1^\sg_\tl}) - \norm{B^\sg}_1, 
\end{equation}
and 
\begin{equation}
    B^\sg = \sum_{i=1}^r  (\id-\ket{0^\sg_\tl}\bra{0^\sg_\tl})(L_i \ket{0^\sg_\tl}\bra{1^\sg_\tl} L_i^{\dagger} )(\id-\ket{1^\sg_\tl}\bra{1^\sg_\tl}). 
\end{equation}
Using an initial state $\ket{\psi^\sg_{\rm in}} = \frac{1}{\sqrt{2}}(\ket{0^\sg_\tl} + \ket{1^\sg_\tl})$, the final state satisfies 
\begin{equation}
    \sup_{t > 0} \lim_{N\rightarrow \infty} \frac{F(\rho^\sg_\omega(t))}{Nt} = \frac{\trace(H Z^\sg_\tl)^2}{2\gamma^\sg_\tl} = 4{o_1} - \delta. 
\end{equation}
Our goal is to find ancilla-free codes $\{\ket{0_\tl},\ket{1_\tl}\}$ on $m$ probes such that 
\begin{equation}
    \trace\Big(\sum_\ell H^{(\ell)} Z_\tl\Big) = m \trace(H Z^\sg_\tl) + O(1),\quad \gamma_\tl = m \gamma^\sg_\tl + O(1). 
\end{equation}
Then letting $m = \Theta(N)$, $n = O(1)$, and $N = mn$, with an initial state $\ket{\psi_{\rm in}} = \frac{1}{\sqrt{2}}(\ket{0_\tl}^{\otimes n} + \ket{1_\tl}^{\otimes n})$, the final state satisfies 
\begin{equation}
\label{eq:near-optimal}
    \sup_{t > 0} \lim_{N\rightarrow \infty} \frac{F(\rho_\omega(t))}{Nt} = \lim_{m\rightarrow \infty} \frac{n}{N} \frac{(\trace(\sum_\ell H^{(\ell)} Z_\tl))^2}{2\gamma_\tl} =  \lim_{m\rightarrow \infty} \frac{1}{m} \frac{(m\trace(H Z^\sg_\tl))^2}{2m\gamma^\sg_\tl + O(1)} =  \frac{\trace(H Z^\sg_\tl)^2}{2\gamma^\sg_\tl}  = 4{o_1} - \delta.
\end{equation}
Note that we can exchange the order of $\sup_{t > 0}$ and $\lim_{N\rightarrow \infty}$ because of the uniform convergence of $F(\rho_\omega(t))$ when $m\rightarrow\infty$. 
We will show the existence of such multi-probe codes in the next section. 

\subsection{Optimal ancilla-free QEC codes}

Here we provide two QEC codes that achieve \eqref{eq:near-optimal} and utilize an exponentially small ancilla. We will first define the codes and then prove \eqref{eq:near-optimal}. \sisinewlong{The code structures are very similar to the small-ancilla code and the ancilla-free random code we used in the HL case. The main idea is still to use the multi-probe encoding so that the 2-local density operator of the code approximate the one in the single-probe case which achieves the optimal SQL coefficient. Specifically, we will prove the following theorem. 
\begin{theorem}[Achieving the optimal SQL]
\label{thm:near-optimal}
For the small-ancilla code$^\sql$ defined below, 
\begin{equation}
\label{eq:near-optimal-thm}
    \trace\Big(\sum_\ell H^{(\ell)} Z_\tl\Big) = m \trace(H Z^\sg_\tl) + O(1),\quad \gamma_\tl = m \gamma^\sg_\tl + O(1). 
\end{equation}
For the ancilla-free random code$^\sql$ defined below, the above equation holds with probability $1 - e^{-\Omega(m)}$. 
\end{theorem}
Note that readers might find some parts in this section similar to the corresponding parts in \appref{app:limit}, as we use some identical proof techniques in both appendixes. For the sake of completeness and self-consistency, we repeated these contents.
}

\subsubsection{The small-ancilla code\texorpdfstring{$^\sql$}{}}
\label{app:small-ancilla-sql}

We first define the small-ancilla code{$^\sql$} acting on $m\geq 3$ probes in the SQL case, where 
\begin{equation}
\label{eq:code-few-ancilla-SQL}
\begin{split}
    \ket{0_\tl} &= \frac{1}{\sqrt{\abs{W^\sql_0}}} \sum_{w \in W^\sql_0} \ket{\phi_w}_{P^{\otimes m}} \ket{\fraki^\sql_{0}(w)}_{A_1} \ket{0}_{A_2},\\
    \ket{1_\tl} &= \frac{1}{\sqrt{\abs{W^\sql_1}}} \sum_{w \in W^\sql_1} \ket{\varphi_w}_{P^{\otimes m}} \ket{\fraki^\sql_{1}(w)}_{A_1} \ket{1}_{A_2},
\end{split}
\end{equation}
Here we use $\ket{\phi_w}$ (or $\ket{\varphi_w}$) to denote $\ket{\phi_{w_1}}\otimes\cdots\otimes \ket{\phi_{w_m}}$ (or $\ket{\varphi_{w_1}}\otimes\cdots\otimes \ket{\varphi_{w_m}}$) for any string $w = w_1 w_2\cdots w_m \in \{0,\ldots,d-1\}^{m}$. $W^\sql_0$ and $W^\sql_1$ are two sets of strings of length $m$ and $\abs{W^\sql_k}$ denotes the order of $W^\sql_k$. To define $W^\sql_{0,1}$, note that for all $m$, it is possible to find two sets of integers $\{m^{[0]}_i\}_{i=0}^{d-1}$ and $\{m^{[1]}_i\}_{i=0}^{d-1}$, such that $\sum_{i=0}^{d-1} m^{[0]}_i = \sum_{i=0}^{d-1} m^{[1]}_i = m$ and 
\begin{equation}
    \abs{\frac{m^{[k]}_i}{m} - \lambda^{[k]}_i} \leq \frac{1}{m},\;\forall i=0,\ldots,d-1,\;k=0,1. 
\end{equation} 
Note that $\lim_{m\rightarrow \infty} m^{[k]}_i/m = \lambda^{[k]}_i$ for $k = 0,1$. 
Then we define $W^\sql_k$ to be the set of strings that contains $m^{[k]}_i$ $i$'s for $0 \leq i \leq d-1$.  
$\fraki^\sql_k(w)$ can be chosen to be any integer functions that satisfy $\fraki^\sql_k(w)\neq \fraki^\sql_k(w')$ when $w$ and $w'$ are different on exactly two sites. Let $\frakI^\sql = \{\fraki^\sql_k(w),\forall k=0,1, w\in W_0\cup W_1\}$. Every integer in $\frakI^\sql$ corresponds to an (orthonormal) basis state of the ancilla $\mH_{A_1}$ and we can assume $\dim(\mH_{A_1}) = \abs{\frakI^\sql}$. We choose $\fraki^\sql_k(w)$ such that $\abs{\frakI^\sql}$ is minimized. Then using graph theory arguments in \appref{app:graph}, $\abs{\frakI^\sql}$ is no larger than $m^2d^2$. 
In the regime of $m \gg 1$, the dimension of the ancilla $\dim(\mH_A) = \dim(\mH_{A_1})\dim(\mH_{A_2}) = 2\abs{\frakI^\sql}$ is exponentially smaller than the dimension of the probes.

Following the discussion in \appref{app:dephasing}, there exists a recovery operation for the small-ancilla code{$^\sql$} such that the master equation for the logical qubit consists of a Pauli-Z Hamiltonian and a dephasing noise, described by 
\begin{equation}
    \frac{d\rho_\tl}{dt} = -i\left[\frac{\omega \trace( \sum_\ell H^{(\ell)} Z_\tl)}{2}Z_\tl + \beta_\tl Z_\tl,\rho_\tl\right] + \frac{\gamma_\tl}{2}(Z_\tl \rho_\tl Z_\tl - \rho_\tl),
\end{equation}
where 
\begin{equation}
\label{eq:gamma-SQL}
\gamma_\tl = \sum_{i=1}^r \sum_{\ell=1}^m - \Re[\bra{0_\tl} L_i^{(\ell)} \ket{0_\tl} \bra{1_\tl} L_i^{(\ell)\dagger} \ket{1_\tl}] +  \frac{1}{2}(\bra{0_\tl} L_i^{(\ell)\dagger} L_i^{(\ell)} \ket{0_\tl} + \bra{1_\tl} L_i^{(\ell)\dagger} L_i^{(\ell)} \ket{1_\tl}) - \norm{B}_1, 
\end{equation}
and 
\begin{equation}
    B = \sum_{i=1}^r  \sum_{\ell=1}^m(\id-\ket{0_\tl}\bra{0_\tl})(L_i^{(\ell)} \ket{0_\tl}\bra{1_\tl} L_i^{(\ell)\dagger} )(\id-\ket{1_\tl}\bra{1_\tl}). 
\end{equation}

Note that for the small-ancilla code{$^\sql$},
\begin{gather}
 \braket{0_\tl|L_i^{(\ell)}|0_\tl} - \braket{0^\sg_\tl|L_i|0^\sg_\tl}
= \trace\left(\sum_{i=0}^{d-1} \left(\frac{m^{[0]}_i}{m}-\lambda^{[0]}_i\right) \ket{i}\bra{i} \cdot L_i\right) = O\left(\frac{1}{m}\right),\\
 \braket{1_\tl|L_i^{(\ell)}|1_\tl} - \braket{1^\sg_\tl|L_i|1^\sg_\tl}
= \trace\left(\sum_{i=0}^{d-1} \left(\frac{m^{[1]}_i}{m}-\lambda^{[1]}_i\right) \ket{i}\bra{i} \cdot L_i\right) = O\left(\frac{1}{m}\right),\\
 \braket{0_\tl|L_i^{(\ell)\dagger}L_j^{(\ell)}|0_\tl} - \braket{0^\sg_\tl|L_i^{\dagger}L_j|0^\sg_\tl} = \trace\left(\sum_{i=0}^{d-1} \left(\frac{m^{[0]}_i}{m}-\lambda^{[0]}_i\right) \ket{i}\bra{i} \cdot L_i^{\dagger}L_j\right)   =  O\left(\frac{1}{m}\right),\\
 \braket{1_\tl|L_i^{(\ell)\dagger}L_j^{(\ell)}|1_\tl} - \braket{1^\sg_\tl|L_i^{\dagger}L_j|1^\sg_\tl} = \trace\left(\sum_{i=0}^{d-1} \left(\frac{m^{[1]}_i}{m}-\lambda^{[1]}_i\right) \ket{i}\bra{i} \cdot L_i^{\dagger}L_j\right) =  O\left(\frac{1}{m}\right).
\end{gather}
Combining \eqref{eq:gamma-sg-SQL} and \eqref{eq:gamma-SQL}, we obtain that 
\begin{equation}
\label{eq:gamma-first-step}
    \gamma_\tl = m \gamma_\tl^\sg - \norm{B}_1 + m \norm{B^\sg}_1 + O(1).
\end{equation}
(As we will see later, \eqref{eq:gamma-first-step} holds for both the small ancilla code$^\sql$ and the ancilla-free random code$^\sql$ in the next section.) 

To show $\gamma_\tl = m \gamma_\tl^\sg = O(1)$, we only need to prove 
\begin{equation}
    - \norm{B}_1 + m \norm{B^\sg}_1 \leq  O(1). 
\end{equation}
To do so, we first define two sets of orthonormal states $\{\ket{D^\sg_{0,i}}\} \subseteq {\rm span}\{(\id-\ket{0^\sg_\tl}\bra{0^\sg_\tl})(L_i \ket{0^\sg_\tl},\forall i\}$ and $\{\ket{D^\sg_{1,i}}\} \subseteq {\rm span}\{(\id-\ket{1^\sg_\tl}\bra{1^\sg_\tl})(L_i \ket{1^\sg_\tl},\forall i\}$ such that 
\begin{equation}
\label{eq:B-single}
    B^\sg = \sum_i \gamma_{i}^\sg \ket{D^\sg_{0,i}}\bra{D^\sg_{1,i}},\quad \text{  for some  }\{\gamma_{i}^\sg\}.
\end{equation}
In particular, let 
\begin{equation}
\ket{D^\sg_{0,i}} = \sum_{j=1}^{r} D_{0,ij}(\id - \ket{0^\sg_\tl}\bra{0^\sg_\tl})L_j\ket{0^\sg_\tl}
,\quad 
\ket{D^\sg_{1,i}} = \sum_{j=1}^{r} D_{1,ij}(\id - \ket{1^\sg_\tl}\bra{1^\sg_\tl})L_j\ket{1^\sg_\tl},
\end{equation}
and define 
\begin{equation}
\ket{D^{(\ell)}_{0,i}} = \sum_{j=1}^{r} D_{0,ij} (\id - \ket{0_\tl}\bra{0_\tl}) L_j^{(\ell)}\ket{0_\tl}
,\quad 
\ket{D^{(\ell)}_{1,i}} = \sum_{j=1}^{r} D_{1,ij} (\id - \ket{1_\tl}\bra{1_\tl}) L_j^{(\ell)}\ket{1_\tl}. 
\end{equation}
They satisfy, for all $\ell \neq \ell'$, 
\begin{gather}
\label{eq:condition-1}
\braket{D^{(\ell)}_{0,i}| D^{(\ell)}_{0,j}}  = 
\sum_{i'j'} D_{0,ii'}^* D_{0,jj'} \bra{0_\tl}L_{i'}^{(\ell)\dagger} (\id - \ket{0_\tl}\bra{0_\tl}) L_{j'}^{(\ell)}\ket{0_\tl} = \delta_{ij} + O\left(\frac{1}{m}\right), 
\\
\label{eq:condition-2}
\braket{D^{(\ell)}_{1,i}| D^{(\ell)}_{1,j}}  = 
\sum_{i'j'} D_{1,ii'}^* D_{1,jj'} \bra{1_\tl}L_{i'}^{(\ell)\dagger} (\id - \ket{1_\tl}\bra{1_\tl}) L_{j'}^{(\ell)}\ket{1_\tl} = \delta_{ij} + O\left(\frac{1}{m}\right), 
\\
\label{eq:condition-3}
\bra{D^{(\ell)}_{0,i}}(\id - \ket{0_\tl}\bra{0_\tl}) L_j^{(\ell)} \ket{0_\tl} 
= \sum_{i'} D^{*}_{0,ii'}  \bra{0^\sg_\tl} L^{\dagger}_{i'} (\id - \ket{0^\sg_\tl}\bra{0^\sg_\tl}) L_j \ket{0^\sg_\tl} + O\left(\frac{1}{m}\right),
\\
\label{eq:condition-4}
\bra{1_\tl} L_j^{(\ell)\dagger} (\id - \ket{1_\tl}\bra{1_\tl}) \ket{D^{(\ell)}_{1,i}}
= \sum_{i''} D_{1,ii''}  \bra{1^\sg_\tl} L^{\dagger}_{j} (\id - \ket{1^\sg_\tl}\bra{1^\sg_\tl}) L_{i''} \ket{1^\sg_\tl} + O\left(\frac{1}{m}\right),
\\
\label{eq:condition-5}
\begin{split}
\braket{D^{(\ell)}_{0,i}| D^{(\ell')}_{0,j}} &= 
\sum_{i'j'} D_{0,ii'}^* D_{0,jj'} \bra{0_\tl}L_{i'}^{(\ell)\dagger} (\id - \ket{0_\tl}\bra{0_\tl}) L_{j'}^{(\ell')}\ket{0_\tl} \\
&= \sum_{i'j'} D_{0,ii'}^* D_{0,jj'} \left(\frac{ \braket{0_\tl|L_{i'}^{(\ell)\dagger}|0_\tl}  \braket{0_\tl|L_{j'}^{(\ell)}|0_\tl} }{m-1} - \sum_{k=0}^{d-1} \frac{m^{[0]}_k}{m(m-1)} \bra{k}L_{i'}^{\dagger}\ket{k}\bra{k}L_{j'}\ket{k}\right) = O\left(\frac{1}{m}\right),
\end{split}\\
\label{eq:condition-6}
\begin{split}
\braket{D^{(\ell)}_{1,i}| D^{(\ell')}_{1,j}} &= 
\sum_{i'j'} D_{1,ii'}^* D_{1,jj'} \bra{1_\tl}L_{i'}^{(\ell)\dagger} (\id - \ket{1_\tl}\bra{1_\tl}) L_{j'}^{(\ell')}\ket{1_\tl} \\
&= \sum_{i'j'} D_{1,ii'}^* D_{1,jj'} \left(\frac{\braket{1_\tl|L_{i'}^{(\ell)\dagger}|1_\tl}  \braket{1_\tl|L_{j'}^{(\ell)}|1_\tl} }{m-1} - \sum_{k=0}^{d-1} \frac{m^{[1]}_k}{m(m-1)} \bra{k}L_{i'}^{\dagger}\ket{k}\bra{k}L_{j'}\ket{k}\right) = O\left(\frac{1}{m}\right),  
\end{split}
\\
\label{eq:condition-7}
\begin{split}
&\bra{D^{(\ell)}_{0,i}}(\id - \ket{0_\tl}\bra{0_\tl}) L_j^{(\ell')} \ket{0_\tl} 
= \sum_{i'} D^{*}_{0,ii'} \bra{0_\tl} L^{(\ell)\dagger}_{i'} (\id - \ket{0_\tl}\bra{0_\tl}) L_j^{(\ell')} \ket{0_\tl} \\
&\qquad\qquad = \sum_{i'} D_{0,ii'}^* \left(\frac{ \braket{0_\tl|L_{i'}^{(\ell)\dagger}|0_\tl}  \braket{0_\tl|L_{j}^{(\ell)}|0_\tl} }{m-1} - \sum_{k=0}^{d-1} \frac{m^{[0]}_k}{m(m-1)} \bra{k}L_{i'}^{\dagger}\ket{k}\bra{k}L_{j'}\ket{k}\right) = O\left(\frac{1}{m}\right), 
\end{split}
\\
\label{eq:condition-8}
\begin{split}
&\bra{1_\tl} L_j^{(\ell)\dagger} (\id - \ket{1_\tl}\bra{1_\tl}) \ket{D^{(\ell')}_{1,i}} = \sum_{i''} D_{1,ii''} \bra{1_\tl} L^{(\ell)\dagger}_{j} (\id - \ket{1_\tl}\bra{1_\tl}) L_{i''}^{(\ell')} \ket{1_\tl} \\
&\qquad\qquad =\sum_{j'}  D_{1,jj'} \left(\frac{ \braket{1_\tl|L_{i}^{(\ell)\dagger}|1_\tl}  \braket{1_\tl|L_{j'}^{(\ell)}|1_\tl} }{m-1} - \sum_{k=0}^{d-1} \frac{m^{[1]}_k}{m(m-1)} \bra{k}L_{i}^{\dagger}\ket{k}\bra{k}L_{j'}\ket{k}\right) =  O\left(\frac{1}{m}\right).
\end{split}
\end{gather}

Using the above relations and adopting a similar approach in \lemmaref{lemma:V-small-ancilla}, we can prove the following lemma. 
\sisinewlong{
\begin{lemma}
\label{lemma:V-small-ancilla-sql}
For the small-ancilla code$^\sql$ and $B = \sum_{i=1}^r  \sum_{\ell=1}^m(\id-\ket{0_\tl}\bra{0_\tl})(L_i^{(\ell)} \ket{0_\tl}\bra{1_\tl} L_i^{(\ell)\dagger} )(\id-\ket{1_\tl}\bra{1_\tl})$, there exists a matrix $V$ such that 
\begin{equation}
\label{eq:lower-2-sql}
\abs{\trace(BV)} = m \norm{B^\sg}_1 + O(1), \quad \text{and}\quad \norm{V}_\infty \leq 1 + O\left(\frac{1}{m}\right),
\end{equation}
with probability $1 - e^{-\Omega(m)}$. Then 
\begin{equation}
    \norm{B}_1 \geq \frac{\abs{\trace(BV)}}{\norm{V}_\infty} \geq m \norm{B^\sg}_1 + O(1). 
\end{equation} 
\end{lemma}

\begin{proof}
Let 
\begin{equation}
V = \sum_{i=1}^r \sum_{\ell=1}^{m-1} 
\ket{f_{i,\ell}} \bra{e_{i,\ell}},
\end{equation}
where for $i = 1,\ldots,r$ and $\ell = 1,\ldots, m-1$, 
\begin{equation}
\ket{e_{i,\ell}} = \frac{1}{\sqrt{m}}\sum_{\ell'=1}^{m} \exp\left(-i \frac{2\pi}{m}\ell \ell'\right) \ket{D_{0,i}^{(\ell')}},\quad 
\ket{f_{i,\ell}} = \frac{1}{\sqrt{m}}\sum_{\ell'=1}^{m} \exp\left(-i \frac{2\pi}{m}\ell \ell'\right) \ket{D_{1,i}^{(\ell')}}. 
\end{equation}
$\{\ket{e_{i,\ell}}\}$ and $\{\ket{f_{i,\ell}}\}$ are Fourier transforms of $\{\ket{D_{0,i}^{(\ell)}}\}$ and $\{\ket{D_{1,i}^{(\ell)}}\}$, except that we do not define $\ket{e_{i,\ell}}$ and $\ket{f_{i,\ell}}$ when $\ell = m$ (while traditionally, the $\ell = m$ terms are also used in Fourier transform). They satisfy, for all $1 \leq \ell \leq m - 1$,  
\begin{equation}
\braket{e_{i,\ell}|e_{j,\ell'}}  = \delta_{\ell\ell'} \left(\delta_{ij} + \epsilon_{0,ij}\right),\quad 
\braket{f_{i,\ell}|f_{j,\ell'}}  = \delta_{\ell\ell'} \left(\delta_{ij} + \epsilon_{1,ij}\right),
\end{equation}
where 
\begin{gather}
    \epsilon_{0,ij} := \braket{D^{(\ell)}_{0,i}| D^{(\ell)}_{0,j}} - \braket{D^{(\ell)}_{0,i}| D^{(\ell'\neq \ell)}_{0,j}} - \delta_{ij}  = O\left(1/m\right) , \\ 
    \epsilon_{1,ij} := \braket{D^{(\ell)}_{1,i}| D^{(\ell)}_{1,j}} - \braket{D^{(\ell)}_{1,i}| D^{(\ell'\neq \ell)}_{1,j}} - \delta_{ij}  = O\left(1/m\right) . 
\end{gather}
Note that $\epsilon_{0,ij}$, $\epsilon_{1,ij}$ are independent from the specific choice of $(\ell,\ell')$ and they are $O(1/m)$ from \eqref{eq:condition-1}, \eqref{eq:condition-2}, \eqref{eq:condition-5} and \eqref{eq:condition-6}. 
%Note that for $\ell \neq \ell'$ and any $i,j$, unlike $\ket{\widehat{J}_{k,i}^{(\ell)}}$ and $\ket{\widehat{J}_{k,j}^{(\ell')}}$ which are approximately orthogonal to each other, $\ket{e_{i,\ell}}$ and $\ket{e_{j,\ell'}}$ (or $\ket{f_{i,\ell}}$ and $\ket{f_{j,\ell'}}$) are strictly orthogonal to each other, which is a key property we will use later. 

First, by direct calculations and using the definitions of $\ket{e_{i,\ell}}$ and $\ket{f_{i,\ell}}$, we have 
% \begin{align}
% V 
% &= \sum_{i=1}^{r} \sum_{\ell=1}^{m-1} \ket{f_{i,\ell}}\bra{e_{i,\ell}} \\
% &= \sum_{i=1}^{r}  \left( \frac{(m-1)^2}{m}\ket{D_{1,i}^{(m)}}  \bra{D_{0,i}^{(m)}} - \frac{m-1}{m} \sum_{\ell=1}^{m-1} \Big( \ket{D_{1,i}^{(m)}}  \bra{D_{0,i}^{(\ell)}} + \ket{D_{1,i}^{(\ell)}}  \bra{D_{0,i}^{(m)}} \Big) + \frac{1}{m}
% \sum_{\ell,\ell'=1}^{m-1} \ket{D_{1,i}^{(\ell)}}  \bra{D_{0,i}^{(\ell')}} \right), 
% \end{align}
\begin{equation}
V = \sum_{i=1}^{r} \sum_{\ell=1}^{m-1} \ket{f_{i,\ell}}\bra{e_{i,\ell}} 
= \sum_{i=1}^{r} \left( \sum_{\ell=1}^m \frac{m-1}{m}\ket{D_{1,i}^{(\ell)}}  \bra{D_{0,i}^{(\ell)}} + \sum_{\ell\neq \ell',\ell,\ell'=1}^m \frac{-1}{m}\ket{D_{1,i}^{(\ell)}}  \bra{D_{0,i}^{(\ell')}} \right).
\end{equation}
Furthermore, 
\begin{align}
    &\quad \;\sum_{i=1}^r\bra{D_{1,i}^{(\ell)}}  B \ket{D_{0,i}^{(\ell)}}\\
    &= \sum_{i,j=1}^r \sum_{\ell'=1}^m \bra{D_{1,j}^{(\ell)}}  (\id-\ket{0_\tl}\bra{0_\tl})(L_i^{(\ell')} \ket{0_\tl}\bra{1_\tl} L_i^{(\ell')\dagger} )(\id-\ket{1_\tl}\bra{1_\tl})  \ket{D_{0,j}^{(\ell)}} \\
    &= \sum_{i,j,i',i''=1}^r  D^{*}_{0,ji'} D_{1,ji''}  \bra{0^\sg_\tl} L^{\dagger}_{i'} (\id - \ket{0^\sg_\tl}\bra{0^\sg_\tl}) L_i \ket{0^\sg_\tl}  \bra{1^\sg_\tl} L^{\dagger}_{i} (\id - \ket{1^\sg_\tl}\bra{1^\sg_\tl}) L_{i''} \ket{1^\sg_\tl} + O\left(\frac{1}{m}\right) 
    \\
    &= \sum_{j=1}^r  \bra{D^\sg_{0,j}} B^\sg \ket{D^\sg_{1,j}} + O\left(\frac{1}{m}\right)  =  \norm{B^\sg}_1 + O\left(\frac{1}{m}\right) ,
\end{align}
where we use \eqref{eq:condition-3}, \eqref{eq:condition-4}, \eqref{eq:condition-7} and \eqref{eq:condition-8} in the second step and \eqref{eq:B-single} in the last step, 
and 
\begin{equation}
    \sum_{i=1}^r\bra{D_{1,i}^{(\ell)}}  B \ket{D_{0,i}^{(\ell')}} = O\left(\frac{1}{m}\right),\quad \text{when }\ell\neq\ell'.
\end{equation}
where we use \eqref{eq:condition-3}, \eqref{eq:condition-4}, \eqref{eq:condition-7} and \eqref{eq:condition-8}. 

Then we have 
\begin{align}
\trace(BV) = \sum_{i,\ell} \bra{e_{i,\ell}} B \ket{f_{i,\ell}}= m \norm{B^\sg}_{1} + O(1),
\end{align}
proving the first equation in \eqref{eq:lower-2-sql}. 

Meanwhile, 
\begin{align}
\norm{V}_\infty^2 &= \big\|V^\dagger V\big\|_\infty = \bigg\|\sum_{i,\ell,j,\ell'} \ket{e_{i,\ell}}\braket{f_{i,\ell}|f_{j,\ell'}}\bra{e_{j,\ell'}}\bigg\|_\infty\\
&= \bigg\|\sum_{i,j,\ell} \left(\delta_{ij} + \epsilon_{1,ij}\right) \ket{e_{i,\ell}}\bra{e_{j,\ell}}\bigg\|_\infty\\
&\leq \max_{\ell \in [1,m]} \left( \Big\|\sum_{i} \ket{e_{i,\ell}}  \bra{e_{i,\ell}}\Big\|_\infty + \Big\|\sum_{i,j} \epsilon_{1,ij} \ket{e_{i,\ell}}  \bra{e_{j,\ell}}\Big\|_\infty \right), 
\end{align}
where in the last step we use the fact that $\ket{e_{i,\ell}}$ and $\ket{e_{j,\ell'}}$ are strictly orthogonal to each other for any $\ell \neq \ell'$ and $i,j$. These two terms (for any $\ell$) can be bounded through the following. Choosing an arbitrary orthonormal set of vectors $\{\ket{\widehat{\v{e}}_i}\}_{i=1}^m$, the first term satisfies
\begin{align}
\Big\|\sum_{i} \ket{e_{i,\ell}}  \bra{e_{i,\ell}}\Big\|_\infty 
&= \Big\|\Big(\sum_{i} \ket{e_{i,\ell}}\bra{\widehat{\v{e}}_i}\Big)\Big(\sum_{j} \ket{\widehat{\v{e}}_j}\bra{e_{j,\ell}}\Big)\Big\|_\infty \\
&= \Big\| \Big(\sum_{i} \ket{\widehat{\v{e}}_i}\bra{e_{i,\ell}}\Big)\Big(\sum_{j} \ket{e_{j,\ell}}\bra{\widehat{\v{e}}_j}\Big) \Big\|_\infty \\
&= \Big\| \sum_{ij} \left( \delta_{ij} + \epsilon_{0,ij} \right) \ket{\widehat{\v{e}}_i}\bra{\widehat{\v{e}}_j}\Big\|_\infty \leq 1 + O\left(\frac{1}{m}\right),
\end{align}
where we use $\norm{A^\dagger A}_\infty = \norm{AA^\dagger}_\infty$ in the second step. The second term satisfies 
\begin{align}
\Big\|\sum_{i,j} \epsilon_{1,ij} \ket{e_{i,\ell}}  \bra{e_{i',\ell}}\Big\|_\infty 
&= \max_{\text{unit vector }\ket{v}}  \abs{ \sum_{i,j} \epsilon_{1,ij}   \braket{v|e_{i,\ell}}  \braket{e_{j,\ell}|v} } \leq \sum_{i,j} \abs{\epsilon_{1,ij} }\abs{\delta_{ij} + \epsilon_{0,ij}} = O\left(\frac{1}{m}\right). 
\end{align}
Therefore, 
\begin{equation}
\norm{V}_\infty^2 \leq 1 + O\left(\frac{1}{m}\right),
\end{equation}
proving the second inequality in \eqref{eq:lower-2-sql}. 

\end{proof}

Using \lemmaref{lemma:V-small-ancilla-sql} and \eqref{eq:gamma-first-step}, we have 
\begin{equation}
\label{eq:gamma-scaling-SQL}
    \gamma_\tl = m \gamma_\tl^\sg - \norm{B}_1 + m \norm{B^\sg}_1 + O(1) = m \gamma_\tl^\sg + O(1).
\end{equation}
Moreover, by direct calculations, we have 
\begin{equation}
\label{eq:hamt-scaling-SQL}
\begin{split}
    \trace\Big(\sum_\ell H^{(\ell)} Z_\tl\Big) 
    &= m \sum_{i=0}^{d-1} \left(\frac{m^{[0]}_i}{m} - \frac{m^{[1]}_i}{m} \right)\bra{i}H\ket{i}\\
    &= m \sum_{i=0}^{d-1} \left(\lambda^{[0]}_i - \lambda^{[1]}_i \right)\bra{i}H\ket{i} + O(1)
    = m \trace(H Z^\sg_\tl) + O(1). 
\end{split}
\end{equation}
\thmref{thm:near-optimal} is then proven for the small-ancilla code$^\sql$. 
Finally, we remark that $\ket{e_{i,\ell}}$ and $\ket{f_{i,\ell}}$ in the proof above generate an asymptotically optimal recovery channel $\mR_{\rm opt}$ of the form in \eqref{eq:recovery} and \eqref{eq:optimal-recovery} for achieving the optimal SQL using the small-ancilla code$^\sql$, similar to the case of achieving the optimal HL using the small-ancilla code. 
}

\subsubsection{The qubit-ancilla random code\texorpdfstring{$^\sql$}{}}

Similar to the ancilla-free random code defined in the main text for achieving the optimal HL when HNLS holds, here we define a qubit-ancilla random code{$^\sql$} that achieves the optimal SQL when HNLS fails. The codewords, acting on $m\geq 3$ probes, are defined as 
\begin{equation}
\label{eq:code-qubit-ancilla-SQL}
\begin{split}
    \ket{0_\tl} &= \frac{1}{\sqrt{\abs{W^\sql_0}}} \sum_{w \in W^\sql_0} e^{i\theta^\sql_w} \ket{\phi_w}_{P^{\otimes m}}  \ket{0}_{A},\\
    \ket{1_\tl} &= \frac{1}{\sqrt{\abs{W^\sql_1}}} \sum_{w \in W^\sql_1} e^{i\theta^\sql_w} \ket{\varphi_w}_{P^{\otimes m}}  \ket{1}_{A},
\end{split}
\end{equation}
where $\{\theta^\sql_w\}_{w\in W_0^\sql\cup W_1^\sql}$ are sampled from a set of independent and identically distributed random variables following the uniform distribution in $[0,2\pi)$. Note that here $\dim(\mH_A) = 2$, i.e., one qubit ancilla is needed. 

In order to prove \thmref{thm:near-optimal} for the ancilla-free random code$^\sql$, we first note that \eqref{eq:hamt-scaling-SQL} holds true for the ancilla-free random code$^\sql$, because
\begin{equation}
    \trace_{\backslash\{\ell\}}(\ket{k_\tl}\bra{k_\tl}) = \sum_{i=0}^{d-1} \frac{m^{[k]}_i}{m}\ket{i}\bra{i}, \quad \text{for } k = 0,1.
\end{equation} 
Thus, it only remains to show 
\begin{equation}
    \gamma_\tl = m \gamma_\tl^\sg - \norm{B}_1 + m \norm{B^\sg}_1 + O(1) = m \gamma_\tl^\sg + O(1). 
\end{equation}
For $\ell \neq \ell'$ and $k = 0,1$, let  
\begin{gather}
    \trace_{\backslash\{\ell,\ell'\}}(\ket{k_\tl}\bra{k_\tl}) = \sum_{i,j=0}^{d-1} m^{[k]}_{ij} \ket{ij}\bra{ij} + \sum_{\substack{i\neq j,\\i,j=0}}^{d-1}\Delta^{(\ell,\ell')}_{k,ij} \ket{ij}\bra{ji}, 
\end{gather}
where 
\begin{equation}
m^{[k]}_{ij} = 
\frac{1}{m(m-1)} \times
\begin{cases}
m^{[k]}_i m^{[k]}_j, & i\neq j,\\
m^{[k]}_i (m^{[k]}_i - 1),& i = j. 
\end{cases}
\end{equation}
and $\Delta^{(\ell,\ell')}_{k,ij}$ are defined for all $k$, $\ell\neq \ell'$ and $i\neq j$. They will contribute to some additional terms when computing $\norm{B}_1$, compared to the small-ancilla case. We will show that these contributions are negligibly small with high probability. 

\sisinewlong{
First, we prove a lemma (similar to \lemmaref{lemma:Delta} in the HL case) that guarantees the smallness of $\Delta^{(\ell,\ell')}_{k,ij}$ with high probability. 

\begin{lemma}
\label{lemma:Delta-sql}
$\Delta^{(\ell,\ell')}_{0,ij}$ (or $\Delta^{(\ell,\ell')}_{1,ij}$) defined above for all $\ell \neq \ell'$ and $i\neq j \in [0,d_0-1]$ (or $i\neq j \in [d_0,d-1]$) satisfies
\begin{equation}
    \prob\left(\abs{\Delta_{k,ij}^{(\ell,\ell')}} > \frac{1}{m^3},\exists k,i,j,\ell,\ell'\right) 
    = e^{-\Omega(m)}.
\end{equation}
\end{lemma}

\begin{proof}
For $\ell \neq \ell'$, 
\begin{gather}
    \trace_{\backslash\{\ell,\ell'\}}(\ket{k_\tl}\bra{k_\tl}) = \sum_{i,j=0}^{d-1} m^{[k]}_{ij} \ket{ij}\bra{ij} + \sum_{\substack{i\neq j,\\i,j=0}}^{d-1}\Delta^{(\ell,\ell')}_{k,ij} \ket{ij}\bra{ji},
\end{gather}
where for $k=0,1$ and $i\neq j$, 
\begin{equation}
    \Delta^{(\ell,\ell')}_{k,ij} = \frac{1}{\abs{W_k^\sql}}\sum_{\widetilde{w} \in \widetilde{W}_{k,ij}^\sql} e^{-i \theta^\sql_{ij\widetilde{w}}} e^{i \theta^\sql_{ji\widetilde{w}}},
\end{equation}
and $\widetilde{W}_{0,ij}^\sql$ (or $\widetilde{W}_{1,ij}^\sql$) for $i\neq j$ and $i,j \in \{0,\cdots,d-1\}$ is the set of strings of length $m-2$ which contains $m^{[k]}_i-1$ $i$'s, $m^{[k]}_j-1$ $j$'s and $m^{[k]}_{l}$ $l$'s for all $l \neq i,j$ and $l \in \{0,\cdots,d-1\}$. 
We have
\begin{gather}
    \bE[\Delta^{(\ell,\ell')}_{k,ij}] = \frac{1}{\abs{W_k^\sql}}\sum_{\widetilde{w} \in \widetilde{W}_{k,ij}^\sql} \bE[e^{-i \theta^\sql_{ij\widetilde{w}}}] \bE[e^{i \theta^\sql_{ji\widetilde{w}}}] = 0, 
    \\
    \bE[\abs{\Delta^{(\ell,\ell')}_{k,ij}}^2] = \frac{1}{\abs{W_k^\sql}^2}\sum_{\widetilde{w},\widetilde{w}' \in \widetilde{W}_{k,ij}^\sql} \bE[e^{-i \theta^\sql_{ij\widetilde{w}}}e^{i \theta^\sql_{ji\widetilde{w}}}e^{i \theta^\sql_{ij\widetilde{w}'}}e^{-i \theta^\sql_{ji\widetilde{w}'}}]  = \frac{\abs{\widetilde{W}_{k,ij}^\sql}}{\abs{W_k^\sql}^2} \leq \frac{1}{\abs{W_k^\sql}}. 
\end{gather}
Using the Chebyshev's inequality, we have for all $i\neq j$, 
$    \prob\left(\Re[\Delta^{(\ell,\ell')}_{k,ij}] > \frac{1}{\sqrt{2} m^3}\right)  \leq\frac{2m^6}{\abs{W_k}}$, 
$    \prob\left(\Im[\Delta^{(\ell,\ell')}_{k,ij}] > \frac{1}{\sqrt{2}m^3}\right)\leq \frac{2m^6}{\abs{W_k^\sql}}$.
Then the union bound implies, 
\begin{equation}
    \prob\left(\abs{\Delta^{(\ell,\ell')}_{k,ij}} > \frac{1}{m^3}\right) \leq \frac{4m^6}{\abs{W_k^\sql}}.
\end{equation}
The total number of $(k,i,j,\ell,\ell')$ is 
\begin{equation}
    \frac{m(m-1)}{2}{d(d-1)}{} + \frac{m(m-1)}{2}{d(d-1)}{} \leq m^2 d^2. 
\end{equation}
Using the union bound again, we have 
\begin{equation}
    \prob\left(\abs{\Delta_{k,ij}^{(\ell,\ell')}} > \frac{1}{m^3},\exists k,i,j,\ell,\ell'\right) \leq \frac{4m^6}{\min\{\abs{W_0^\sql},\abs{W_1^\sql}\}}m^2d^2 = e^{-\Omega(m)},
\end{equation}
(Note that above we use $\abs{W_0^\sql} = \frac{m!}{m^{[0]}_0! m^{[0]}_1! \cdots m^{[0]}_{d-1}!} = e^{\Omega(m)}$ and $\abs{W_1^\sql} = \frac{m!}{m^{[1]}_0! m^{[1]}_1! \cdots m^{[1]}_{d-1}!} = e^{\Omega(m)}$.)

\end{proof}

Now we prove \lemmaref{lemma:V-ancilla-free-sql} which directly leads to \thmref{thm:near-optimal}.

\begin{lemma}
\label{lemma:V-ancilla-free-sql}
For the ancilla-free random code$^\sql$ and $B = \sum_{i=1}^r  \sum_{\ell=1}^m(\id-\ket{0_\tl}\bra{0_\tl})(L_i^{(\ell)} \ket{0_\tl}\bra{1_\tl} L_i^{(\ell)\dagger} )(\id-\ket{1_\tl}\bra{1_\tl})$,, there exists a matrix $V$ such that 
\begin{equation}
\label{eq:lower-4-sql}
\abs{\trace(BV)} = m \norm{B^\sg}_1 + O(1), \quad \text{and}\quad \norm{V}_\infty \leq 1 + O\left(\frac{1}{m}\right),
\end{equation}
with probability $1 - e^{-\Omega(m)}$. Then 
\begin{equation}
    \norm{B}_1 \geq \frac{\abs{\trace(BV)}}{\norm{V}_\infty} \geq m \norm{B^\sg}_1 + O(1). 
\end{equation}
\end{lemma}

\begin{proof}
Here we assume for all $k,i\neq j,\ell\neq\ell'$, 
\begin{equation}
    \abs{\Delta_{k,ij}^{(\ell,\ell')}} < \frac{1}{m^3},
\end{equation}
which is true with probability $1 - e^{-\Omega(m)}$ from \lemmaref{lemma:Delta-sql}.

Let 
\begin{equation}
V = \sum_{i=1}^r \sum_{\ell=1}^{m-1} 
\ket{f_{i,\ell}} \bra{e_{i,\ell}},
\end{equation}
where for $i = 1,\ldots,r$ and $\ell = 1,\ldots, m-1$, 
\begin{equation}
\ket{e_{i,\ell}} = \frac{1}{\sqrt{m}}\sum_{\ell'=1}^{m} \exp\left(-i \frac{2\pi}{m}\ell \ell'\right) \ket{D_{0,i}^{(\ell')}},\quad 
\ket{f_{i,\ell}} = \frac{1}{\sqrt{m}}\sum_{\ell'=1}^{m} \exp\left(-i \frac{2\pi}{m}\ell \ell'\right) \ket{D_{1,i}^{(\ell')}},
\end{equation}
and
\begin{equation}
\ket{D^{(\ell)}_{0,i}} = \sum_{j=1}^{r} D_{0,ij} (\id - \ket{0_\tl}\bra{0_\tl}) L_j^{(\ell)}\ket{0_\tl}
,\quad 
\ket{D^{(\ell)}_{1,i}} = \sum_{j=1}^{r} D_{1,ij} (\id - \ket{1_\tl}\bra{1_\tl}) L_j^{(\ell)}\ket{1_\tl}. 
\end{equation}
Note that for the ancilla-free random code$^\sql$, 
\begin{gather}
 \braket{0_\tl|L_i^{(\ell)}|0_\tl} - \braket{0^\sg_\tl|L_i|0^\sg_\tl}
= \trace\left(\sum_{i=0}^{d-1} \left(\frac{m^{[0]}_i}{m}-\lambda^{[0]}_i\right) \ket{i}\bra{i} \cdot L_i\right) = O\left(\frac{1}{m}\right),\\
 \braket{1_\tl|L_i^{(\ell)}|1_\tl} - \braket{1^\sg_\tl|L_i|1^\sg_\tl}
= \trace\left(\sum_{i=0}^{d-1} \left(\frac{m^{[1]}_i}{m}-\lambda^{[1]}_i\right) \ket{i}\bra{i} \cdot L_i\right) = O\left(\frac{1}{m}\right),\\
 \braket{0_\tl|L_i^{(\ell)\dagger}L_j^{(\ell)}|0_\tl} - \braket{0^\sg_\tl|L_i^{\dagger}L_j|0^\sg_\tl} = \trace\left(\sum_{i=0}^{d-1} \left(\frac{m^{[0]}_i}{m}-\lambda^{[0]}_i\right) \ket{i}\bra{i} \cdot L_i^{\dagger}L_j\right)   =  O\left(\frac{1}{m}\right),\\
 \braket{1_\tl|L_i^{(\ell)\dagger}L_j^{(\ell)}|1_\tl} - \braket{1^\sg_\tl|L_i^{\dagger}L_j|1^\sg_\tl} = \trace\left(\sum_{i=0}^{d-1} \left(\frac{m^{[1]}_i}{m}-\lambda^{[1]}_i\right) \ket{i}\bra{i} \cdot L_i^{\dagger}L_j\right) =  O\left(\frac{1}{m}\right).
\end{gather}
For all $\ell \neq \ell'$, 
\begin{gather}
\label{eq:condition-1-r}
\braket{D^{(\ell)}_{0,i}| D^{(\ell)}_{0,j}}  = 
\sum_{i'j'} D_{0,ii'}^* D_{0,jj'} \bra{0_\tl}L_{i'}^{(\ell)\dagger} (\id - \ket{0_\tl}\bra{0_\tl}) L_{j'}^{(\ell)}\ket{0_\tl} = \delta_{ij} + O\left(\frac{1}{m}\right), 
\\
\label{eq:condition-2-r}
\braket{D^{(\ell)}_{1,i}| D^{(\ell)}_{1,j}}  = 
\sum_{i'j'} D_{1,ii'}^* D_{1,jj'} \bra{1_\tl}L_{i'}^{(\ell)\dagger} (\id - \ket{1_\tl}\bra{1_\tl}) L_{j'}^{(\ell)}\ket{1_\tl} = \delta_{ij} + O\left(\frac{1}{m}\right), 
\\
\label{eq:condition-3-r}
\bra{D^{(\ell)}_{0,i}}(\id - \ket{0_\tl}\bra{0_\tl}) L_j^{(\ell)} \ket{0_\tl} 
= \sum_{i'} D^{*}_{0,ii'}  \bra{0^\sg_\tl} L^{\dagger}_{i'} (\id - \ket{0^\sg_\tl}\bra{0^\sg_\tl}) L_j \ket{0^\sg_\tl} + O\left(\frac{1}{m}\right),
\\
\label{eq:condition-4-r}
\bra{1_\tl} L_j^{(\ell)\dagger} (\id - \ket{1_\tl}\bra{1_\tl}) \ket{D^{(\ell)}_{1,i}}
= \sum_{i''} D_{1,ii''}  \bra{1^\sg_\tl} L^{\dagger}_{j} (\id - \ket{1^\sg_\tl}\bra{1^\sg_\tl}) L_{i''} \ket{1^\sg_\tl} + O\left(\frac{1}{m}\right),
\end{gather}
{\small
\begin{gather}
\label{eq:condition-5-r}
\begin{split}
&\quad\;\braket{D^{(\ell)}_{0,i}| D^{(\ell')}_{0,j}} \\ &= 
\sum_{i'j'} D_{0,ii'}^* D_{0,jj'} \bra{0_\tl}L_{i'}^{(\ell)\dagger} (\id - \ket{0_\tl}\bra{0_\tl}) L_{j'}^{(\ell')}\ket{0_\tl} \\
&= \sum_{i'j'} D_{0,ii'}^* D_{0,jj'} \left(\frac{ \braket{0_\tl|L_{i'}^{(\ell)\dagger}|0_\tl}  \braket{0_\tl|L_{j'}^{(\ell)}|0_\tl} }{m-1} - \sum_{k=0}^{d-1} \frac{m^{[0]}_k}{m(m-1)} \bra{k}L_{i'}^{\dagger}\ket{k}\bra{k}L_{j'}\ket{k} + \sum_{\substack{k\neq k',\\ k,k'=0}}^{d-1} \Delta^{(\ell,\ell')}_{0,kk'} \bra{k'}L_{i}^{\dagger}\ket{k}\bra{k}L_{j}\ket{k'}\right)\\
&=: \eta_{0,ij} + \teta^{(\ell,\ell')}_{0,ij} = O\left(\frac{1}{m}\right) + \teta^{(\ell,\ell')}_{0,ij},
\end{split}\\
\label{eq:condition-6-r}
\begin{split}
&\quad\;\braket{D^{(\ell)}_{1,i}| D^{(\ell')}_{1,j}} \\&= 
\sum_{i'j'} D_{1,ii'}^* D_{1,jj'} \bra{1_\tl}L_{i'}^{(\ell)\dagger} (\id - \ket{1_\tl}\bra{1_\tl}) L_{j'}^{(\ell')}\ket{1_\tl} \\
&= \sum_{i'j'} D_{1,ii'}^* D_{1,jj'} \left(\frac{\braket{1_\tl|L_{i'}^{(\ell)\dagger}|1_\tl}  \braket{1_\tl|L_{j'}^{(\ell)}|1_\tl} }{m-1} - \sum_{k=0}^{d-1} \frac{m^{[1]}_k}{m(m-1)} \bra{k}L_{i'}^{\dagger}\ket{k}\bra{k}L_{j'}\ket{k} + \sum_{\substack{k\neq k',\\ k,k'=0}}^{d-1} \Delta^{(\ell,\ell')}_{1,kk'} \bra{k'}L_{i}^{\dagger}\ket{k}\bra{k}L_{j}\ket{k'}
\right)\\
&=: \eta_{1,ij} + \teta^{(\ell,\ell')}_{1,ij} = O\left(\frac{1}{m}\right) + \teta^{(\ell,\ell')}_{1,ij}, \end{split}
\\
\label{eq:condition-7-r}
\begin{split}
&\quad\;\bra{D^{(\ell)}_{0,i}}(\id - \ket{0_\tl}\bra{0_\tl}) L_j^{(\ell')} \ket{0_\tl} 
= \sum_{i'} D^{*}_{0,ii'} \bra{0_\tl} L^{(\ell)\dagger}_{i'} (\id - \ket{0_\tl}\bra{0_\tl}) L_j^{(\ell')} \ket{0_\tl} \\
&= \sum_{i'} D_{0,ii'}^* \left(\frac{ \braket{0_\tl|L_{i'}^{(\ell)\dagger}|0_\tl}  \braket{0_\tl|L_{j}^{(\ell)}|0_\tl} }{m-1} \!-\! \sum_{k=0}^{d-1} \frac{m^{[0]}_k}{m(m-1)} \bra{k}L_{i'}^{\dagger}\ket{k}\bra{k}L_{j'}\ket{k} \!+\! \sum_{\substack{k\neq k',\\ k,k'=0}}^{d-1} \Delta^{(\ell,\ell')}_{1,kk'} \bra{k'}L_{i}^{\dagger}\ket{k}\bra{k}L_{j}\ket{k'}\right) = O\left(\frac{1}{m}\right), 
\end{split}
\\
\label{eq:condition-8-r}
\begin{split}
&\quad\;\bra{1_\tl} L_j^{(\ell)\dagger} (\id - \ket{1_\tl}\bra{1_\tl}) \ket{D^{(\ell')}_{1,i}} = \sum_{i''} D_{1,ii''} \bra{1_\tl} L^{(\ell)\dagger}_{j} (\id - \ket{1_\tl}\bra{1_\tl}) L_{i''}^{(\ell')} \ket{1_\tl} \\
&=\sum_{j'}  D_{1,jj'} \left(\frac{ \braket{1_\tl|L_{i}^{(\ell)\dagger}|1_\tl}  \braket{1_\tl|L_{j'}^{(\ell)}|1_\tl} }{m-1} \!-\! \sum_{k=0}^{d-1} \frac{m^{[1]}_k}{m(m-1)} \bra{k}L_{i}^{\dagger}\ket{k}\bra{k}L_{j'}\ket{k} \!+\! \sum_{\substack{k\neq k',\\ k,k'=0}}^{d-1} \Delta^{(\ell,\ell')}_{1,kk'} \bra{k'}L_{i}^{\dagger}\ket{k}\bra{k}L_{j}\ket{k'}\right) \!=\!  O\left(\frac{1}{m}\right),
\end{split}
\end{gather}}where $\eta_{0,ij}$ (or $\eta_{1,ij}$) corresponds to the first two terms in the third line of \eqref{eq:condition-5-r} (or \eqref{eq:condition-6-r}) that is independent of the specific choice of $\ell$ and $\ell'$; and $\teta^{(\ell,\ell')}_{0,ij}$ (or $\teta^{(\ell,\ell')}_{1,ij}$) corresponds to the last term in the third line of \eqref{eq:condition-5-r} (or \eqref{eq:condition-6-r}) . 

Using the mathematical relations above, we know, for all $1 \leq \ell \leq m - 1$,  
\begin{equation}
\braket{e_{i,\ell}|e_{j,\ell'}}  = \delta_{\ell\ell'} \left(\delta_{ij}+\tepsilon_{0,ij}\right) + \tepsilon^{(\ell,\ell')}_{0,ij},\quad 
\braket{f_{i,\ell}|f_{j,\ell'}}  = \delta_{\ell\ell'} \left(\delta_{ij}+\tepsilon_{1,ij}\right) + \tepsilon^{(\ell,\ell')}_{1,ij},
\end{equation}
where for $k = 0,1$ and all $i,j$, 
\begin{gather}
    \epsilon_{k,ij} := \braket{D^{(\ell)}_{k,i}| D^{(\ell)}_{k,j}} - \eta_{k,ij} - \delta_{ij}  = O\left(\frac{1}{m}\right) , \\ 
\tepsilon^{(\ell,\ell')}_{k,ij} := \frac{1}{m} \sum_{l,l'=1}^m \exp\left({-i\frac{2\pi}{m} (\ell l-\ell' l') } \right) \teta^{(l,l')}_{k,i,j} = O\left(\frac{1}{m^2}\right). 
\end{gather}
Note that the new terms $\tepsilon^{(\ell,\ell')}_{k,ij}$ appear in the case of the ancilla-free random code, compared to the previous small-ancilla case. However, as we will show later, the contribution of $\tepsilon^{(\ell,\ell')}_{k,ij}$ is vanishingly small. 

First, by direct calculations and using the definitions of $\ket{e_{i,\ell}}$ and $\ket{f_{i,\ell}}$, we have 
\begin{equation}
V = \sum_{i=1}^{r} \sum_{\ell=1}^{m-1} \ket{f_{i,\ell}}\bra{e_{i,\ell}} 
= \sum_{i=1}^{r} \left( \sum_{\ell=1}^m \frac{m-1}{m}\ket{D_{1,i}^{(\ell)}}  \bra{D_{0,i}^{(\ell)}} + \sum_{\ell\neq \ell',\ell,\ell'=1}^m \frac{-1}{m}\ket{D_{1,i}^{(\ell)}}  \bra{D_{0,i}^{(\ell')}} \right). 
\end{equation}
Furthermore, from \eqref{eq:condition-3-r}, \eqref{eq:condition-4-r}, \eqref{eq:condition-7-r} and \eqref{eq:condition-8-r}, we have 
\begin{equation}
    \sum_{i=1}^r\bra{D_{1,i}^{(\ell)}}  B \ket{D_{0,i}^{(\ell)}}
    = \norm{B^\sg}_1 + O\left(\frac{1}{m}\right), \quad
    \text{and}\quad
    \sum_{i=1}^r\bra{D_{1,i}^{(\ell)}}  B \ket{D_{0,i}^{(\ell')}}
    = O\left(\frac{1}{m}\right),\quad \text{when }\ell\neq\ell'.
\end{equation}
Then we have 
\begin{align}
\trace(BV) = \sum_{i,\ell} \bra{e_{i,\ell}} B \ket{f_{i,\ell}}= m \norm{B^\sg}_1 + O(1),
\end{align}
proving the first equation in \eqref{eq:lower-4-sql}.

Meanwhile, we want to prove an upper bound of $1 + O(1/m)$ on 
\begin{align}
\norm{V}_\infty^2 &= \big\|V^\dagger V\big\|_\infty = \bigg\|\sum_{i,\ell,j,\ell'} \ket{e_{i,\ell}}\braket{f_{i,\ell}|f_{j,\ell'}}\bra{e_{j,\ell'}}\bigg\|_\infty\\ 
&= \bigg\|\sum_{i,j,\ell,\ell'} \left(\delta_{\ell\ell'} \left(\delta_{ij} + \epsilon_{0,ij} \right) + \tepsilon^{(\ell,\ell')}_{1,ij} \right) \ket{e_{i,\ell}}\bra{e_{j,\ell'}}\bigg\|_\infty\\ 
& = \max_{\text{unit vector }\ket{v}} \sum_{i,j,\ell,\ell'} \left(\delta_{\ell\ell'} \left( \delta_{ij} + \epsilon_{1,ij} \right) + \tepsilon^{(\ell,\ell')}_{1,ij} \right) \braket{v|e_{i,\ell}}\braket{e_{j,\ell'}|v}. 
\end{align}
Let $\ket{v} = \sum_{i,\ell} v_{i,\ell} \ket{e_{i,\ell}}$ be an arbitrary unitary vector. Then using 
\begin{align}
    1 &= \braket{v|v} = \sum_{i,j,\ell,\ell'}  v^*_{i,\ell}v_{j,\ell'} \braket{e_{i,\ell}|e_{j,\ell'}} = 
    \sum_{i,\ell}  |v_{i,\ell}|^2  
    + \sum_{i,j,\ell}  v^*_{i,\ell}v_{j,\ell} \epsilon_{0,ij} + 
    \sum_{i,j,\ell,\ell'}  v^*_{i,\ell}v_{j,\ell'}  \tepsilon^{(\ell,\ell')}_{0,ij}\\
    & \geq \sum_{i,\ell}  |v_{i,\ell}|^2  
    - \sum_{i,j,\ell}  \frac{|v_{i,\ell}|^2 + |v_{j,\ell}|^2}{2}  \abs{\epsilon_{0,ij}} - 
    \sum_{i,j,\ell,\ell'}  \frac{|v_{i,\ell}|^2 + |v_{j,\ell'}|^2}{2}   |\tepsilon^{(\ell,\ell')}_{0,ij}|\\
    & =: \sum_{i,\ell}  |v_{i,\ell}|^2  \left(1 + \nu_{i,\ell}\right), \quad \text{where }|\nu_{i,\ell}| = O\left(\frac{1}{m}\right), 
\end{align}
and 
\begin{align}
&\quad \sum_{i,j,\ell,\ell'} \left(\delta_{\ell\ell'} \left( \delta_{ij} + \epsilon_{1,ij}\right)  + \tepsilon^{(\ell,\ell')}_{1,ij}\right) \braket{v|e_{i,\ell}}\braket{e_{j,\ell'}|v} \\
&= 
\sum_{i,j,\ell,\ell',k,k',q,q'} v_{k,q}^* v_{k',q'}  \left(\delta_{\ell\ell'} \left( \delta_{ij} + \epsilon_{1,ij} \right)  + \tepsilon^{(\ell,\ell')}_{1,ij} \right) 
\left(\delta_{q\ell} \left( \delta_{ki} + \epsilon_{0,ki} \right)  + \tepsilon^{(q,\ell)}_{0,ki}\right)\left(\delta_{\ell'q'} \left( \delta_{jk'} + \epsilon_{0,jk'} \right)  + \tepsilon^{(\ell',q')}_{0,jk'}\right)\\
&\leq \sum_{k,k',q,q'}\frac{|v_{k,q}|^2 + |v_{k',q'}|^2}{2} \Bigg| \sum_{i,j,\ell,\ell'} \left(\delta_{\ell\ell'}  \left( \delta_{ij} + \epsilon_{1,ij} \right)  + \tepsilon^{(\ell,\ell')}_{1,ij} \right) 
\left(\delta_{q\ell} \left( \delta_{ki} + \epsilon_{0,ki} \right)  + \tepsilon^{(q,\ell)}_{0,ki}\right) 
\nonumber \\ & \qquad \qquad \qquad\qquad \qquad \qquad  \qquad \qquad \qquad  \qquad \qquad \qquad  \qquad \qquad \qquad  \times \left(\delta_{\ell'q'} \left( \delta_{jk'} + \epsilon_{0,jk'} \right)  + \tepsilon^{(\ell',q')}_{0,jk'}\right) \Bigg|\\
&=: \sum_{i,\ell} |v_{i,\ell}|^2 \left(1 + \tilde\nu_{i,\ell}\right), \quad \text{where }|\tilde\nu_{i,\ell}| = O\left(\frac{1}{m}\right), 
\end{align}
we have 
\begin{align}
    \norm{V}_\infty^2 
    &\leq \max_{\text{unit vector }\ket{v}} \sum_{i,\ell} |v_{i,\ell}|^2 \left(1 + \tilde\nu_{i,\ell}\right) \leq  \max_{\text{unit vector }\ket{v}} \sum_{i,\ell} |v_{i,\ell}|^2 \left(1 + \tilde\nu_{i,\ell} + \nu_{i,\ell} - \nu_{i,\ell}\right)\\
    &\leq 1 + \max_{\text{unit vector }\ket{v}} \sum_{i,\ell} |v_{i,\ell}|^2 \abs{ 1 + \nu_{i,\ell}}  \abs{ \frac{\tilde\nu_{i,\ell}- \nu_{i,\ell}}{1 + \nu_{i,\ell}} } = 1 + O\left(\frac{1}{m}\right),
\end{align}
where the last step follows from $\tilde\nu_{i,\ell} = O(1/m)$, 
$\nu_{i,\ell} = O(1/m)$ and $\sum_{i,\ell} |v_{i,\ell}|^2 (1 + \nu_{i,\ell} ) = 1$. 
\end{proof}

Using \lemmaref{lemma:V-ancilla-free-sql} and \eqref{eq:gamma-first-step}, we have 
\begin{equation}
    \gamma_\tl = m \gamma_\tl^\sg - \norm{B}_1 + m \norm{B^\sg}_1 + O(1) = m \gamma_\tl^\sg + O(1),
\end{equation}
proving \thmref{thm:near-optimal} for the ancilla-free random code$^\sql$. 

Finally, we remark that $\ket{e_{i,\ell}}$ and $\ket{f_{i,\ell}}$ in the proof above generate an asymptotically optimal recovery channel $\mR_{\rm opt}$ of the form in \eqref{eq:recovery} and \eqref{eq:optimal-recovery-random} for achieving the optimal SQL using the ancilla-free random code$^\sql$, similar to the case of achieving the optimal HL using the ancilla-free random code. 
}

\sisilong{
\section{Achieving the optimal HL: Beyond Hamiltonian estimation under Markovian noise}
\label{app:beyond}

In the main text and the appendixes above, we assumed that the unknown parameter $\omega$ to be estimated is the linear factor of the Hamiltonian and that the Lindblad operators are exactly known. In fact, when $H \in \mS$ our discussion can be generalized naturally to cases where both the Hamiltonian and the Lindblad operators are functions of the unknown parameter. The code construction and the proof of optimality directly follow from our previous discussion. We also remark here that when $H \in \mS$, although we expect that a similar generalization holds, different proof techniques are needed and we will leave it for future work. 

In this appendix, we will first review previous results on the QFI upper bounds in noisy quantum metrology in general cases. Then we will show that the small-ancilla code and the ancilla-free random code from our previous constructions attain the optimal HL even in general cases, beyond Hamiltonian estimation under Markovian noise. 

\subsection{Upper bound on the QFI}
\label{app:notation-beyond}

We consider the following evolution of an open quantum system
\begin{equation}
\label{eq:generalized-setting}
\frac{d\rho}{dt} = -i[E(\omega),\rho] + \sum_{i=1}^r L_i(\omega) \rho L_i(\omega)^\dagger - \frac{1}{2}\{L_i(\omega)^\dagger L_i(\omega),\rho\}, 
\end{equation}
where $E(\omega)$ is the Hamiltonian and $L_i(\omega)$ are Lindblad operators. Both of them are functions of $\omega$. (Note that we use letter ``$E$'' to represent the Hamiltonian here to distinguish it from ``$H$'' that we have used in the main text to denote the derivative of the Hamiltonian with respect to $\omega$.) Previous works have been done to derive the upper bounds of the QFI under the evolution above. First, from Eq.~(F11) in~\cite{kurdzialek2022using}, we know that the QFI of a quantum state at time $T$ has the following upper bound in the scenario of adaptive quantum strategies (see \figaref{fig:strategy}, where $\mE_\omega(\rho) = \rho + \left( -i[E(\omega),\rho] + \sum_{i=1}^r L_i(\omega) \rho L_i(\omega)^\dagger - \frac{1}{2}\{L_i(\omega)^\dagger L_i(\omega),\rho\}\right) dt + O(dt^2)$) 
\begin{equation}
\frac{dF(\rho_\omega(T))}{dT} \leq 4 \min_{\substack{h_{00} \in \bR,\,\vh \in \bC^r,\\ \text{~Hermitian~}\frakh \in \bC^{r \times r}}}\left( \norm{O_1} + \norm{O_2} \sqrt{F(\rho_\omega(T))} \right), 
\end{equation}
where
\begin{equation}
O_1 = \sum_i \bigg(\vh_i \id + \sum_j \frakh_{ij} L_j(\omega) - i \partial_\omega  L_i(\omega)\bigg)^\dagger \bigg(\vh_i \id +  \sum_j \frakh_{ij} L_j(\omega) - i \partial_\omega L_i(\omega)\bigg), 
\end{equation}
\begin{equation}
O_2 = \partial_\omega E(\omega) - \frac{i}{2} \sum_{i} \left( \partial_\omega L_i(\omega)^\dagger L_i(\omega) - L_i(\omega) \partial_\omega L_i(\omega)\right) - h_{00} \id - \sum_{j}(\vh_j^* L_j(\omega) + \vh_j L_j(\omega)^\dagger) - \sum_{ij} \frakh_{ij}L_i(\omega)^\dagger L_j(\omega). 
\end{equation}
and calculations from Eq.~(C17) in \cite{wan2022bounds} indicate that 
\begin{equation}
F(\rho_\omega(T)) \leq 4 \min_{\substack{h_{00} \in \bR,\,\vh \in \bC^r,\\ \text{~Hermitian~}\frakh \in \bC^{r \times r}}} \norm{O_1} T + \norm{O_2}^2 T^2 + 2\norm{O_2}\sqrt{\norm{O_1}} T^{3/2}, 
\end{equation}
In the multi-probe scenario where $N$ is the number of probes and $t$ is the probing time (see \figbref{fig:strategy}), we can replace $T$ with $N \cdot t$ in the inequality above, and we have 
\begin{equation}
\label{eq:upper-beyond}
F(\rho_\omega(t)) \leq 4 \min_{\substack{h_{00} \in \bR,\,\vh \in \bC^r,\\ \text{~Hermitian~}\frakh \in \bC^{r \times r}}} \norm{O_1} Nt + \norm{O_2}^2 (Nt)^2 + 2\norm{O_2}\sqrt{\norm{O_1}} (Nt)^{3/2}, 
\end{equation}
where $N$ is the number of probes. 

Before we proceed to prove the optimality of our ancilla-free QEC construction, we first explain how the definitions of operators, codes, and subspaces in the main text are extended to the general case here. For the simplicity of notations, below we will drop the explicit $\omega$-dependence in operators, e.g., in $E(\omega)$ and $L_i(\omega)$. We define
\begin{equation}
\label{eq:generalized-H}
H := \partial_\omega E - \frac{i}{2} \sum_{i} \left( (\partial_\omega L_i)^\dagger L_i - L_i^\dagger (\partial_\omega L_i)\right),
\end{equation}
and $\mS$ to be the linear subspace of Hermitian operators spanned by $\id,L_i,L_i^\dagger$ and $L_i^\dagger L_j$ for all $i,j$. 
Note that the above definitions of $H$ and $\mS$ coincide with the ones in the main text under the scenario of Hamiltonian estimation under Markovian noise. We can also extend the definitions of $\rho_0$, $\rho_1$, $\lambda_i$, $\{\ket{i}_P\}_{i=0}^{d-1}$, $\{\ket{i}_A\}_{i=0}^{d-1}$ through \eqref{eq:rho-condition-1}, \eqref{eq:rho-condition-2} and \eqref{eq:code-ancilla}, using our extended definitions of $H$ and $\mS$. The extended definitions of the small-ancilla code and the ancilla-free random code also follow respectively from \eqref{eq:code-few-ancilla} and \eqref{eq:code-ancilla-free}. We can then define $\frakL_k = {\rm span}\{\ket{k_\tl},L_i^{(\ell)}\ket{k_\tl},\forall i,\ell\}$ for $k=0,1$ as before, and the gauge constraints \eqref{eq:gauge-1} and \eqref{eq:gauge-2} are assumed to be satisfied here.

From the upper bound of QFI in \eqref{eq:upper-beyond}, we have the following upper bound on the asymptotic HL coefficient 
\begin{equation}
\label{eq:upper-beyond-HL}
\sup_{t > 0} \lim_{N\rightarrow \infty}  \frac{F(\rho_\omega(t))}{N^2t^2} \leq 4 \norm{H-\mS}^2,
\end{equation}
which is positive only when $H \notin \mS$. 
When $H \in \mS$, the HL is not attainable and the SQL coefficients has the following upper bound
\begin{equation}
\sup_{t > 0} \lim_{N\rightarrow \infty}  \frac{F(\rho_\omega(t))}{N t}  \leq 4 o_1,
\end{equation}
where 
\begin{equation}
\label{eq:upper-beyond-SQL}
    o_1 := \min_{\substack{\vh \in \bC^r,\text{~Hermitian~}\frakh \in \bC^{r \times r}, \text{ s.t.} \\ H - \sum_{j}(\vh_j^* L_j + \vh_j L_j^\dagger) - \sum_{ij} \frakh_{ij}L_i^\dagger L_j \propto \id}} \norm{O_1}.
\end{equation}

Below, we will use results from \appref{app:dephasing} that in the limit of infinitely fast QEC, when the recovery channel can be written as 
\begin{multline}
    \mR(\cdot) = \sum_p (\ket{0_\tl}\bra{R_p}+\ket{1_\tl}\bra{S_p})(\cdot) (\ket{R_p}\bra{0_\tl}+\ket{S_p}\bra{1_\tl}) \\
    +  \trace((\cdot)(\Pi_0-\tPi_R)) \ket{0_\tl}\bra{0_\tl}
    +  \trace((\cdot)(\id - \Pi_0 - \tPi_S)) \ket{1_\tl}\bra{1_\tl} ,
\end{multline}
where $\{\ket{R_p},\forall p\}$, $\{\ket{S_p},\forall p\}$ are two sets of vectors such that $\Pi_0 - \tPi_{R}$ and $\Pi_1 - \tPi_{S}$ are positive semidefinite (We use $\Pi_{0,1}$ to denote projections onto $\frakL_{0,1}$ and we define $\tPi_{R} := \sum_p \ket{R_p}\bra{R_p}$ and $\tPi_{S} := \sum_p \ket{S_p}\bra{S_p}$.), the logical master equation is 
\begin{equation}
\label{eq:logical-master-beyond}
\frac{d\rho_\tl}{dt} = -i\left[\frac{\trace(\sum_\ell E^{(\ell)} Z_\tl)}{2} Z_\tl + \frac{\beta_\tl(\mR)}{2} Z_\tl,\rho_\tl\right] + \frac{\gamma_\tl(\mR)}{2}(Z_\tl \rho_\tl Z_\tl - \rho_\tl),
\end{equation}
where $Z_\tl = \ket{0_\tl}\bra{0_\tl} - \ket{1_\tl}\bra{1_\tl}$, 
\begin{equation}
\gamma_\tl(\mR) = -\frac{1}{2}(x+x^*),\quad 
\beta_\tl(\mR) = -\frac{1}{2i}(x-x^*),
\end{equation}
and 
\begin{multline}
x= \sum_{i=1}^r \sum_{\ell=1}^m \bra{0_\tl} L_i^{(\ell)} \ket{0_\tl} \bra{1_\tl} L_i^{(\ell)\dagger} \ket{1_\tl}- \frac{1}{2}(\bra{0_\tl} L_i^{(\ell)\dagger} L_i^{(\ell)} \ket{0_\tl} + \bra{1_\tl} L_i^{(\ell)\dagger} L_i^{(\ell)} \ket{1_\tl}) \\ + \sum_{p}\bra{R_p}(\id - \ket{0_\tl}\bra{0_\tl})(L_i^{(\ell)} \ket{0_\tl}\bra{1_\tl} L_i^{(\ell)\dagger} )(\id - \ket{1_\tl}\bra{1_\tl})\ket{S_p}. 
\end{multline}
Unlike the case of Hamiltonian estimation where $\beta_\tl(\mR)$ is independent of $\omega$ and can be ignored, here $\beta_\tl(\mR)$ is a function of $\omega$ and should be taken into consideration. 

We will show that when $H\notin \mS$, for both the small-ancilla code and the ancilla-free random code, there exists an input state and a recovery channel that together achieve the optimal HL. 

The obstacle in proving a similar result when $H \in \mS$ is that the optimal recovery channel is not necessarily the one that minimizes the noise rate. Therefore, some techniques used before in calculating the trace norm of operators are no longer useful. Therefore, we only restrict our discussion to the $H \notin \mS$ case in this appendix.  

\subsection{Code optimality when HNLS is satisfied}

Assume $H \notin \mS$. We will show that the small-ancilla code achieves the upper bound in \eqref{eq:upper-beyond-HL} and the ancilla-free random code achieves it with high probability. Consider an input state 
\begin{equation}
\ket{\psi_{\rm in}} = \frac{\ket{0_\tl} + \ket{1_\tl}}{\sqrt{2}},  
\end{equation}
where $\{\ket{0_\tl},\ket{1_\tl}\}$ is a QEC code of $m = N$ probes. 
Then the output state is 
\begin{equation}
\rho_\omega(t) = 
\begin{pmatrix}
\frac{1}{2} & \frac{1}{2}e^{-i \left(\trace(\sum_\ell E^{(\ell)} Z_\tl) + \beta_\tl(\mR)\right) t - \gamma_\tl(\mR) t} \\ 
\frac{1}{2}e^{i \left(\trace(\sum_\ell E^{(\ell)} Z_\tl) + \beta_\tl(\mR)\right) t - \gamma_\tl(\mR) t} & \frac{1}{2}
\end{pmatrix}, 
\end{equation}
and its QFI can be calculated as 
\begin{equation}
F(\rho_\omega(t)) = t^2 e^{- 2\gamma_\tl(\mR) t} \left(\trace\left(\sum_\ell \partial_\omega E^{(\ell)} Z_\tl\right) + \partial_\omega\beta_\tl(\mR)\right)^2 + \frac{(- t \partial_\omega \gamma_\tl (\mR)e^{- \gamma_\tl(\mR) t})^2}{1 - e^{- 2\gamma_\tl(\mR) t}}. 
\end{equation}
Our goal is to prove that there exists a recovery channel $\mR$ such that 
\begin{equation}
\label{eq:beyond-goal}
    \gamma_\tl(\mR) = O(1),\quad \text{and} \quad 
    \trace\left(\sum_\ell \partial_\omega E^{(\ell)} Z_\tl\right) + \partial_\omega\beta_\tl(\mR) = 2 m \norm{H - \mS} + o(m). 
\end{equation}
which then indicates that 
\begin{equation}
\sup_{t > 0} \lim_{N\rightarrow \infty}  \frac{F(\rho_\omega(t))}{N^2t^2} = \sup_{t > 0} \lim_{m\rightarrow \infty}  \frac{F(\rho_\omega(t))}{m^2t^2} \geq 4 \norm{H - \mS}^2,
\end{equation}
i.e., the QFI upper bound is attained. 
\begin{theorem}[Achieving the optimal HL beyond Hamiltonian estimation]
    Consider a quantum system evolution described by \eqref{eq:beyond-lindblad}. The optimal HL, i.e., 
    \begin{equation}
\sup_{t > 0} \lim_{N\rightarrow \infty}  \frac{F(\rho_\omega(t))}{N^2t^2} = 4 \norm{H - \mS}^2,
\end{equation}
is achievable using the small-ancilla code, or the ancilla-free random code with probability $1 - e^{-\Omega(m)}$. (Notations and codes were defined in \appref{app:notation-beyond}.)
\end{theorem}

\begin{proof}
We first show \eqref{eq:beyond-goal} for the small-ancilla code. We follow the same definitions in \appref{app:limit-1} where we defined for $i = 1,\ldots,r$ and $\ell = 1,\ldots, m-1$, 
\begin{equation}
\ket{e_{i,\ell}} = \frac{1}{\sqrt{m}}\sum_{\ell'=1}^{m} \exp\left(-i \frac{2\pi}{m}\ell \ell'\right) \ket{\widehat{J}_{0,i}^{(\ell')}},\quad 
\ket{f_{i,\ell}} = \frac{1}{\sqrt{m}}\sum_{\ell'=1}^{m} \exp\left(-i \frac{2\pi}{m}\ell \ell'\right) \ket{\widehat{J}_{1,i}^{(\ell')}},
\end{equation}
where 
\begin{equation}
\ket{\widehat{J}_{k,i}^{(\ell)}} = \frac{1}{\sqrt{\mu_i}}(L_i^{(\ell)} - b_{k,i})\ket{k_\tl}\text{~~and~~}b_{k,i} = \braket{k_\tl | L_i^{(\ell)}| k_\tl}, 
\end{equation}
for $i = 1,\ldots,r$, $\ell = 1,\ldots, m$ and $k = 0,1$. Then we define the following recovery channel $\mR_{\rm opt}$ (as in \eqref{eq:optimal-recovery})
\begin{multline}
    \mR_{\rm opt}(\cdot) = \sum_{i,\ell} (\ket{0_\tl}\bra{R_{(i,\ell)}}+\ket{1_\tl}\bra{S_{(i,\ell)}})(\cdot) (\ket{R_{(i,\ell)}}\bra{0_\tl}+\ket{S_{(i,\ell)}}\bra{1_\tl}) 
    \\ +  \trace((\cdot)(\Pi_0-\tPi_R)) \ket{0_\tl}\bra{0_\tl}
    +  \trace((\cdot)(\id - \Pi_0 - \tPi_S)) \ket{1_\tl}\bra{1_\tl} ,
\end{multline}
where 
\begin{equation}
\ket{R_{(i,\ell)}} = \frac{\ket{e_{i,\ell}}}{\sqrt{\norm{\sum_{j,k}\ket{e_{j,k}}\bra{e_{j,k}}}_\infty}},\quad 
\ket{S_{(i,\ell)}} = \frac{\ket{f_{i,\ell}}}{\sqrt{\norm{\sum_{j,k}\ket{f_{j,k}}\bra{f_{j,k}}}_\infty}},
\end{equation}
and we use $\Pi_{0,1}$ to denote projections onto $\frakL_{0,1}$ and $\tPi_{R} = \sum_{i,\ell} \ket{R_{(i,\ell)}}\bra{R_{(i,\ell)}}$ and $\tPi_{S} = \sum_{i,\ell} \ket{S_{(i,\ell)}}\bra{S_{(i,\ell)}}$. As we have shown in \appref{app:limit-1}, $\gamma_\tl(\mR_{\rm opt}) = O(1)$, and we only need to show 
\begin{equation}
\label{eq:beyond-signal}
    \trace\left(\sum_\ell \partial_\omega E^{(\ell)} Z_\tl\right) + \partial_\omega\beta_\tl(\mR_{\rm opt}) = 2 m \norm{H - \mS} + o(m).
\end{equation}
First, we have 
\begin{equation}
    \trace\left(\sum_\ell \partial_\omega E^{(\ell)} Z_\tl\right) = m \trace\left(\partial_\omega E \sum_{i=0}^{d_0-1} \frac{m_{i}}{m} \ket{i}\bra{i} - \sum_{i=d_0}^{d-1} \frac{m_{i}}{m} \ket{i}\bra{i}\right) = m \trace\left(\partial_\omega E (\rho_0 - \rho_1)\right) + O(1). \label{eq:beyond-hamt}
\end{equation}
Second, we have 
\begin{align}
    &\quad \; \partial_\omega\beta_\tl(\mR_{\rm opt}) \\
    & =  \sum_{i=1}^r \sum_{\ell=1}^m - \Im[\bra{0_\tl} \partial_\omega L_i^{(\ell)} \ket{0_\tl} \bra{1_\tl} L_i^{(\ell)\dagger} \ket{1_\tl}] - \Im[\bra{0_\tl} L_i^{(\ell)} \ket{0_\tl} \bra{1_\tl} \partial_\omega  L_i^{(\ell)\dagger} \ket{1_\tl}] \\& \quad -  \sum_{i,i'=1}^{r}\sum_{\ell=1}^{m-1} \sum_{\ell'=1}^m \Im[\bra{R_{(i',\ell')}}\partial_\omega  L_i^{(\ell)} \ket{0_\tl}\bra{1_\tl} L_i^{(\ell)\dagger} \ket{S_{(i',\ell')}}] + \Im[\bra{R_{(i',\ell')}} L_i^{(\ell)} \ket{0_\tl}\bra{1_\tl} \partial_\omega  L_i^{(\ell)\dagger} \ket{S_{(i',\ell')}}]\\
     & = O(1) -  \sum_{i,i'=1}^{r}\sum_{\ell=1}^{m-1} \sum_{\ell'=1}^m  \Im[\bra{R_{(i',\ell')}}\partial_\omega  L_i^{(\ell)} \ket{0_\tl}\bra{1_\tl} L_i^{(\ell)\dagger} \ket{S_{(i
    ',\ell')}}] +  \Im[\bra{R_{(i',\ell')}} L_i^{(\ell)} \ket{0_\tl}\bra{1_\tl} \partial_\omega  L_i^{(\ell)\dagger} \ket{S_{(i',\ell')}}]\label{eq:beta-temp},
\end{align}
where we use $\braket{R_{(i,\ell)}|0_\tl} = \braket{S_{(i,\ell)}|1_\tl} = 0$ and $\braket{k_\tl|L_i^{(\ell)}|k_\tl} = O(1/m)$ for $k = 0,1$. 
Furthermore, a few lines of calculation show that 
\begin{gather}
\braket{0_\tl|L_{i}^{(\ell)\dagger}|e_{i',\ell'}} 
= \frac{1}{\sqrt{m}}\exp\left(-i \frac{2\pi}{m}{\ell\ell'}\right)  \left(\sqrt{\mu_{i}}\delta_{ii'} + O\left(\frac{1}{m}\right)\right),\label{eq:beyond-condition-1}
\\
\braket{1_\tl|L_{i}^{(\ell)\dagger}|f_{i',\ell'}} 
= \frac{1}{\sqrt{m}}\exp\left(-i \frac{2\pi}{m}{\ell\ell'}\right)  \left(\sqrt{\mu_{i}}\delta_{ii'} + O\left(\frac{1}{m}\right)\right),\label{eq:beyond-condition-2}\\
\braket{0_\tl|\partial_\omega L_{i}^{(\ell)\dagger}|e_{i',\ell'}} 
= \frac{1}{\sqrt{m}} \exp\left(-i \frac{2\pi}{m}{\ell \ell'}\right)  \left( \frac{1}{\sqrt{\mu_{i'}}} 
\trace(\rho_0 \partial_\omega L_{i}^{\dagger }L_{i'}) + O\left(\frac{1}{m}\right)\right) ,\label{eq:beyond-condition-3}\\ 
\braket{1_\tl|\partial_\omega  L_{i}^{(\ell)\dagger}|f_{i',\ell'}} 
= \frac{1}{\sqrt{m}} \exp\left(-i \frac{2\pi}{m}{\ell \ell'}\right) \left( \frac{1}{\sqrt{\mu_{i'}}} \trace(\rho_1 \partial_\omega L_{i}^{\dagger }L_{i'}) + O\left(\frac{1}{m}\right) \right).\label{eq:beyond-condition-4} 
\end{gather}
The $O(1/m)$ terms above are independent of $\ell$ and $\ell'$. Additionally, from \eqref{eq:ee-matrix-upper} and \eqref{eq:ff-matrix-upper} 
\begin{gather}
\bigg\|\sum_{j,k}\ket{e_{j,k}}\bra{e_{j,k}}\bigg\|_\infty = 1 + O\left(\frac{1}{m}\right),\label{eq:ee-matrix}\\
\bigg\|\sum_{j,k}\ket{f_{j,k}}\bra{f_{j,k}}\bigg\|_\infty = 1 + O\left(\frac{1}{m}\right),\label{eq:ff-matrix}
\end{gather}
where the norms of the two operators are lower bounded by $1+O(1/m)$ can be shown by noting that the expectation values of them on $\ket{e_{j,k}}$ and $\ket{f_{j,k}}$ respectively are lower bounded by $1+O(1/m)$. 
Then we have, from \eqref{eq:beta-temp}, \eqref{eq:ee-matrix} and \eqref{eq:ff-matrix}, that 
\begin{align}
\partial_\omega \beta_\tl(\mR_{\rm opt}) &= O(1) + m \sum_{i=1}^r \Im[\trace(\rho_0 L_i^\dagger \partial_\omega L_i)] + \Im[\trace(\rho_1 \partial_\omega L_i^\dagger  L_i)] \\&= m \frac{i}{2}\trace((\rho_0-\rho_1) L_i^\dagger \partial_\omega L_i - \partial_\omega L_i^\dagger L_i) + O(1). \label{eq:beyond-lindblad}
\end{align}
Combining \eqref{eq:beyond-hamt} and \eqref{eq:beyond-lindblad}, \eqref{eq:beyond-signal} is then proven.

We have shown above the optimality of the small-ancilla code. For the ancilla-free random code, we adopt the same definitions of $\ket{e_{i,\ell}}$, $\ket{f_{i,\ell}}$, $\ket{\widehat{J}_{k,i}^{(\ell)}}$, $\mR_{\rm opt}$, etc., as in \appref{app:limit-2}. Then the proof of optimality follows identically from the proof above, except that \eqref{eq:beyond-condition-1}--\eqref{eq:beyond-condition-4} should be replaced with 
\begin{gather}
\braket{0_\tl|L_{i}^{(\ell)\dagger}|e_{i',\ell'}} 
= \frac{1}{\sqrt{m}}\exp\left(-i \frac{2\pi}{m}{\ell\ell'}\right)  \left(\sqrt{\mu_{i}}\delta_{ii'} + O\left(\frac{1}{m}\right) + O\left(\frac{1}{m^2}\right)\right),
\\
\braket{1_\tl|L_{i}^{(\ell)\dagger}|f_{i',\ell'}} 
= \frac{1}{\sqrt{m}}\exp\left(-i \frac{2\pi}{m}{\ell\ell'}\right)  \left(\sqrt{\mu_{i}}\delta_{ii'} + O\left(\frac{1}{m}\right)+ O\left(\frac{1}{m^2}\right)\right),\\
\braket{0_\tl|\partial_\omega L_{i}^{(\ell)\dagger}|e_{i',\ell'}} 
= \frac{1}{\sqrt{m}} \exp\left(-i \frac{2\pi}{m}{\ell \ell'}\right)  \left( \frac{1}{\sqrt{\mu_{i'}}} 
\trace(\rho_0 \partial_\omega L_{i}^{\dagger }L_{i'}) + O\left(\frac{1}{m}\right)+ O\left(\frac{1}{m^2}\right)\right) ,\\ 
\braket{1_\tl|\partial_\omega  L_{i}^{(\ell)\dagger}|f_{i',\ell'}} 
= \frac{1}{\sqrt{m}} \exp\left(-i \frac{2\pi}{m}{\ell \ell'}\right) \left( \frac{1}{\sqrt{\mu_{i'}}} \trace(\rho_1 \partial_\omega L_{i}^{\dagger }L_{i'}) + O\left(\frac{1}{m}\right) + O\left(\frac{1}{m^2}\right)\right).
\end{gather}
where the $O(1/m)$ terms are independent of $\ell,\ell'$ and the $O(1/m^2)$ terms may be functions of $\ell,\ell'$. The above relations are true with probability $1 - e^{-\Omega(m)}$ and can be derived using \lemmaref{lemma:Delta}.  
\end{proof}

}

\bibliography{refs-multi-probe-0}

%merlin.mbs apsrev4-1.bst 2010-07-25 4.21a (PWD, AO, DPC) hacked
%Control: key (0)
%Control: author (0) dotless jnrlst
%Control: editor formatted (1) identically to author
%Control: production of article title (0) allowed
%Control: page (1) range
%Control: year (0) verbatim
%Control: production of eprint (0) enabled
\begin{thebibliography}{66}%
\makeatletter
\providecommand \@ifxundefined [1]{%
 \@ifx{#1\undefined}
}%
\providecommand \@ifnum [1]{%
 \ifnum #1\expandafter \@firstoftwo
 \else \expandafter \@secondoftwo
 \fi
}%
\providecommand \@ifx [1]{%
 \ifx #1\expandafter \@firstoftwo
 \else \expandafter \@secondoftwo
 \fi
}%
\providecommand \natexlab [1]{#1}%
\providecommand \enquote  [1]{``#1''}%
\providecommand \bibnamefont  [1]{#1}%
\providecommand \bibfnamefont [1]{#1}%
\providecommand \citenamefont [1]{#1}%
\providecommand \href@noop [0]{\@secondoftwo}%
\providecommand \href [0]{\begingroup \@sanitize@url \@href}%
\providecommand \@href[1]{\@@startlink{#1}\@@href}%
\providecommand \@@href[1]{\endgroup#1\@@endlink}%
\providecommand \@sanitize@url [0]{\catcode `\\12\catcode `\$12\catcode
  `\&12\catcode `\#12\catcode `\^12\catcode `\_12\catcode `\%12\relax}%
\providecommand \@@startlink[1]{}%
\providecommand \@@endlink[0]{}%
\providecommand \url  [0]{\begingroup\@sanitize@url \@url }%
\providecommand \@url [1]{\endgroup\@href {#1}{\urlprefix }}%
\providecommand \urlprefix  [0]{URL }%
\providecommand \Eprint [0]{\href }%
\providecommand \doibase [0]{http://dx.doi.org/}%
\providecommand \selectlanguage [0]{\@gobble}%
\providecommand \bibinfo  [0]{\@secondoftwo}%
\providecommand \bibfield  [0]{\@secondoftwo}%
\providecommand \translation [1]{[#1]}%
\providecommand \BibitemOpen [0]{}%
\providecommand \bibitemStop [0]{}%
\providecommand \bibitemNoStop [0]{.\EOS\space}%
\providecommand \EOS [0]{\spacefactor3000\relax}%
\providecommand \BibitemShut  [1]{\csname bibitem#1\endcsname}%
\let\auto@bib@innerbib\@empty
%</preamble>
\bibitem [{\citenamefont {Giovannetti}\ \emph {et~al.}(2011)\citenamefont
  {Giovannetti}, \citenamefont {Lloyd},\ and\ \citenamefont
  {Maccone}}]{giovannetti2011advances}%
  \BibitemOpen
  \bibfield  {author} {\bibinfo {author} {\bibfnamefont {Vittorio}\
  \bibnamefont {Giovannetti}}, \bibinfo {author} {\bibfnamefont {Seth}\
  \bibnamefont {Lloyd}}, \ and\ \bibinfo {author} {\bibfnamefont {Lorenzo}\
  \bibnamefont {Maccone}},\ }\bibfield  {title} {\enquote {\bibinfo {title}
  {Advances in quantum metrology},}\ }\href@noop {} {\bibfield  {journal}
  {\bibinfo  {journal} {Nature Photonics}\ }\textbf {\bibinfo {volume} {5}},\
  \bibinfo {pages} {222} (\bibinfo {year} {2011})}\BibitemShut {NoStop}%
\bibitem [{\citenamefont {Degen}\ \emph {et~al.}(2017)\citenamefont {Degen},
  \citenamefont {Reinhard},\ and\ \citenamefont
  {Cappellaro}}]{degen2017quantum}%
  \BibitemOpen
  \bibfield  {author} {\bibinfo {author} {\bibfnamefont {Christian~L}\
  \bibnamefont {Degen}}, \bibinfo {author} {\bibfnamefont {F}~\bibnamefont
  {Reinhard}}, \ and\ \bibinfo {author} {\bibfnamefont {P}~\bibnamefont
  {Cappellaro}},\ }\bibfield  {title} {\enquote {\bibinfo {title} {Quantum
  sensing},}\ }\href@noop {} {\bibfield  {journal} {\bibinfo  {journal}
  {Reviews of Modern Physics}\ }\textbf {\bibinfo {volume} {89}},\ \bibinfo
  {pages} {035002} (\bibinfo {year} {2017})}\BibitemShut {NoStop}%
\bibitem [{\citenamefont {Pezz\`e}\ \emph {et~al.}(2018)\citenamefont
  {Pezz\`e}, \citenamefont {Smerzi}, \citenamefont {Oberthaler}, \citenamefont
  {Schmied},\ and\ \citenamefont {Treutlein}}]{pezze2018quantum}%
  \BibitemOpen
  \bibfield  {author} {\bibinfo {author} {\bibfnamefont {Luca}\ \bibnamefont
  {Pezz\`e}}, \bibinfo {author} {\bibfnamefont {Augusto}\ \bibnamefont
  {Smerzi}}, \bibinfo {author} {\bibfnamefont {Markus~K.}\ \bibnamefont
  {Oberthaler}}, \bibinfo {author} {\bibfnamefont {Roman}\ \bibnamefont
  {Schmied}}, \ and\ \bibinfo {author} {\bibfnamefont {Philipp}\ \bibnamefont
  {Treutlein}},\ }\bibfield  {title} {\enquote {\bibinfo {title} {Quantum
  metrology with nonclassical states of atomic ensembles},}\ }\href@noop {}
  {\bibfield  {journal} {\bibinfo  {journal} {Reviews of Modern Physics}\
  }\textbf {\bibinfo {volume} {90}},\ \bibinfo {pages} {035005} (\bibinfo
  {year} {2018})}\BibitemShut {NoStop}%
\bibitem [{\citenamefont {Pirandola}\ \emph {et~al.}(2018)\citenamefont
  {Pirandola}, \citenamefont {Bardhan}, \citenamefont {Gehring}, \citenamefont
  {Weedbrook},\ and\ \citenamefont {Lloyd}}]{pirandola2018advances}%
  \BibitemOpen
  \bibfield  {author} {\bibinfo {author} {\bibfnamefont {Stefano}\ \bibnamefont
  {Pirandola}}, \bibinfo {author} {\bibfnamefont {Bhaskar~Roy}\ \bibnamefont
  {Bardhan}}, \bibinfo {author} {\bibfnamefont {Tobias}\ \bibnamefont
  {Gehring}}, \bibinfo {author} {\bibfnamefont {Christian}\ \bibnamefont
  {Weedbrook}}, \ and\ \bibinfo {author} {\bibfnamefont {Seth}\ \bibnamefont
  {Lloyd}},\ }\bibfield  {title} {\enquote {\bibinfo {title} {Advances in
  photonic quantum sensing},}\ }\href@noop {} {\bibfield  {journal} {\bibinfo
  {journal} {Nature Photonics}\ }\textbf {\bibinfo {volume} {12}},\ \bibinfo
  {pages} {724} (\bibinfo {year} {2018})}\BibitemShut {NoStop}%
\bibitem [{\citenamefont {Caves}(1981)}]{caves1981quantum}%
  \BibitemOpen
  \bibfield  {author} {\bibinfo {author} {\bibfnamefont {Carlton~M}\
  \bibnamefont {Caves}},\ }\bibfield  {title} {\enquote {\bibinfo {title}
  {Quantum-mechanical noise in an interferometer},}\ }\href@noop {} {\bibfield
  {journal} {\bibinfo  {journal} {Physical Review D}\ }\textbf {\bibinfo
  {volume} {23}},\ \bibinfo {pages} {1693} (\bibinfo {year}
  {1981})}\BibitemShut {NoStop}%
\bibitem [{\citenamefont {Yurke}\ \emph {et~al.}(1986)\citenamefont {Yurke},
  \citenamefont {McCall},\ and\ \citenamefont {Klauder}}]{yurke19862}%
  \BibitemOpen
  \bibfield  {author} {\bibinfo {author} {\bibfnamefont {Bernard}\ \bibnamefont
  {Yurke}}, \bibinfo {author} {\bibfnamefont {Samuel~L.}\ \bibnamefont
  {McCall}}, \ and\ \bibinfo {author} {\bibfnamefont {John~R.}\ \bibnamefont
  {Klauder}},\ }\bibfield  {title} {\enquote {\bibinfo {title} {Su(2) and
  su(1,1) interferometers},}\ }\href@noop {} {\bibfield  {journal} {\bibinfo
  {journal} {Physical Review A}\ }\textbf {\bibinfo {volume} {33}},\ \bibinfo
  {pages} {4033--4054} (\bibinfo {year} {1986})}\BibitemShut {NoStop}%
\bibitem [{\citenamefont {{LIGO Collaboration}}(2011)}]{ligo2011gravitational}%
  \BibitemOpen
  \bibfield  {author} {\bibinfo {author} {\bibnamefont {{LIGO
  Collaboration}}},\ }\bibfield  {title} {\enquote {\bibinfo {title} {A
  gravitational wave observatory operating beyond the quantum shot-noise
  limit},}\ }\href@noop {} {\bibfield  {journal} {\bibinfo  {journal} {Nature
  Physics}\ }\textbf {\bibinfo {volume} {7}},\ \bibinfo {pages} {962--965}
  (\bibinfo {year} {2011})}\BibitemShut {NoStop}%
\bibitem [{\citenamefont {{LIGO Collaboration}}(2013)}]{ligo2013enhanced}%
  \BibitemOpen
  \bibfield  {author} {\bibinfo {author} {\bibnamefont {{LIGO
  Collaboration}}},\ }\bibfield  {title} {\enquote {\bibinfo {title} {Enhanced
  sensitivity of the ligo gravitational wave detector by using squeezed states
  of light},}\ }\href@noop {} {\bibfield  {journal} {\bibinfo  {journal}
  {Nature Photonics}\ }\textbf {\bibinfo {volume} {7}},\ \bibinfo {pages}
  {613--619} (\bibinfo {year} {2013})}\BibitemShut {NoStop}%
\bibitem [{\citenamefont {Demkowicz-Dobrza{\'n}ski}\ \emph
  {et~al.}(2013)\citenamefont {Demkowicz-Dobrza{\'n}ski}, \citenamefont
  {Banaszek},\ and\ \citenamefont {Schnabel}}]{demkowicz2013fundamental}%
  \BibitemOpen
  \bibfield  {author} {\bibinfo {author} {\bibfnamefont {Rafa{\l}}\
  \bibnamefont {Demkowicz-Dobrza{\'n}ski}}, \bibinfo {author} {\bibfnamefont
  {Konrad}\ \bibnamefont {Banaszek}}, \ and\ \bibinfo {author} {\bibfnamefont
  {Roman}\ \bibnamefont {Schnabel}},\ }\bibfield  {title} {\enquote {\bibinfo
  {title} {Fundamental quantum interferometry bound for the
  squeezed-light-enhanced gravitational wave detector geo 600},}\ }\href@noop
  {} {\bibfield  {journal} {\bibinfo  {journal} {Physical Review A}\ }\textbf
  {\bibinfo {volume} {88}},\ \bibinfo {pages} {041802(R)} (\bibinfo {year}
  {2013})}\BibitemShut {NoStop}%
\bibitem [{\citenamefont {Rosenband}\ \emph {et~al.}(2008)\citenamefont
  {Rosenband}, \citenamefont {Hume}, \citenamefont {Schmidt}, \citenamefont
  {Chou}, \citenamefont {Brusch}, \citenamefont {Lorini}, \citenamefont
  {Oskay}, \citenamefont {Drullinger}, \citenamefont {Fortier}, \citenamefont
  {Stalnaker} \emph {et~al.}}]{rosenband2008frequency}%
  \BibitemOpen
  \bibfield  {author} {\bibinfo {author} {\bibfnamefont {Till}\ \bibnamefont
  {Rosenband}}, \bibinfo {author} {\bibfnamefont {DB}~\bibnamefont {Hume}},
  \bibinfo {author} {\bibfnamefont {PO}~\bibnamefont {Schmidt}}, \bibinfo
  {author} {\bibfnamefont {Chin-Wen}\ \bibnamefont {Chou}}, \bibinfo {author}
  {\bibfnamefont {Anders}\ \bibnamefont {Brusch}}, \bibinfo {author}
  {\bibfnamefont {Luca}\ \bibnamefont {Lorini}}, \bibinfo {author}
  {\bibfnamefont {WH}~\bibnamefont {Oskay}}, \bibinfo {author} {\bibfnamefont
  {Robert~E}\ \bibnamefont {Drullinger}}, \bibinfo {author} {\bibfnamefont
  {Tara~M}\ \bibnamefont {Fortier}}, \bibinfo {author} {\bibfnamefont
  {Jason~E}\ \bibnamefont {Stalnaker}},  \emph {et~al.},\ }\bibfield  {title}
  {\enquote {\bibinfo {title} {Frequency ratio of al+ and hg+ single-ion
  optical clocks; metrology at the 17th decimal place},}\ }\href@noop {}
  {\bibfield  {journal} {\bibinfo  {journal} {Science}\ }\textbf {\bibinfo
  {volume} {319}},\ \bibinfo {pages} {1808--1812} (\bibinfo {year}
  {2008})}\BibitemShut {NoStop}%
\bibitem [{\citenamefont {Appel}\ \emph {et~al.}(2009)\citenamefont {Appel},
  \citenamefont {Windpassinger}, \citenamefont {Oblak}, \citenamefont {Hoff},
  \citenamefont {Kj{\ae}rgaard},\ and\ \citenamefont
  {Polzik}}]{appel2009mesoscopic}%
  \BibitemOpen
  \bibfield  {author} {\bibinfo {author} {\bibfnamefont {J{\"u}rgen}\
  \bibnamefont {Appel}}, \bibinfo {author} {\bibfnamefont {Patrick~Joachim}\
  \bibnamefont {Windpassinger}}, \bibinfo {author} {\bibfnamefont {Daniel}\
  \bibnamefont {Oblak}}, \bibinfo {author} {\bibfnamefont {U~Busk}\
  \bibnamefont {Hoff}}, \bibinfo {author} {\bibfnamefont {Niels}\ \bibnamefont
  {Kj{\ae}rgaard}}, \ and\ \bibinfo {author} {\bibfnamefont {Eugene~Simon}\
  \bibnamefont {Polzik}},\ }\bibfield  {title} {\enquote {\bibinfo {title}
  {Mesoscopic atomic entanglement for precision measurements beyond the
  standard quantum limit},}\ }\href@noop {} {\bibfield  {journal} {\bibinfo
  {journal} {Proceedings of the National Academy of Sciences}\ }\textbf
  {\bibinfo {volume} {106}},\ \bibinfo {pages} {10960--10965} (\bibinfo {year}
  {2009})}\BibitemShut {NoStop}%
\bibitem [{\citenamefont {Kaubruegger}\ \emph {et~al.}(2021)\citenamefont
  {Kaubruegger}, \citenamefont {Vasilyev}, \citenamefont {Schulte},
  \citenamefont {Hammerer},\ and\ \citenamefont
  {Zoller}}]{kaubruegger2021quantum}%
  \BibitemOpen
  \bibfield  {author} {\bibinfo {author} {\bibfnamefont {Raphael}\ \bibnamefont
  {Kaubruegger}}, \bibinfo {author} {\bibfnamefont {Denis~V}\ \bibnamefont
  {Vasilyev}}, \bibinfo {author} {\bibfnamefont {Marius}\ \bibnamefont
  {Schulte}}, \bibinfo {author} {\bibfnamefont {Klemens}\ \bibnamefont
  {Hammerer}}, \ and\ \bibinfo {author} {\bibfnamefont {Peter}\ \bibnamefont
  {Zoller}},\ }\bibfield  {title} {\enquote {\bibinfo {title} {Quantum
  variational optimization of ramsey interferometry and atomic clocks},}\
  }\href@noop {} {\bibfield  {journal} {\bibinfo  {journal} {Physical Review
  X}\ }\textbf {\bibinfo {volume} {11}},\ \bibinfo {pages} {041045} (\bibinfo
  {year} {2021})}\BibitemShut {NoStop}%
\bibitem [{\citenamefont {Marciniak}\ \emph {et~al.}(2022)\citenamefont
  {Marciniak}, \citenamefont {Feldker}, \citenamefont {Pogorelov},
  \citenamefont {Kaubruegger}, \citenamefont {Vasilyev}, \citenamefont {van
  Bijnen}, \citenamefont {Schindler}, \citenamefont {Zoller}, \citenamefont
  {Blatt},\ and\ \citenamefont {Monz}}]{marciniak2022optimal}%
  \BibitemOpen
  \bibfield  {author} {\bibinfo {author} {\bibfnamefont {Christian~D}\
  \bibnamefont {Marciniak}}, \bibinfo {author} {\bibfnamefont {Thomas}\
  \bibnamefont {Feldker}}, \bibinfo {author} {\bibfnamefont {Ivan}\
  \bibnamefont {Pogorelov}}, \bibinfo {author} {\bibfnamefont {Raphael}\
  \bibnamefont {Kaubruegger}}, \bibinfo {author} {\bibfnamefont {Denis~V}\
  \bibnamefont {Vasilyev}}, \bibinfo {author} {\bibfnamefont {Rick}\
  \bibnamefont {van Bijnen}}, \bibinfo {author} {\bibfnamefont {Philipp}\
  \bibnamefont {Schindler}}, \bibinfo {author} {\bibfnamefont {Peter}\
  \bibnamefont {Zoller}}, \bibinfo {author} {\bibfnamefont {Rainer}\
  \bibnamefont {Blatt}}, \ and\ \bibinfo {author} {\bibfnamefont {Thomas}\
  \bibnamefont {Monz}},\ }\bibfield  {title} {\enquote {\bibinfo {title}
  {Optimal metrology with programmable quantum sensors},}\ }\href@noop {}
  {\bibfield  {journal} {\bibinfo  {journal} {Nature}\ }\textbf {\bibinfo
  {volume} {603}},\ \bibinfo {pages} {604--609} (\bibinfo {year}
  {2022})}\BibitemShut {NoStop}%
\bibitem [{\citenamefont {Wineland}\ \emph {et~al.}(1992)\citenamefont
  {Wineland}, \citenamefont {Bollinger}, \citenamefont {Itano}, \citenamefont
  {Moore},\ and\ \citenamefont {Heinzen}}]{wineland1992spin}%
  \BibitemOpen
  \bibfield  {author} {\bibinfo {author} {\bibfnamefont {David~J}\ \bibnamefont
  {Wineland}}, \bibinfo {author} {\bibfnamefont {John~J}\ \bibnamefont
  {Bollinger}}, \bibinfo {author} {\bibfnamefont {Wayne~M}\ \bibnamefont
  {Itano}}, \bibinfo {author} {\bibfnamefont {FL}~\bibnamefont {Moore}}, \ and\
  \bibinfo {author} {\bibfnamefont {DJ}~\bibnamefont {Heinzen}},\ }\bibfield
  {title} {\enquote {\bibinfo {title} {Spin squeezing and reduced quantum noise
  in spectroscopy},}\ }\href@noop {} {\bibfield  {journal} {\bibinfo  {journal}
  {Physical Review A}\ }\textbf {\bibinfo {volume} {46}},\ \bibinfo {pages}
  {R6797} (\bibinfo {year} {1992})}\BibitemShut {NoStop}%
\bibitem [{\citenamefont {Leibfried}\ \emph {et~al.}(2004)\citenamefont
  {Leibfried}, \citenamefont {Barrett}, \citenamefont {Schaetz}, \citenamefont
  {Britton}, \citenamefont {Chiaverini}, \citenamefont {Itano}, \citenamefont
  {Jost}, \citenamefont {Langer},\ and\ \citenamefont
  {Wineland}}]{leibfried2004toward}%
  \BibitemOpen
  \bibfield  {author} {\bibinfo {author} {\bibfnamefont {D}~\bibnamefont
  {Leibfried}}, \bibinfo {author} {\bibfnamefont {Murray~D}\ \bibnamefont
  {Barrett}}, \bibinfo {author} {\bibfnamefont {T}~\bibnamefont {Schaetz}},
  \bibinfo {author} {\bibfnamefont {J}~\bibnamefont {Britton}}, \bibinfo
  {author} {\bibfnamefont {J}~\bibnamefont {Chiaverini}}, \bibinfo {author}
  {\bibfnamefont {Wayne~M}\ \bibnamefont {Itano}}, \bibinfo {author}
  {\bibfnamefont {John~D}\ \bibnamefont {Jost}}, \bibinfo {author}
  {\bibfnamefont {Christopher}\ \bibnamefont {Langer}}, \ and\ \bibinfo
  {author} {\bibfnamefont {David~J}\ \bibnamefont {Wineland}},\ }\bibfield
  {title} {\enquote {\bibinfo {title} {Toward heisenberg-limited spectroscopy
  with multiparticle entangled states},}\ }\href@noop {} {\bibfield  {journal}
  {\bibinfo  {journal} {Science}\ }\textbf {\bibinfo {volume} {304}},\ \bibinfo
  {pages} {1476--1478} (\bibinfo {year} {2004})}\BibitemShut {NoStop}%
\bibitem [{\citenamefont {Dutt}\ \emph {et~al.}(2007)\citenamefont {Dutt},
  \citenamefont {Childress}, \citenamefont {Jiang}, \citenamefont {Togan},
  \citenamefont {Maze}, \citenamefont {Jelezko}, \citenamefont {Zibrov},
  \citenamefont {Hemmer},\ and\ \citenamefont {Lukin}}]{dutt2007quantum}%
  \BibitemOpen
  \bibfield  {author} {\bibinfo {author} {\bibfnamefont {MV~Gurudev}\
  \bibnamefont {Dutt}}, \bibinfo {author} {\bibfnamefont {L}~\bibnamefont
  {Childress}}, \bibinfo {author} {\bibfnamefont {L}~\bibnamefont {Jiang}},
  \bibinfo {author} {\bibfnamefont {E}~\bibnamefont {Togan}}, \bibinfo {author}
  {\bibfnamefont {J}~\bibnamefont {Maze}}, \bibinfo {author} {\bibfnamefont
  {F}~\bibnamefont {Jelezko}}, \bibinfo {author} {\bibfnamefont
  {AS}~\bibnamefont {Zibrov}}, \bibinfo {author} {\bibfnamefont
  {PR}~\bibnamefont {Hemmer}}, \ and\ \bibinfo {author} {\bibfnamefont
  {MD}~\bibnamefont {Lukin}},\ }\bibfield  {title} {\enquote {\bibinfo {title}
  {Quantum register based on individual electronic and nuclear spin qubits in
  diamond},}\ }\href@noop {} {\bibfield  {journal} {\bibinfo  {journal}
  {Science}\ }\textbf {\bibinfo {volume} {316}},\ \bibinfo {pages} {1312--1316}
  (\bibinfo {year} {2007})}\BibitemShut {NoStop}%
\bibitem [{\citenamefont {Hanson}\ \emph {et~al.}(2008)\citenamefont {Hanson},
  \citenamefont {Dobrovitski}, \citenamefont {Feiguin}, \citenamefont {Gywat},\
  and\ \citenamefont {Awschalom}}]{hanson2008coherent}%
  \BibitemOpen
  \bibfield  {author} {\bibinfo {author} {\bibfnamefont {R}~\bibnamefont
  {Hanson}}, \bibinfo {author} {\bibfnamefont {VV}~\bibnamefont {Dobrovitski}},
  \bibinfo {author} {\bibfnamefont {AE}~\bibnamefont {Feiguin}}, \bibinfo
  {author} {\bibfnamefont {O}~\bibnamefont {Gywat}}, \ and\ \bibinfo {author}
  {\bibfnamefont {DD}~\bibnamefont {Awschalom}},\ }\bibfield  {title} {\enquote
  {\bibinfo {title} {Coherent dynamics of a single spin interacting with an
  adjustable spin bath},}\ }\href@noop {} {\bibfield  {journal} {\bibinfo
  {journal} {Science}\ }\textbf {\bibinfo {volume} {320}},\ \bibinfo {pages}
  {352--355} (\bibinfo {year} {2008})}\BibitemShut {NoStop}%
\bibitem [{\citenamefont {Tsang}\ \emph {et~al.}(2016)\citenamefont {Tsang},
  \citenamefont {Nair},\ and\ \citenamefont {Lu}}]{tsang2016quantum}%
  \BibitemOpen
  \bibfield  {author} {\bibinfo {author} {\bibfnamefont {Mankei}\ \bibnamefont
  {Tsang}}, \bibinfo {author} {\bibfnamefont {Ranjith}\ \bibnamefont {Nair}}, \
  and\ \bibinfo {author} {\bibfnamefont {Xiao-Ming}\ \bibnamefont {Lu}},\
  }\bibfield  {title} {\enquote {\bibinfo {title} {Quantum theory of
  superresolution for two incoherent optical point sources},}\ }\href@noop {}
  {\bibfield  {journal} {\bibinfo  {journal} {Physical Review X}\ }\textbf
  {\bibinfo {volume} {6}},\ \bibinfo {pages} {031033} (\bibinfo {year}
  {2016})}\BibitemShut {NoStop}%
\bibitem [{\citenamefont {Kessler}\ \emph {et~al.}(2014)\citenamefont
  {Kessler}, \citenamefont {Lovchinsky}, \citenamefont {Sushkov},\ and\
  \citenamefont {Lukin}}]{kessler2014quantum}%
  \BibitemOpen
  \bibfield  {author} {\bibinfo {author} {\bibfnamefont {Eric~M}\ \bibnamefont
  {Kessler}}, \bibinfo {author} {\bibfnamefont {Igor}\ \bibnamefont
  {Lovchinsky}}, \bibinfo {author} {\bibfnamefont {Alexander~O}\ \bibnamefont
  {Sushkov}}, \ and\ \bibinfo {author} {\bibfnamefont {Mikhail~D}\ \bibnamefont
  {Lukin}},\ }\bibfield  {title} {\enquote {\bibinfo {title} {Quantum error
  correction for metrology},}\ }\href@noop {} {\bibfield  {journal} {\bibinfo
  {journal} {Physical Review Letters}\ }\textbf {\bibinfo {volume} {112}},\
  \bibinfo {pages} {150802} (\bibinfo {year} {2014})}\BibitemShut {NoStop}%
\bibitem [{\citenamefont {D{\"u}r}\ \emph {et~al.}(2014)\citenamefont
  {D{\"u}r}, \citenamefont {Skotiniotis}, \citenamefont {Frowis},\ and\
  \citenamefont {Kraus}}]{dur2014improved}%
  \BibitemOpen
  \bibfield  {author} {\bibinfo {author} {\bibfnamefont {W}~\bibnamefont
  {D{\"u}r}}, \bibinfo {author} {\bibfnamefont {M}~\bibnamefont {Skotiniotis}},
  \bibinfo {author} {\bibfnamefont {Florian}\ \bibnamefont {Frowis}}, \ and\
  \bibinfo {author} {\bibfnamefont {B}~\bibnamefont {Kraus}},\ }\bibfield
  {title} {\enquote {\bibinfo {title} {Improved quantum metrology using quantum
  error correction},}\ }\href@noop {} {\bibfield  {journal} {\bibinfo
  {journal} {Physical Review Letters}\ }\textbf {\bibinfo {volume} {112}},\
  \bibinfo {pages} {080801} (\bibinfo {year} {2014})}\BibitemShut {NoStop}%
\bibitem [{\citenamefont {Arrad}\ \emph {et~al.}(2014)\citenamefont {Arrad},
  \citenamefont {Vinkler}, \citenamefont {Aharonov},\ and\ \citenamefont
  {Retzker}}]{arrad2014increasing}%
  \BibitemOpen
  \bibfield  {author} {\bibinfo {author} {\bibfnamefont {Gilad}\ \bibnamefont
  {Arrad}}, \bibinfo {author} {\bibfnamefont {Yuval}\ \bibnamefont {Vinkler}},
  \bibinfo {author} {\bibfnamefont {Dorit}\ \bibnamefont {Aharonov}}, \ and\
  \bibinfo {author} {\bibfnamefont {Alex}\ \bibnamefont {Retzker}},\ }\bibfield
   {title} {\enquote {\bibinfo {title} {Increasing sensing resolution with
  error correction},}\ }\href@noop {} {\bibfield  {journal} {\bibinfo
  {journal} {Physical Review Letters}\ }\textbf {\bibinfo {volume} {112}},\
  \bibinfo {pages} {150801} (\bibinfo {year} {2014})}\BibitemShut {NoStop}%
\bibitem [{\citenamefont {Ozeri}(2013)}]{ozeri2013heisenberg}%
  \BibitemOpen
  \bibfield  {author} {\bibinfo {author} {\bibfnamefont {Roee}\ \bibnamefont
  {Ozeri}},\ }\bibfield  {title} {\enquote {\bibinfo {title} {Heisenberg
  limited metrology using quantum error-correction codes},}\ }\href@noop {}
  {\bibfield  {journal} {\bibinfo  {journal} {arXiv:1310.3432}\ } (\bibinfo
  {year} {2013})}\BibitemShut {NoStop}%
\bibitem [{\citenamefont {Unden}\ \emph {et~al.}(2016)\citenamefont {Unden},
  \citenamefont {Balasubramanian}, \citenamefont {Louzon}, \citenamefont
  {Vinkler}, \citenamefont {Plenio}, \citenamefont {Markham}, \citenamefont
  {Twitchen}, \citenamefont {Stacey}, \citenamefont {Lovchinsky}, \citenamefont
  {Sushkov}, \citenamefont {Lukin}, \citenamefont {Retzker}, \citenamefont
  {Naydenov}, \citenamefont {McGuinness},\ and\ \citenamefont
  {Jelezko}}]{unden2016quantum}%
  \BibitemOpen
  \bibfield  {author} {\bibinfo {author} {\bibfnamefont {Thomas}\ \bibnamefont
  {Unden}}, \bibinfo {author} {\bibfnamefont {Priya}\ \bibnamefont
  {Balasubramanian}}, \bibinfo {author} {\bibfnamefont {Daniel}\ \bibnamefont
  {Louzon}}, \bibinfo {author} {\bibfnamefont {Yuval}\ \bibnamefont {Vinkler}},
  \bibinfo {author} {\bibfnamefont {Martin~B.}\ \bibnamefont {Plenio}},
  \bibinfo {author} {\bibfnamefont {Matthew}\ \bibnamefont {Markham}}, \bibinfo
  {author} {\bibfnamefont {Daniel}\ \bibnamefont {Twitchen}}, \bibinfo {author}
  {\bibfnamefont {Alastair}\ \bibnamefont {Stacey}}, \bibinfo {author}
  {\bibfnamefont {Igor}\ \bibnamefont {Lovchinsky}}, \bibinfo {author}
  {\bibfnamefont {Alexander~O.}\ \bibnamefont {Sushkov}}, \bibinfo {author}
  {\bibfnamefont {Mikhail~D.}\ \bibnamefont {Lukin}}, \bibinfo {author}
  {\bibfnamefont {Alex}\ \bibnamefont {Retzker}}, \bibinfo {author}
  {\bibfnamefont {Boris}\ \bibnamefont {Naydenov}}, \bibinfo {author}
  {\bibfnamefont {Liam~P.}\ \bibnamefont {McGuinness}}, \ and\ \bibinfo
  {author} {\bibfnamefont {Fedor}\ \bibnamefont {Jelezko}},\ }\bibfield
  {title} {\enquote {\bibinfo {title} {Quantum metrology enhanced by repetitive
  quantum error correction},}\ }\href@noop {} {\bibfield  {journal} {\bibinfo
  {journal} {Physical Review Letters}\ }\textbf {\bibinfo {volume} {116}},\
  \bibinfo {pages} {230502} (\bibinfo {year} {2016})}\BibitemShut {NoStop}%
\bibitem [{\citenamefont {Zhou}\ \emph {et~al.}(2018)\citenamefont {Zhou},
  \citenamefont {Zhang}, \citenamefont {Preskill},\ and\ \citenamefont
  {Jiang}}]{zhou2018achieving}%
  \BibitemOpen
  \bibfield  {author} {\bibinfo {author} {\bibfnamefont {Sisi}\ \bibnamefont
  {Zhou}}, \bibinfo {author} {\bibfnamefont {Mengzhen}\ \bibnamefont {Zhang}},
  \bibinfo {author} {\bibfnamefont {John}\ \bibnamefont {Preskill}}, \ and\
  \bibinfo {author} {\bibfnamefont {Liang}\ \bibnamefont {Jiang}},\ }\bibfield
  {title} {\enquote {\bibinfo {title} {Achieving the heisenberg limit in
  quantum metrology using quantum error correction},}\ }\href@noop {}
  {\bibfield  {journal} {\bibinfo  {journal} {Nature Communications}\ }\textbf
  {\bibinfo {volume} {9}},\ \bibinfo {pages} {78} (\bibinfo {year}
  {2018})}\BibitemShut {NoStop}%
\bibitem [{\citenamefont {Wang}\ \emph {et~al.}(2022)\citenamefont {Wang},
  \citenamefont {Chen}, \citenamefont {Liu}, \citenamefont {Cai}, \citenamefont
  {Ma}, \citenamefont {Mu}, \citenamefont {Pan}, \citenamefont {Hua},
  \citenamefont {Hu}, \citenamefont {Xu} \emph {et~al.}}]{wang2022quantum}%
  \BibitemOpen
  \bibfield  {author} {\bibinfo {author} {\bibfnamefont {W}~\bibnamefont
  {Wang}}, \bibinfo {author} {\bibfnamefont {Z-J}\ \bibnamefont {Chen}},
  \bibinfo {author} {\bibfnamefont {X}~\bibnamefont {Liu}}, \bibinfo {author}
  {\bibfnamefont {W}~\bibnamefont {Cai}}, \bibinfo {author} {\bibfnamefont
  {Y}~\bibnamefont {Ma}}, \bibinfo {author} {\bibfnamefont {X}~\bibnamefont
  {Mu}}, \bibinfo {author} {\bibfnamefont {X}~\bibnamefont {Pan}}, \bibinfo
  {author} {\bibfnamefont {Z}~\bibnamefont {Hua}}, \bibinfo {author}
  {\bibfnamefont {L}~\bibnamefont {Hu}}, \bibinfo {author} {\bibfnamefont
  {Y}~\bibnamefont {Xu}},  \emph {et~al.},\ }\bibfield  {title} {\enquote
  {\bibinfo {title} {Quantum-enhanced radiometry via approximate quantum error
  correction},}\ }\href@noop {} {\bibfield  {journal} {\bibinfo  {journal}
  {Nature Communications}\ }\textbf {\bibinfo {volume} {13}},\ \bibinfo {pages}
  {3214} (\bibinfo {year} {2022})}\BibitemShut {NoStop}%
\bibitem [{\citenamefont {Ouyang}\ and\ \citenamefont
  {Brennen}(2022)}]{ouyang2022quantum}%
  \BibitemOpen
  \bibfield  {author} {\bibinfo {author} {\bibfnamefont {Yingkai}\ \bibnamefont
  {Ouyang}}\ and\ \bibinfo {author} {\bibfnamefont {Gavin~K}\ \bibnamefont
  {Brennen}},\ }\bibfield  {title} {\enquote {\bibinfo {title} {Quantum error
  correction on symmetric quantum sensors},}\ }\href@noop {} {\bibfield
  {journal} {\bibinfo  {journal} {arXiv:2212.06285}\ } (\bibinfo {year}
  {2022})}\BibitemShut {NoStop}%
\bibitem [{\citenamefont {Viola}\ \emph {et~al.}(1999)\citenamefont {Viola},
  \citenamefont {Knill},\ and\ \citenamefont {Lloyd}}]{viola1999dynamical}%
  \BibitemOpen
  \bibfield  {author} {\bibinfo {author} {\bibfnamefont {Lorenza}\ \bibnamefont
  {Viola}}, \bibinfo {author} {\bibfnamefont {Emanuel}\ \bibnamefont {Knill}},
  \ and\ \bibinfo {author} {\bibfnamefont {Seth}\ \bibnamefont {Lloyd}},\
  }\bibfield  {title} {\enquote {\bibinfo {title} {Dynamical decoupling of open
  quantum systems},}\ }\href@noop {} {\bibfield  {journal} {\bibinfo  {journal}
  {Physical Review Letters}\ }\textbf {\bibinfo {volume} {82}},\ \bibinfo
  {pages} {2417} (\bibinfo {year} {1999})}\BibitemShut {NoStop}%
\bibitem [{\citenamefont {Liu}\ and\ \citenamefont
  {Yuan}(2017)}]{liu2017quantum}%
  \BibitemOpen
  \bibfield  {author} {\bibinfo {author} {\bibfnamefont {Jing}\ \bibnamefont
  {Liu}}\ and\ \bibinfo {author} {\bibfnamefont {Haidong}\ \bibnamefont
  {Yuan}},\ }\bibfield  {title} {\enquote {\bibinfo {title} {Quantum parameter
  estimation with optimal control},}\ }\href@noop {} {\bibfield  {journal}
  {\bibinfo  {journal} {Physical Review A}\ }\textbf {\bibinfo {volume} {96}},\
  \bibinfo {pages} {012117} (\bibinfo {year} {2017})}\BibitemShut {NoStop}%
\bibitem [{\citenamefont {Yamamoto}\ \emph {et~al.}(2021)\citenamefont
  {Yamamoto}, \citenamefont {Endo}, \citenamefont {Hakoshima}, \citenamefont
  {Matsuzaki},\ and\ \citenamefont {Tokunaga}}]{yamamoto2021error}%
  \BibitemOpen
  \bibfield  {author} {\bibinfo {author} {\bibfnamefont {Kaoru}\ \bibnamefont
  {Yamamoto}}, \bibinfo {author} {\bibfnamefont {Suguru}\ \bibnamefont {Endo}},
  \bibinfo {author} {\bibfnamefont {Hideaki}\ \bibnamefont {Hakoshima}},
  \bibinfo {author} {\bibfnamefont {Yuichiro}\ \bibnamefont {Matsuzaki}}, \
  and\ \bibinfo {author} {\bibfnamefont {Yuuki}\ \bibnamefont {Tokunaga}},\
  }\bibfield  {title} {\enquote {\bibinfo {title} {Error-mitigated quantum
  metrology},}\ }\href@noop {} {\bibfield  {journal} {\bibinfo  {journal}
  {arXiv:2112.01850}\ } (\bibinfo {year} {2021})}\BibitemShut {NoStop}%
\bibitem [{\citenamefont {Gefen}\ \emph {et~al.}(2016)\citenamefont {Gefen},
  \citenamefont {Herrera-Mart{\'\i}},\ and\ \citenamefont
  {Retzker}}]{gefen2016parameter}%
  \BibitemOpen
  \bibfield  {author} {\bibinfo {author} {\bibfnamefont {Tuvia}\ \bibnamefont
  {Gefen}}, \bibinfo {author} {\bibfnamefont {David~A}\ \bibnamefont
  {Herrera-Mart{\'\i}}}, \ and\ \bibinfo {author} {\bibfnamefont {Alex}\
  \bibnamefont {Retzker}},\ }\bibfield  {title} {\enquote {\bibinfo {title}
  {Parameter estimation with efficient photodetectors},}\ }\href@noop {}
  {\bibfield  {journal} {\bibinfo  {journal} {Physical Review A}\ }\textbf
  {\bibinfo {volume} {93}},\ \bibinfo {pages} {032133} (\bibinfo {year}
  {2016})}\BibitemShut {NoStop}%
\bibitem [{\citenamefont {Albarelli}\ \emph {et~al.}(2018)\citenamefont
  {Albarelli}, \citenamefont {Rossi}, \citenamefont {Tamascelli},\ and\
  \citenamefont {Genoni}}]{albarelli2019restoring}%
  \BibitemOpen
  \bibfield  {author} {\bibinfo {author} {\bibfnamefont {Francesco}\
  \bibnamefont {Albarelli}}, \bibinfo {author} {\bibfnamefont {Matteo~AC}\
  \bibnamefont {Rossi}}, \bibinfo {author} {\bibfnamefont {Dario}\ \bibnamefont
  {Tamascelli}}, \ and\ \bibinfo {author} {\bibfnamefont {Marco~G}\
  \bibnamefont {Genoni}},\ }\bibfield  {title} {\enquote {\bibinfo {title}
  {Restoring heisenberg scaling in noisy quantum metrology by monitoring the
  environment},}\ }\href@noop {} {\bibfield  {journal} {\bibinfo  {journal}
  {Quantum}\ }\textbf {\bibinfo {volume} {2}},\ \bibinfo {pages} {110}
  (\bibinfo {year} {2018})}\BibitemShut {NoStop}%
\bibitem [{\citenamefont {Shor}(1995)}]{shor1995scheme}%
  \BibitemOpen
  \bibfield  {author} {\bibinfo {author} {\bibfnamefont {Peter~W}\ \bibnamefont
  {Shor}},\ }\bibfield  {title} {\enquote {\bibinfo {title} {Scheme for
  reducing decoherence in quantum computer memory},}\ }\href@noop {} {\bibfield
   {journal} {\bibinfo  {journal} {Physical Review A}\ }\textbf {\bibinfo
  {volume} {52}},\ \bibinfo {pages} {R2493} (\bibinfo {year}
  {1995})}\BibitemShut {NoStop}%
\bibitem [{\citenamefont {Gottesman}(2010)}]{gottesman2010introduction}%
  \BibitemOpen
  \bibfield  {author} {\bibinfo {author} {\bibfnamefont {Daniel}\ \bibnamefont
  {Gottesman}},\ }\bibfield  {title} {\enquote {\bibinfo {title} {An
  introduction to quantum error correction and fault-tolerant quantum
  computation},}\ }in\ \href@noop {} {\emph {\bibinfo {booktitle} {Quantum
  information science and its contributions to mathematics, Proceedings of
  Symposia in Applied Mathematics}}},\ Vol.~\bibinfo {volume} {68}\ (\bibinfo
  {year} {2010})\ pp.\ \bibinfo {pages} {13--58}\BibitemShut {NoStop}%
\bibitem [{\citenamefont {Lidar}\ and\ \citenamefont
  {Brun}(2013)}]{lidar2013quantum}%
  \BibitemOpen
  \bibfield  {author} {\bibinfo {author} {\bibfnamefont {Daniel~A}\
  \bibnamefont {Lidar}}\ and\ \bibinfo {author} {\bibfnamefont {Todd~A}\
  \bibnamefont {Brun}},\ }\href@noop {} {\emph {\bibinfo {title} {Quantum error
  correction}}}\ (\bibinfo  {publisher} {Cambridge university press},\ \bibinfo
  {year} {2013})\BibitemShut {NoStop}%
\bibitem [{\citenamefont {Zhou}\ and\ \citenamefont
  {Jiang}(2021)}]{zhou2021asymptotic}%
  \BibitemOpen
  \bibfield  {author} {\bibinfo {author} {\bibfnamefont {Sisi}\ \bibnamefont
  {Zhou}}\ and\ \bibinfo {author} {\bibfnamefont {Liang}\ \bibnamefont
  {Jiang}},\ }\bibfield  {title} {\enquote {\bibinfo {title} {Asymptotic theory
  of quantum channel estimation},}\ }\href@noop {} {\bibfield  {journal}
  {\bibinfo  {journal} {PRX Quantum}\ }\textbf {\bibinfo {volume} {2}},\
  \bibinfo {pages} {010343} (\bibinfo {year} {2021})}\BibitemShut {NoStop}%
\bibitem [{\citenamefont {Kurdzialek}\ \emph {et~al.}(2022)\citenamefont
  {Kurdzialek}, \citenamefont {Gorecki}, \citenamefont {Albarelli},\ and\
  \citenamefont {Demkowicz-Dobrzanski}}]{kurdzialek2022using}%
  \BibitemOpen
  \bibfield  {author} {\bibinfo {author} {\bibfnamefont {Stanislaw}\
  \bibnamefont {Kurdzialek}}, \bibinfo {author} {\bibfnamefont {Wojciech}\
  \bibnamefont {Gorecki}}, \bibinfo {author} {\bibfnamefont {Francesco}\
  \bibnamefont {Albarelli}}, \ and\ \bibinfo {author} {\bibfnamefont {Rafal}\
  \bibnamefont {Demkowicz-Dobrzanski}},\ }\bibfield  {title} {\enquote
  {\bibinfo {title} {Using adaptiveness and causal superpositions against noise
  in quantum metrology},}\ }\href@noop {} {\bibfield  {journal} {\bibinfo
  {journal} {arXiv:2212.08106}\ } (\bibinfo {year} {2022})}\BibitemShut
  {NoStop}%
\bibitem [{\citenamefont {Giovannetti}\ \emph {et~al.}(2006)\citenamefont
  {Giovannetti}, \citenamefont {Lloyd},\ and\ \citenamefont
  {Maccone}}]{giovannetti2006quantum}%
  \BibitemOpen
  \bibfield  {author} {\bibinfo {author} {\bibfnamefont {Vittorio}\
  \bibnamefont {Giovannetti}}, \bibinfo {author} {\bibfnamefont {Seth}\
  \bibnamefont {Lloyd}}, \ and\ \bibinfo {author} {\bibfnamefont {Lorenzo}\
  \bibnamefont {Maccone}},\ }\bibfield  {title} {\enquote {\bibinfo {title}
  {Quantum metrology},}\ }\href@noop {} {\bibfield  {journal} {\bibinfo
  {journal} {Physical Review Letters}\ }\textbf {\bibinfo {volume} {96}},\
  \bibinfo {pages} {010401} (\bibinfo {year} {2006})}\BibitemShut {NoStop}%
\bibitem [{\citenamefont {Escher}\ \emph {et~al.}(2011)\citenamefont {Escher},
  \citenamefont {de~Matos~Filho},\ and\ \citenamefont
  {Davidovich}}]{escher2011general}%
  \BibitemOpen
  \bibfield  {author} {\bibinfo {author} {\bibfnamefont {BM}~\bibnamefont
  {Escher}}, \bibinfo {author} {\bibfnamefont {RL}~\bibnamefont
  {de~Matos~Filho}}, \ and\ \bibinfo {author} {\bibfnamefont {L}~\bibnamefont
  {Davidovich}},\ }\bibfield  {title} {\enquote {\bibinfo {title} {General
  framework for estimating the ultimate precision limit in noisy
  quantum-enhanced metrology},}\ }\href@noop {} {\bibfield  {journal} {\bibinfo
   {journal} {Nature Physics}\ }\textbf {\bibinfo {volume} {7}},\ \bibinfo
  {pages} {406} (\bibinfo {year} {2011})}\BibitemShut {NoStop}%
\bibitem [{\citenamefont {Demkowicz-Dobrza{\'n}ski}\ \emph
  {et~al.}(2012)\citenamefont {Demkowicz-Dobrza{\'n}ski}, \citenamefont
  {Ko{\l}ody{\'n}ski},\ and\ \citenamefont
  {Gu{\c{t}}{\u{a}}}}]{demkowicz2012elusive}%
  \BibitemOpen
  \bibfield  {author} {\bibinfo {author} {\bibfnamefont {Rafa{\l}}\
  \bibnamefont {Demkowicz-Dobrza{\'n}ski}}, \bibinfo {author} {\bibfnamefont
  {Jan}\ \bibnamefont {Ko{\l}ody{\'n}ski}}, \ and\ \bibinfo {author}
  {\bibfnamefont {M{\u{a}}d{\u{a}}lin}\ \bibnamefont {Gu{\c{t}}{\u{a}}}},\
  }\bibfield  {title} {\enquote {\bibinfo {title} {The elusive heisenberg limit
  in quantum-enhanced metrology},}\ }\href@noop {} {\bibfield  {journal}
  {\bibinfo  {journal} {Nature Communications}\ }\textbf {\bibinfo {volume}
  {3}},\ \bibinfo {pages} {1063} (\bibinfo {year} {2012})}\BibitemShut
  {NoStop}%
\bibitem [{\citenamefont {Demkowicz-Dobrza{\'n}ski}\ \emph
  {et~al.}(2017)\citenamefont {Demkowicz-Dobrza{\'n}ski}, \citenamefont
  {Czajkowski},\ and\ \citenamefont {Sekatski}}]{demkowicz2017adaptive}%
  \BibitemOpen
  \bibfield  {author} {\bibinfo {author} {\bibfnamefont {Rafa{\l}}\
  \bibnamefont {Demkowicz-Dobrza{\'n}ski}}, \bibinfo {author} {\bibfnamefont
  {Jan}\ \bibnamefont {Czajkowski}}, \ and\ \bibinfo {author} {\bibfnamefont
  {Pavel}\ \bibnamefont {Sekatski}},\ }\bibfield  {title} {\enquote {\bibinfo
  {title} {Adaptive quantum metrology under general markovian noise},}\
  }\href@noop {} {\bibfield  {journal} {\bibinfo  {journal} {Physical Review
  X}\ }\textbf {\bibinfo {volume} {7}},\ \bibinfo {pages} {041009} (\bibinfo
  {year} {2017})}\BibitemShut {NoStop}%
\bibitem [{\citenamefont {Zhou}\ and\ \citenamefont
  {Jiang}(2020)}]{zhou2019optimal}%
  \BibitemOpen
  \bibfield  {author} {\bibinfo {author} {\bibfnamefont {Sisi}\ \bibnamefont
  {Zhou}}\ and\ \bibinfo {author} {\bibfnamefont {Liang}\ \bibnamefont
  {Jiang}},\ }\bibfield  {title} {\enquote {\bibinfo {title} {Optimal
  approximate quantum error correction for quantum metrology},}\ }\href@noop {}
  {\bibfield  {journal} {\bibinfo  {journal} {Physical Review Research}\
  }\textbf {\bibinfo {volume} {2}},\ \bibinfo {pages} {013235} (\bibinfo {year}
  {2020})}\BibitemShut {NoStop}%
\bibitem [{\citenamefont {Demkowicz-Dobrza{\'n}ski}\ and\ \citenamefont
  {Maccone}(2014)}]{demkowicz2014using}%
  \BibitemOpen
  \bibfield  {author} {\bibinfo {author} {\bibfnamefont {Rafal}\ \bibnamefont
  {Demkowicz-Dobrza{\'n}ski}}\ and\ \bibinfo {author} {\bibfnamefont {Lorenzo}\
  \bibnamefont {Maccone}},\ }\bibfield  {title} {\enquote {\bibinfo {title}
  {Using entanglement against noise in quantum metrology},}\ }\href@noop {}
  {\bibfield  {journal} {\bibinfo  {journal} {Physical Review Letters}\
  }\textbf {\bibinfo {volume} {113}},\ \bibinfo {pages} {250801} (\bibinfo
  {year} {2014})}\BibitemShut {NoStop}%
\bibitem [{\citenamefont {Wan}\ and\ \citenamefont
  {Lasenby}(2022)}]{wan2022bounds}%
  \BibitemOpen
  \bibfield  {author} {\bibinfo {author} {\bibfnamefont {Kianna}\ \bibnamefont
  {Wan}}\ and\ \bibinfo {author} {\bibfnamefont {Robert}\ \bibnamefont
  {Lasenby}},\ }\bibfield  {title} {\enquote {\bibinfo {title} {Bounds on
  adaptive quantum metrology under markovian noise},}\ }\href@noop {}
  {\bibfield  {journal} {\bibinfo  {journal} {Physical Review Research}\
  }\textbf {\bibinfo {volume} {4}},\ \bibinfo {pages} {033092} (\bibinfo {year}
  {2022})}\BibitemShut {NoStop}%
\bibitem [{\citenamefont {Sekatski}\ \emph {et~al.}(2017)\citenamefont
  {Sekatski}, \citenamefont {Skotiniotis}, \citenamefont {Ko{\l}ody{\'n}ski},\
  and\ \citenamefont {D{\"u}r}}]{sekatski2017quantum}%
  \BibitemOpen
  \bibfield  {author} {\bibinfo {author} {\bibfnamefont {Pavel}\ \bibnamefont
  {Sekatski}}, \bibinfo {author} {\bibfnamefont {Michalis}\ \bibnamefont
  {Skotiniotis}}, \bibinfo {author} {\bibfnamefont {Janek}\ \bibnamefont
  {Ko{\l}ody{\'n}ski}}, \ and\ \bibinfo {author} {\bibfnamefont {Wolfgang}\
  \bibnamefont {D{\"u}r}},\ }\bibfield  {title} {\enquote {\bibinfo {title}
  {Quantum metrology with full and fast quantum control},}\ }\href@noop {}
  {\bibfield  {journal} {\bibinfo  {journal} {Quantum}\ }\textbf {\bibinfo
  {volume} {1}},\ \bibinfo {pages} {27} (\bibinfo {year} {2017})}\BibitemShut
  {NoStop}%
\bibitem [{\citenamefont {Shettell}\ \emph {et~al.}(2021)\citenamefont
  {Shettell}, \citenamefont {Munro}, \citenamefont {Markham},\ and\
  \citenamefont {Nemoto}}]{shettell2021practical}%
  \BibitemOpen
  \bibfield  {author} {\bibinfo {author} {\bibfnamefont {Nathan}\ \bibnamefont
  {Shettell}}, \bibinfo {author} {\bibfnamefont {William~J}\ \bibnamefont
  {Munro}}, \bibinfo {author} {\bibfnamefont {Damian}\ \bibnamefont {Markham}},
  \ and\ \bibinfo {author} {\bibfnamefont {Kae}\ \bibnamefont {Nemoto}},\
  }\bibfield  {title} {\enquote {\bibinfo {title} {Practical limits of error
  correction for quantum metrology},}\ }\href@noop {} {\bibfield  {journal}
  {\bibinfo  {journal} {New Journal of Physics}\ }\textbf {\bibinfo {volume}
  {23}},\ \bibinfo {pages} {043038} (\bibinfo {year} {2021})}\BibitemShut
  {NoStop}%
\bibitem [{\citenamefont {Rojkov}\ \emph {et~al.}(2022)\citenamefont {Rojkov},
  \citenamefont {Layden}, \citenamefont {Cappellaro}, \citenamefont {Home},\
  and\ \citenamefont {Reiter}}]{rojkov2022bias}%
  \BibitemOpen
  \bibfield  {author} {\bibinfo {author} {\bibfnamefont {Ivan}\ \bibnamefont
  {Rojkov}}, \bibinfo {author} {\bibfnamefont {David}\ \bibnamefont {Layden}},
  \bibinfo {author} {\bibfnamefont {Paola}\ \bibnamefont {Cappellaro}},
  \bibinfo {author} {\bibfnamefont {Jonathan}\ \bibnamefont {Home}}, \ and\
  \bibinfo {author} {\bibfnamefont {Florentin}\ \bibnamefont {Reiter}},\
  }\bibfield  {title} {\enquote {\bibinfo {title} {Bias in error-corrected
  quantum sensing},}\ }\href@noop {} {\bibfield  {journal} {\bibinfo  {journal}
  {Physical Review Letters}\ }\textbf {\bibinfo {volume} {128}},\ \bibinfo
  {pages} {140503} (\bibinfo {year} {2022})}\BibitemShut {NoStop}%
\bibitem [{\citenamefont {Layden}\ \emph {et~al.}(2019)\citenamefont {Layden},
  \citenamefont {Zhou}, \citenamefont {Cappellaro},\ and\ \citenamefont
  {Jiang}}]{layden2019ancilla}%
  \BibitemOpen
  \bibfield  {author} {\bibinfo {author} {\bibfnamefont {David}\ \bibnamefont
  {Layden}}, \bibinfo {author} {\bibfnamefont {Sisi}\ \bibnamefont {Zhou}},
  \bibinfo {author} {\bibfnamefont {Paola}\ \bibnamefont {Cappellaro}}, \ and\
  \bibinfo {author} {\bibfnamefont {Liang}\ \bibnamefont {Jiang}},\ }\bibfield
  {title} {\enquote {\bibinfo {title} {Ancilla-free quantum error correction
  codes for quantum metrology},}\ }\href@noop {} {\bibfield  {journal}
  {\bibinfo  {journal} {Physical Review Letters}\ }\textbf {\bibinfo {volume}
  {122}},\ \bibinfo {pages} {040502} (\bibinfo {year} {2019})}\BibitemShut
  {NoStop}%
\bibitem [{\citenamefont {Peng}\ and\ \citenamefont
  {Fan}(2020)}]{peng2020achieving}%
  \BibitemOpen
  \bibfield  {author} {\bibinfo {author} {\bibfnamefont {Yi}~\bibnamefont
  {Peng}}\ and\ \bibinfo {author} {\bibfnamefont {Heng}\ \bibnamefont {Fan}},\
  }\bibfield  {title} {\enquote {\bibinfo {title} {Achieving the heisenberg
  limit under general markovian noise using quantum error correction without
  ancilla},}\ }\href@noop {} {\bibfield  {journal} {\bibinfo  {journal}
  {Quantum Information Processing}\ }\textbf {\bibinfo {volume} {19}},\
  \bibinfo {pages} {266} (\bibinfo {year} {2020})}\BibitemShut {NoStop}%
\bibitem [{\citenamefont {Ulam-Orgikh}\ and\ \citenamefont
  {Kitagawa}(2001)}]{ulam2001spin}%
  \BibitemOpen
  \bibfield  {author} {\bibinfo {author} {\bibfnamefont {Duger}\ \bibnamefont
  {Ulam-Orgikh}}\ and\ \bibinfo {author} {\bibfnamefont {Masahiro}\
  \bibnamefont {Kitagawa}},\ }\bibfield  {title} {\enquote {\bibinfo {title}
  {Spin squeezing and decoherence limit in ramsey spectroscopy},}\ }\href@noop
  {} {\bibfield  {journal} {\bibinfo  {journal} {Physical Review A}\ }\textbf
  {\bibinfo {volume} {64}},\ \bibinfo {pages} {052106} (\bibinfo {year}
  {2001})}\BibitemShut {NoStop}%
\bibitem [{\citenamefont {Holevo}(2011)}]{holevo2011probabilistic}%
  \BibitemOpen
  \bibfield  {author} {\bibinfo {author} {\bibfnamefont {Alexander~S}\
  \bibnamefont {Holevo}},\ }\href@noop {} {\emph {\bibinfo {title}
  {Probabilistic and statistical aspects of quantum theory}}},\ Vol.~\bibinfo
  {volume} {1}\ (\bibinfo  {publisher} {Springer Science \& Business Media},\
  \bibinfo {year} {2011})\BibitemShut {NoStop}%
\bibitem [{\citenamefont {Helstrom}(1976)}]{helstrom1976quantum}%
  \BibitemOpen
  \bibfield  {author} {\bibinfo {author} {\bibfnamefont {Carl~Wilhelm}\
  \bibnamefont {Helstrom}},\ }\href@noop {} {\emph {\bibinfo {title} {Quantum
  detection and estimation theory}}}\ (\bibinfo  {publisher} {Academic press},\
  \bibinfo {year} {1976})\BibitemShut {NoStop}%
\bibitem [{\citenamefont {Braunstein}\ and\ \citenamefont
  {Caves}(1994)}]{braunstein1994statistical}%
  \BibitemOpen
  \bibfield  {author} {\bibinfo {author} {\bibfnamefont {Samuel~L}\
  \bibnamefont {Braunstein}}\ and\ \bibinfo {author} {\bibfnamefont
  {Carlton~M}\ \bibnamefont {Caves}},\ }\bibfield  {title} {\enquote {\bibinfo
  {title} {Statistical distance and the geometry of quantum states},}\
  }\href@noop {} {\bibfield  {journal} {\bibinfo  {journal} {Physical Review
  Letters}\ }\textbf {\bibinfo {volume} {72}},\ \bibinfo {pages} {3439}
  (\bibinfo {year} {1994})}\BibitemShut {NoStop}%
\bibitem [{\citenamefont {Lehmann}\ and\ \citenamefont
  {Casella}(2006)}]{lehmann2006theory}%
  \BibitemOpen
  \bibfield  {author} {\bibinfo {author} {\bibfnamefont {Erich~L}\ \bibnamefont
  {Lehmann}}\ and\ \bibinfo {author} {\bibfnamefont {George}\ \bibnamefont
  {Casella}},\ }\href@noop {} {\emph {\bibinfo {title} {Theory of point
  estimation}}}\ (\bibinfo  {publisher} {Springer Science \& Business Media},\
  \bibinfo {year} {2006})\BibitemShut {NoStop}%
\bibitem [{\citenamefont {Kobayashi}\ \emph {et~al.}(2011)\citenamefont
  {Kobayashi}, \citenamefont {Mark},\ and\ \citenamefont
  {Turin}}]{kobayashi2011probability}%
  \BibitemOpen
  \bibfield  {author} {\bibinfo {author} {\bibfnamefont {Hisashi}\ \bibnamefont
  {Kobayashi}}, \bibinfo {author} {\bibfnamefont {Brian~L}\ \bibnamefont
  {Mark}}, \ and\ \bibinfo {author} {\bibfnamefont {William}\ \bibnamefont
  {Turin}},\ }\href@noop {} {\emph {\bibinfo {title} {Probability, random
  processes, and statistical analysis: applications to communications, signal
  processing, queueing theory and mathematical finance}}}\ (\bibinfo
  {publisher} {Cambridge University Press},\ \bibinfo {year}
  {2011})\BibitemShut {NoStop}%
\bibitem [{1fo()}]{1footnote-heisenberg}%
  \BibitemOpen
  \href@noop {} {}\bibinfo {note} {The case $F(\rho_\omega) = \Theta(t^2)$
  where $t$ is the probing time is sometimes also referred to as the HL (see
  \appref{app:heisenberg}), but we will focus on only the scaling with respect
  to $N$ (the number of probes) and fix the value of $t$ in this
  work.}\BibitemShut {Stop}%
\bibitem [{2fo()}]{2footnote-GHZ}%
  \BibitemOpen
  \href@noop {} {}\bibinfo {note} {We abuse the name ``GHZ state'' a bit, by
  calling any state of the form $(\ket{0}^{\otimes N} + \ket{1}^{\otimes
  N})/\sqrt{2}$ a GHZ state, for all $N \geq 1$.}\BibitemShut {Stop}%
\bibitem [{\citenamefont {Breuer}\ and\ \citenamefont
  {Petruccione}(2002)}]{breuer2002theory}%
  \BibitemOpen
  \bibfield  {author} {\bibinfo {author} {\bibfnamefont {Heinz-Peter}\
  \bibnamefont {Breuer}}\ and\ \bibinfo {author} {\bibfnamefont {Francesco}\
  \bibnamefont {Petruccione}},\ }\href@noop {} {\emph {\bibinfo {title} {The
  theory of open quantum systems}}}\ (\bibinfo  {publisher} {Oxford University
  Press on Demand},\ \bibinfo {year} {2002})\BibitemShut {NoStop}%
\bibitem [{Note1()}]{Note1}%
  \BibitemOpen
  \bibinfo {note} {By adaptive quantum strategies, we mean arbitrary quantum
  operations and measurements acting on the system sequentially with
  arbitrarily high frequencies and adaptive feedbacks (see details in \protect
  \hyperref [app:heisenberg]{Appx.~\ref {app:heisenberg}}).}\BibitemShut
  {Stop}%
\bibitem [{\citenamefont {Laflamme}\ \emph {et~al.}(1996)\citenamefont
  {Laflamme}, \citenamefont {Miquel}, \citenamefont {Paz},\ and\ \citenamefont
  {Zurek}}]{laflamme1996perfect}%
  \BibitemOpen
  \bibfield  {author} {\bibinfo {author} {\bibfnamefont {Raymond}\ \bibnamefont
  {Laflamme}}, \bibinfo {author} {\bibfnamefont {Cesar}\ \bibnamefont
  {Miquel}}, \bibinfo {author} {\bibfnamefont {Juan~Pablo}\ \bibnamefont
  {Paz}}, \ and\ \bibinfo {author} {\bibfnamefont {Wojciech~Hubert}\
  \bibnamefont {Zurek}},\ }\bibfield  {title} {\enquote {\bibinfo {title}
  {Perfect quantum error correcting code},}\ }\href@noop {} {\bibfield
  {journal} {\bibinfo  {journal} {Physical Review Letters}\ }\textbf {\bibinfo
  {volume} {77}},\ \bibinfo {pages} {198} (\bibinfo {year} {1996})}\BibitemShut
  {NoStop}%
\bibitem [{\citenamefont {Chau}(1997)}]{chau1997five}%
  \BibitemOpen
  \bibfield  {author} {\bibinfo {author} {\bibfnamefont {HF}~\bibnamefont
  {Chau}},\ }\bibfield  {title} {\enquote {\bibinfo {title} {Five quantum
  register error correction code for higher spin systems},}\ }\href@noop {}
  {\bibfield  {journal} {\bibinfo  {journal} {Physical Review A}\ }\textbf
  {\bibinfo {volume} {56}},\ \bibinfo {pages} {R1} (\bibinfo {year}
  {1997})}\BibitemShut {NoStop}%
\bibitem [{\citenamefont {Yuan}(2016)}]{yuan2016sequential}%
  \BibitemOpen
  \bibfield  {author} {\bibinfo {author} {\bibfnamefont {Haidong}\ \bibnamefont
  {Yuan}},\ }\bibfield  {title} {\enquote {\bibinfo {title} {Sequential
  feedback scheme outperforms the parallel scheme for hamiltonian parameter
  estimation},}\ }\href@noop {} {\bibfield  {journal} {\bibinfo  {journal}
  {Physical Review Letters}\ }\textbf {\bibinfo {volume} {117}},\ \bibinfo
  {pages} {160801} (\bibinfo {year} {2016})}\BibitemShut {NoStop}%
\bibitem [{\citenamefont {Hou}\ \emph {et~al.}(2021)\citenamefont {Hou},
  \citenamefont {Jin}, \citenamefont {Chen}, \citenamefont {Tang},
  \citenamefont {Huang}, \citenamefont {Yuan}, \citenamefont {Xiang},
  \citenamefont {Li},\ and\ \citenamefont {Guo}}]{hou2021super}%
  \BibitemOpen
  \bibfield  {author} {\bibinfo {author} {\bibfnamefont {Zhibo}\ \bibnamefont
  {Hou}}, \bibinfo {author} {\bibfnamefont {Yan}\ \bibnamefont {Jin}}, \bibinfo
  {author} {\bibfnamefont {Hongzhen}\ \bibnamefont {Chen}}, \bibinfo {author}
  {\bibfnamefont {Jun-Feng}\ \bibnamefont {Tang}}, \bibinfo {author}
  {\bibfnamefont {Chang-Jiang}\ \bibnamefont {Huang}}, \bibinfo {author}
  {\bibfnamefont {Haidong}\ \bibnamefont {Yuan}}, \bibinfo {author}
  {\bibfnamefont {Guo-Yong}\ \bibnamefont {Xiang}}, \bibinfo {author}
  {\bibfnamefont {Chuan-Feng}\ \bibnamefont {Li}}, \ and\ \bibinfo {author}
  {\bibfnamefont {Guang-Can}\ \bibnamefont {Guo}},\ }\bibfield  {title}
  {\enquote {\bibinfo {title} {``super-heisenberg'' and heisenberg scalings
  achieved simultaneously in the estimation of a rotating field},}\ }\href@noop
  {} {\bibfield  {journal} {\bibinfo  {journal} {Physical Review Letters}\
  }\textbf {\bibinfo {volume} {126}},\ \bibinfo {pages} {070503} (\bibinfo
  {year} {2021})}\BibitemShut {NoStop}%
\bibitem [{\citenamefont {Chen}\ \emph {et~al.}(2022)\citenamefont {Chen},
  \citenamefont {Zhou}, \citenamefont {Seif},\ and\ \citenamefont
  {Jiang}}]{chen2022quantum}%
  \BibitemOpen
  \bibfield  {author} {\bibinfo {author} {\bibfnamefont {Senrui}\ \bibnamefont
  {Chen}}, \bibinfo {author} {\bibfnamefont {Sisi}\ \bibnamefont {Zhou}},
  \bibinfo {author} {\bibfnamefont {Alireza}\ \bibnamefont {Seif}}, \ and\
  \bibinfo {author} {\bibfnamefont {Liang}\ \bibnamefont {Jiang}},\ }\bibfield
  {title} {\enquote {\bibinfo {title} {Quantum advantages for pauli channel
  estimation},}\ }\href@noop {} {\bibfield  {journal} {\bibinfo  {journal}
  {Physical Review A}\ }\textbf {\bibinfo {volume} {105}},\ \bibinfo {pages}
  {032435} (\bibinfo {year} {2022})}\BibitemShut {NoStop}%
\bibitem [{\citenamefont {G{\'o}recki}\ \emph {et~al.}(2020)\citenamefont
  {G{\'o}recki}, \citenamefont {Zhou}, \citenamefont {Jiang},\ and\
  \citenamefont {Demkowicz-Dobrza{\'n}ski}}]{gorecki2020optimal}%
  \BibitemOpen
  \bibfield  {author} {\bibinfo {author} {\bibfnamefont {Wojciech}\
  \bibnamefont {G{\'o}recki}}, \bibinfo {author} {\bibfnamefont {Sisi}\
  \bibnamefont {Zhou}}, \bibinfo {author} {\bibfnamefont {Liang}\ \bibnamefont
  {Jiang}}, \ and\ \bibinfo {author} {\bibfnamefont {Rafa{\l}}\ \bibnamefont
  {Demkowicz-Dobrza{\'n}ski}},\ }\bibfield  {title} {\enquote {\bibinfo {title}
  {Optimal probes and error-correction schemes in multi-parameter quantum
  metrology},}\ }\href@noop {} {\bibfield  {journal} {\bibinfo  {journal}
  {Quantum}\ }\textbf {\bibinfo {volume} {4}},\ \bibinfo {pages} {288}
  (\bibinfo {year} {2020})}\BibitemShut {NoStop}%
\bibitem [{\citenamefont {Brooks}(1941)}]{brooks1941colouring}%
  \BibitemOpen
  \bibfield  {author} {\bibinfo {author} {\bibfnamefont {Rowland~Leonard}\
  \bibnamefont {Brooks}},\ }\bibfield  {title} {\enquote {\bibinfo {title} {On
  colouring the nodes of a network},}\ }in\ \href@noop {} {\emph {\bibinfo
  {booktitle} {Mathematical Proceedings of the Cambridge Philosophical
  Society}}},\ Vol.\ \bibinfo {volume} {37,~No.~2}\ (\bibinfo {organization}
  {Cambridge University Press},\ \bibinfo {year} {1941})\ pp.\ \bibinfo {pages}
  {194--197}\BibitemShut {NoStop}%
\bibitem [{\citenamefont {Baumgartner}(2011)}]{baumgartner2011inequality}%
  \BibitemOpen
  \bibfield  {author} {\bibinfo {author} {\bibfnamefont {Bernhard}\
  \bibnamefont {Baumgartner}},\ }\bibfield  {title} {\enquote {\bibinfo {title}
  {An inequality for the trace of matrix products, using absolute values},}\
  }\href@noop {} {\bibfield  {journal} {\bibinfo  {journal} {arXiv:1106.6189}\
  } (\bibinfo {year} {2011})}\BibitemShut {NoStop}%
\end{thebibliography}%

%\end{NoHyper}
 
\end{document}